\newtheorem{definition}{Definition}
\renewcommand{\Re}{\mathbb{R}}
\newcommand{\newAlg}{\textsf{MFD}}
\newcommand{\newStreamAlg}{\textsf{StreamMFD}}
\newcommand{\eps}{\varepsilon}
\newcommand{\expect}{\mathbb{E}}
\newcommand{\fairDiv}{FairDiv}
\newcommand{\streamFairDiv}{SFairDiv}
\newcommand{\queryFairDiv}{QFairDiv}
\newcommand{\stopping}{g}
\def\mparagraph#1{\par\noindent\textbf{#1.}\quad}
\def\eparagraph#1{\par\noindent\textbf{\textit{#1.}}\quad}
\newcommand{\Alg}{$\mathrm{Alg}$}
\definecolor{reva}{rgb}{0.7,0,3}
\definecolor{revb}{rgb}{1,0,0}
\definecolor{revc}{rgb}{1,0.5,0}
\definecolor{revmeta}{rgb}{0,0,1}
\definecolor{default}{rgb}{0,0,0}
\newcommand{\ansmeta}[1]{{{\color{default}{#1}}}}
\newcommand{\ansb}[1]{{{\color{default}{#1}}}}
\newcommand{\ansc}[1]{{{\color{default}{#1}}}}
\newtheorem{theorem}{Theorem}[section]
\newtheorem{lemma}[theorem]{Lemma}
\newtheorem{corollary}[theorem]{Corollary}
\newtheorem*{remark*}{Remark}
\newcounter{lpcounter}
\newcommand{\lpproblem}{%
  \refstepcounter{lpcounter}%
  \text{($\mathsf{LP}$\arabic{lpcounter})}%
  \label{lp:\thelpcounter}
}
\newcounter{fpcounter}
\newcommand{\fpproblem}{%
  \refstepcounter{fpcounter}%
  \text{($\mathsf{FP}$\arabic{fpcounter})}%
  \label{fp:\thelpcounter}
}
\title{Faster Algorithms for Fair Max-Min Diversification in $\Re^d$}
\author{Yash Kurkure}
\affiliation{%
  \institution{Department of Computer Science, University of Illinois at Chicago}
  \city{Chicago}
  \country{USA}}
\email{ykurku2@uic.edu}
\author{Miles Shamo}
\affiliation{%
  \institution{Department of Computer Science, University of Illinois at Chicago}
  \city{Chicago}
  \country{USA}}
\email{milesshamo@gmail.com}
\author{Joseph Wiseman}
\affiliation{%
  \institution{Department of Computer Science, University of Illinois at Chicago}
  \city{Chicago}
  \country{USA}}
\email{jwisem6@uic.edu}
\author{Sainyam Galhotra}
\affiliation{%
  \institution{Department of Computer Science, Cornell University}
  \city{Ithaca}
  \country{USA}}
\email{sg@cs.cornell.edu}
\author{Stavros Sintos}
\affiliation{%
  \institution{Department of Computer Science, University of Illinois at Chicago}
  \city{Chicago}
  \country{USA}}
\email{stavros@uic.edu}
\begin{abstract}
The task of extracting a diverse subset from a dataset, often referred to as maximum diversification, plays a pivotal role in various real-world applications that have far-reaching consequences. In this work, we delve into the realm of fairness-aware data subset selection, specifically focusing on the problem of selecting a diverse set of size $k$ from a large collection of $n$ data points (\fairDiv).


The \fairDiv\ problem is well-studied in the data management and theory community. In this work, we develop the first constant approximation algorithm for \fairDiv\ that runs in near-linear time using only linear space. In contrast, all previously known constant approximation algorithms run in super-linear time (with respect to $n$ or $k$) and use super-linear space. Our approach achieves this efficiency by employing a novel combination of the Multiplicative Weight Update method and advanced geometric data structures to implicitly and approximately solve a linear program. Furthermore, we improve the efficiency of our techniques by constructing a coreset. Using our coreset, we also propose the first efficient streaming algorithm for the \fairDiv\ problem whose efficiency does not depend on the distribution of data points. Empirical evaluation on million-sized datasets demonstrates that our algorithm achieves the best diversity within a minute. All prior techniques are either highly inefficient or do not generate a good solution.
\end{abstract}
\keywords{fairness, diversity, max-min diversification, Multiplicative Weight Method, geometric data structures, BBD tree}
\begin{document}

\maketitle

\section{Introduction}
\label{sec:intro}
In numerous real-world scenarios, including data summarization, web search, recommendation systems, and feature selection, it is imperative to extract a diverse subset from a dataset (often referred to as maximum diversification). The decisions made in these domains have significant consequences. Therefore, it is crucial to guarantee that the outcomes are not only diverse but also unbiased. For instance, while a primary objective of data summarization is to select a representative sample that encapsulates analogous data points, conventional summarization techniques have been identified to exhibit biases against minority groups, leading to detrimental repercussions. In this work, we study the problem of fairness-aware maximum diversification, where the goal is to choose a diverse set of representative data points satisfying a group fairness constraint.


We consider the problem of ensuring group fairness in max-min diversification. We are given a set of items\footnote{The terms item and points are used interchangeably.}, where each item belongs to one group determined by a sensitive attribute (we refer to it as color). Given a parameter $k_i$ for each color $i$, the goal is to return a subset of items $S$ such that, $S$ contains at least $k_j$ points from each group $j$, and the minimum pairwise distance in $S$ is maximized.
We study the problem in the geometric setting where input items are points in $\Re^d$, for a constant dimension $d$. This setting encompasses the majority of realistic scenarios because many datasets are represented as points in the Euclidean space. Even if the input items are not points in $\Re^d$, it is often the case that 
the items can be embedded (with low error) in a geometric space with low intrinsic dimension~\cite{verbeek2014metric, yang2012defining, pope2021intrinsic}.
Although we focus on the Euclidean space, some of our algorithms can be extended to metric spaces with bounded doubling dimension~\cite{har2005fast, beygelzimer2006cover, feldmann2020parameterized, ng2002predicting}.

\begin{figure}
     \centering
     
     \begin{subfigure}[b]{0.45\textwidth}
         \centering
         \includegraphics[width=0.5\textwidth]{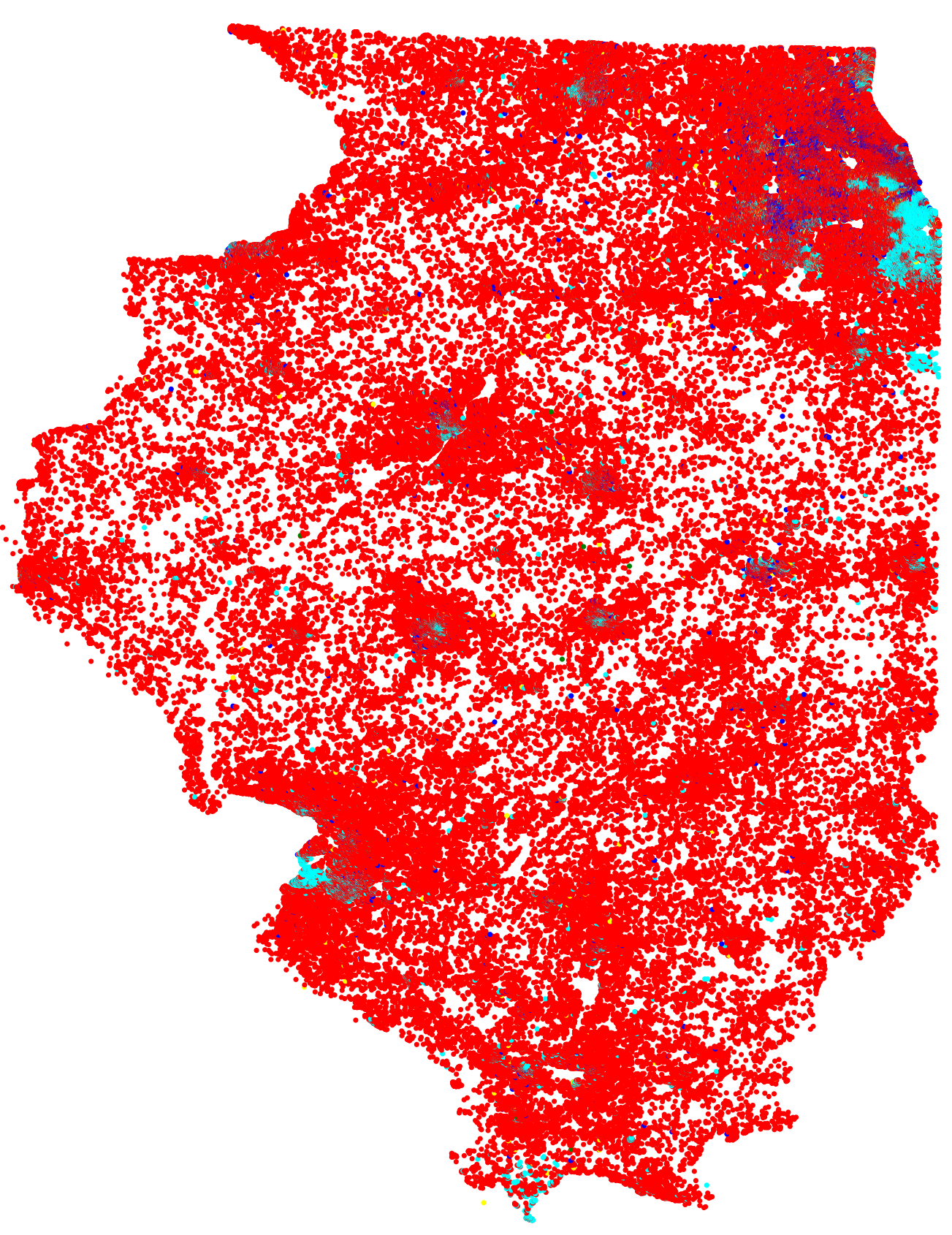}
         \caption{Individuals represented on the map}
         \label{}
     \end{subfigure}
     \hfill
     \begin{subfigure}[b]{0.45\textwidth}
         \centering
         \raisebox{\height}{\includegraphics[width=0.8\textwidth]{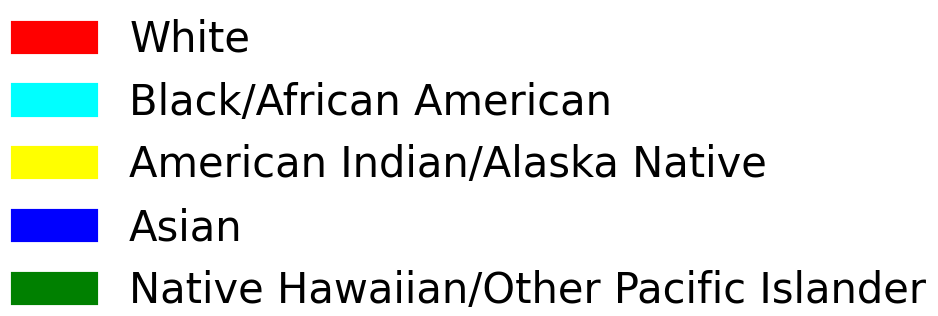}}
         \label{fig:five over x}
     \end{subfigure}\\
     \begin{subfigure}[b]{0.45\textwidth}
         \centering
         \includegraphics[width=0.5\textwidth]{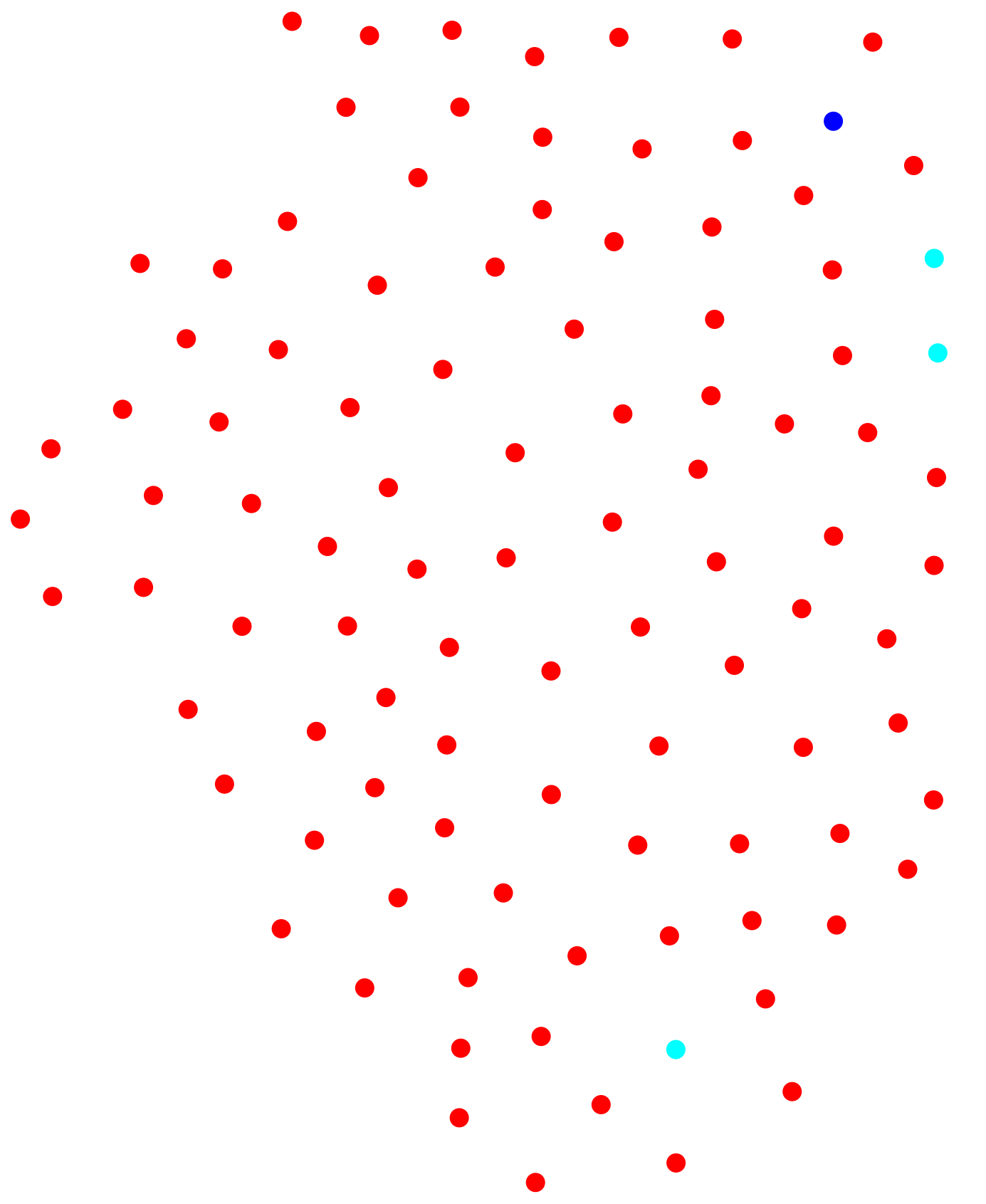}
         \caption{Max-min diversification without fairness constraints}
         \label{}
     \end{subfigure}
     \hfill
     \begin{subfigure}[b]{0.45\textwidth}
         \centering
         \includegraphics[width=0.5\textwidth]{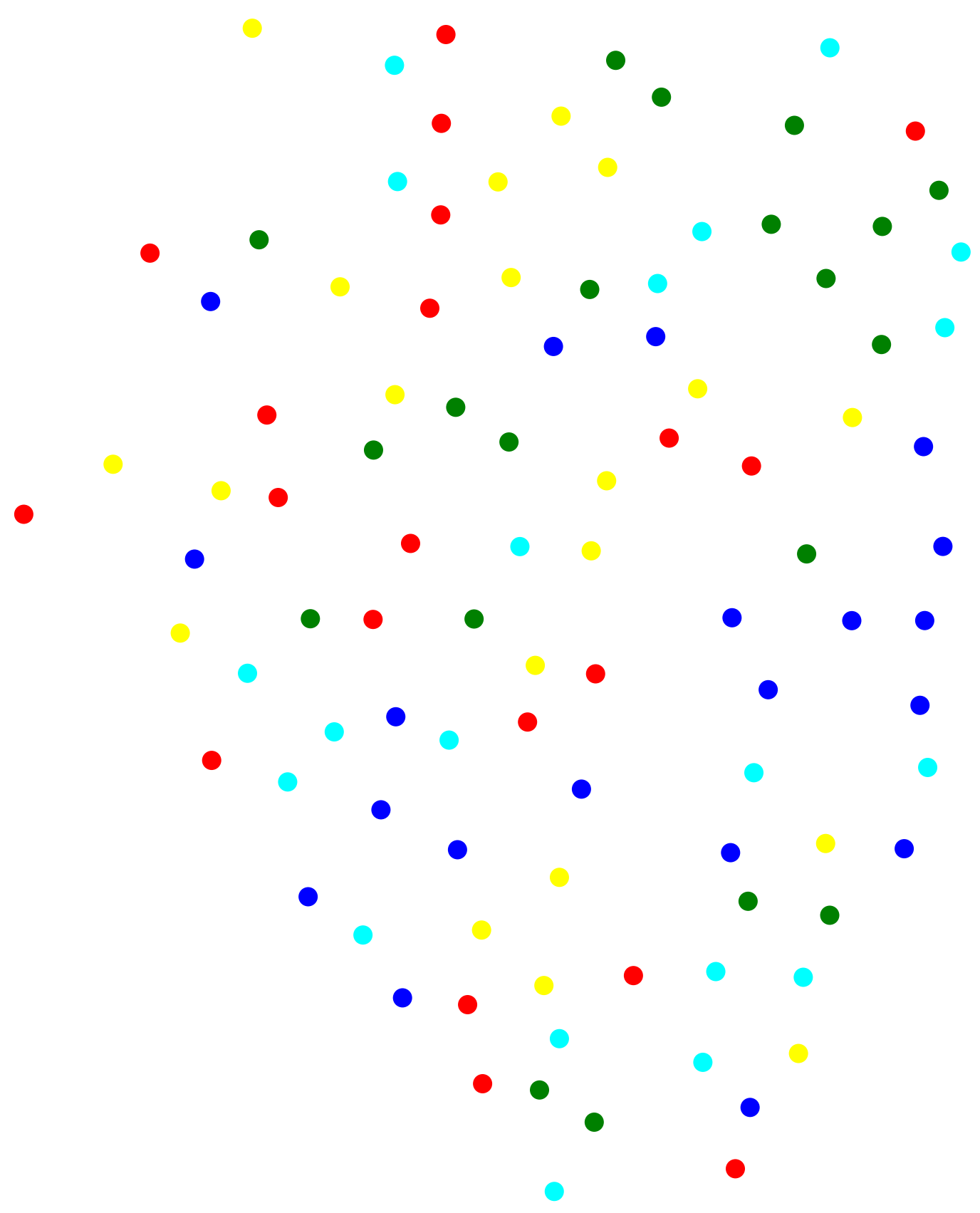}
         \caption{Max-min diversification with fairness constraints}
         \label{fig:five over x}
     \end{subfigure}
     
     \vspace{-3mm}
        \caption{Example scenario where each point denotes an individual in the state of Illinois. (b) shows the biased output of max-min diversification technique on this dataset. (c) denotes the fair output of our algorithm. }
        \label{fig:example}
        \vspace{-3mm}
\end{figure}

We motivate the problem with the following example.
\begin{example}
Consider a state-court of Illinois, which wants to form a jury consisting of individuals from the state. One of the primary goals of this jury selection task is to identify individuals from neighborhoods that are far apart, i.e. maximize the minimum distance between selected individuals and have representation of people from diverse backgrounds and cultures. This problem has often been studied as a max-min diversification problem~\cite{fukurai1991cross}.
Figure~\ref{fig:example} (a) shows the different individuals on the map, where each point is colored based on their sensitive attribute. Running a traditional Max-min diversification algorithm on this dataset returned highly biased results (most of the returned points belong to white, as shown in Figure~\ref{fig:example} (b)). Using the fairness constraint, the output contains points from different sensitive groups. The issue of fair jury selection has been a key focus of courts across the world~\cite{url1,url2}, where many studies have recorded biases in jury and its consequences in their decisions~\cite{gau2016jury}.
\end{example}

This example motivates the importance of studying fairness aware variant of Max-Min diversification. We now define the problem formally and then discuss the key contributions.
\newcommand{\diversity}{\mathrm{div}}
\mparagraph{Problem Definition}
We are given a set $P$ of $n$ points in $\Re^d$ and a set of $m$ colors $C=\{c_1,\ldots, c_m\}$. Each point $p\in P$ is associated with a color $c(p)\in C$.
For any subset $S\subseteq P$, let $S(c_j)=\{p\in S\mid c(p)=c_j\}$ be the set of points from set $S$ with color $c_j$. We have $S(c_i)\cap S(c_j)=\emptyset$ for every pair $i<j$ and $\bigcup_{j}S(c_j)=S$.
Let $\diversity(S)=\min_{p,q\in S}||p-q||_2$ be the diversity function representing the minimum pairwise Euclidean distance among points in $S$. For simplicity, throughout the paper we write $||p-q||$ to denote the Euclidean distance of points $p, q\in \Re^d$.
\begin{definition}[\textbf{\fairDiv}]
 Given a set $P$ of $n$ points in $\Re^d$, where $d$ is a constant, a set $C$ of $m$ colors, and integers $k_1, k_2, \ldots, k_m$ such that $\sum_{j\in[1,m]}k_j=k$, the goal is to find a set $S^*\subseteq P$ such that $\diversity(S^*)$ is maximized, and for each $c_j\in C$, it holds that $|S^*(c_j)|\geq k_j$.
\end{definition}

Let $\gamma^*=\diversity(S^*)$ be the optimum diversity for the \fairDiv\ problem. For a parameter $\beta>1$, we say that an algorithm is a $\frac{1}{\beta}$-approximation for the \fairDiv\ problem if it returns a set $S$ with $\diversity(S)\geq \frac{1}{\beta}\gamma^*$, and $|S(c_j)|\geq k_j$, for every $c_j\in C$. The approximation ratio is $\frac{1}{\beta}$. Finally, we say that an algorithm is a $\frac{1}{\beta}$-approximation  with $(1-\eps)$-fairness if it returns a set $S$ with $\diversity(S)\geq \frac{1}{\beta}\gamma^*$, and $|S(c_j)|\geq \frac{k_j}{1-\eps}$, for every $c_j\in C$. 

\ansc{
Furthermore, we define the notion of \emph{coreset}, which is useful in the next sections. A set $G\subseteq P$ is a $(1+\eps)$-coreset for the \fairDiv\ problem if there exists a subset $\hat{S}\subseteq G$ such that $\hat{S}$ satisfies the fairness constraints and $\diversity(\hat{S})\geq \gamma^*/(1+\eps)$.}


\ansb{
In many scenarios, the input consists of a stream of data which evolves over time, for example, tweets generated in real time or reviews of restaurants on Google or beer reviews on social media~\cite{beerdata}. In the context of Twitter, the goal is to choose a representative subset of tweets in real time originating from various geographic locations (diversity). This selection should ensure that every topic (politics, sports, etc.) is sufficiently represented by related tweets (fairness). In this setting, the above discussed methods would need to be run from scratch for each new tweet. Instead, 
we study the extension of \fairDiv\ problem when we receive the data in a \emph{streaming setting}. This problem has been studied in~\cite{wang2022streaming} with various applications in modern database systems.
In another example, an application might handle massive amounts of data that cannot be stored in memory to run an offline algorithm for the \fairDiv\ problem. Instead, a pass is made over the data storing and maintaining only a small subset of elements in memory (synopsis) that is used to get an approximate solution for the \fairDiv\ problem in the full dataset. 
During this pass, we maintain the synopsis efficiently under new insertions and, when needed, we should return a fair and diverse set representing all the items we have encountered in the stream. We have considered the beer reviews dataset to evaluate our techniques in a streaming setting (Section~\ref{sec:experiments}).
}


\begin{definition}[\textbf{\streamFairDiv}]
We are given a set of colors $C$ and integers $k_1, k_2, \ldots, k_m$ such that $\sum_{j\in[1,m]}k_j=k$.
    At a time instance $t$, we receive a new point $p_t$ with color $c(p_t)\in C$.
    Let $P_t$ be the set of points we have received until time $t$. Over any time instance $t$, the goal is to maintain a ''small`` subset of points $\hat{P}_t\subseteq P_t$, such that, a solution for the \fairDiv\ problem in $P_t$ can be constructed efficiently using only the points stored in $\hat{P}_t$.
\end{definition}

A data analyst might want to run multiple queries exploring regions of data with fair and diverse representative sets. For example, someone might want to explore neighborhoods in Illinois that are both fair and diverse.
We study the \emph{range-query setting}, where the goal is to construct a data structure, such that given a query region, the goal is to return fair and diverse points in the query region in sub-linear time.
More formally, we define the next variation of the \fairDiv\ problem.
\begin{definition}
[\textbf{\queryFairDiv}]
    Given a set $P$ of $n$ colors in $\Re^d$, and a set $C$ of $m$ colors, the goal is to construct a data structure, such that given a query rectangle $R$, 
    and integers $k_1, k_2, \ldots, k_m$ such that $\sum_{j\in[1,m]}k_j=k$, return a set $S^*\subseteq P\cap R$ such that $\diversity(S^*)$ is maximized, and for each $c_j\in C(S^*)$, it holds that $|S^*(c_j)|\geq k_j$. 
\end{definition}


\begin{table}[t]
\centering
 \begin{tabular}{c|c} 
 \hline
 Notation & Meaning\\\hline
 \hline
 $P$ & Point set\\ \hline
 $n$ & $|P|$\\ \hline
 $C$ & Set of colors\\\hline
 $m$ & $|C|$\\\hline
 $c(p)$ & Color of point $p\in P$\\\hline
 $S(c_j)$ & Set of points with color $c_j\in C$ in $S\subseteq P$\\\hline
 $k$ & Total output size (lower bound)\\\hline
 $k_j$ & Output size having color $c_j$ (lower bound)\\\hline
 $\eps$ & approximation error \\ \hline
 $S^*$ & Optimum solution for the \fairDiv\ problem\\\hline
 $\diversity(S)$ & Diversity of set $S$ (minimum pairwise distance)\\\hline
 $\gamma^*$ & $\diversity(S^*)$\\\hline
 $\mathcal{T}$& BBD-tree\\\hline
 $A$ & Matrix representation in MWU method\\\hline
 $h$ & probability vector in MWU method\\\hline
\end{tabular}
\caption{Table of Notations}\label{Table:Notation}
\end{table}

\begin{table*}[t]
\resizebox{\linewidth}{!}{
\ansmeta{
\begin{tabular}{|c|c|c|c|c|c|c|}
\hline
\textbf{Problem}&\textbf{Method}&\multicolumn{2}{|c|}{\textbf{Time}}&\textbf{Space}&\textbf{Approximation}&\textbf{Fairness}\\\hline
\multirow{3}{8em}{\fairDiv / No Coreset}&\cite{addanki2022improved}&\multicolumn{2}{|c|}{$O(n^{\lambda})$}&$\Omega(n^2)$&$\frac{1}{2(1+\eps)}$&Exact, Randomized\\\cline{2-7}
&\cite{addanki2022improved}&\multicolumn{2}{|c|}{$O(nkm^3)$}&$O(n+km^2)$&$\frac{1}{(m+1)(1+\eps)}$&Exact, Deterministic\\\cline{2-7}
&\textbf{NEW}&\multicolumn{2}{|c|}{$O(nk)$}&$O(n)$&$\frac{1}{2(1+\eps)}$&$\frac{1}{1+\eps}$-approx., Randomized\\\hline\hline
\multirow{3}{8em}{\fairDiv / Coreset}&\cite{addanki2022improved}&\multicolumn{2}{|c|}{$O(nk+(mk)^{\lambda})$}&$\Omega(n+(km)^2)$&$\frac{1}{2(1+\eps)}$&Exact, Randomized\\\cline{2-7}
&\cite{addanki2022improved}&\multicolumn{2}{|c|}{$O(nk+k^2m^4)$}&$O(n+km^2)$&$\frac{1}{(m+1)(1+\eps)}$&Exact, Deterministic\\\cline{2-7}
&\textbf{NEW}&\multicolumn{2}{|c|}{$O(n+mk^2)$}&$O(n)$&$\frac{1}{2(1+\eps)}$&$\frac{1}{1+\eps}$-approx., Randomized\\\hline\hline
\multicolumn{2}{|c|}{}&\textbf{Update}&\textbf{Post-processing}&\multicolumn{3}{|c|}{}\\\hline
\multirow{3}{*}{\streamFairDiv}&\cite{wang2022streaming}&$O(\log\Delta)$&$O((mk)^2\log\Delta)$&$O(mk\log \Delta)$&$\frac{1-\eps}{3m+2}$&Exact, Deterministic\\\cline{2-7}
&\cite{addanki2022improved}&$O(\log\Delta)$&$O((mk)^\lambda\log\Delta)$&$O(mk\log \Delta)$&$\frac{1}{2(1+\eps)}$&Exact, Randomized\\\cline{2-7}
&\textbf{NEW}&$O(k)$&$O(mk^2)$&$O(mk)$&$\frac{1}{2(1+\eps)}$&$\frac{1}{1+\eps}$-approx., Randomized\\\hline\hline
\multicolumn{2}{|c|}{}&\textbf{Construction}&\textbf{Query}&\multicolumn{3}{|c|}{}\\\hline
\queryFairDiv&\textbf{NEW}&$O(n)$&$O(mk^2)$&$O(n)$&$\frac{1}{2(1+\eps)}$&$\frac{1}{1+\eps}$-approx., Randomized\\\hline\hline
\end{tabular}
}
}
\caption{\ansmeta{Comparison of our new algorithms with state--of--the--art.
$\Delta=O(2^n)$ is the spread (max. over min. pairwise distance). $\lambda>2$ is the exponent such that an $\mathsf{LP}$ with $N$ constraints can be solved in $O(N^\lambda)$ time. For simplicity, we skip $\log^{O(1)}n$ factors.}}
\label{table:results}
\vspace{-2em}
\end{table*}

\subsection{Contributions}
\label{subsec:contr}
\ansmeta{In Table~\ref{table:results} we show our main results and compare them with the state--of--the--art methods.}

\ansmeta{
In Section~\ref{sec:algs}, we present \newAlg{}, the first near-linear time algorithm for the \fairDiv\ problem with constant approximation ratio. Our algorithm is also the first constant approximation algorithm that uses linear space. All previous constant approximation algorithms for \fairDiv\ have super-linear running time and super-linear space with respect to either $n$ or $k$.
For a constant $\eps\in(0,1)$, we get an $O(nk\log^3 n)$ time algorithm that returns a $\frac{1}{2(1+\eps)}$-approximation for the \fairDiv\ problem. 
 The algorithm uses only $O(n)$ space. Each fairness constraint is satisfied approximately in expectation: If $S$ is the returned set then $\expect[S(c_j)]\geq \frac{k_j}{1-\eps}$, $\forall c_j\in C$.   }
    
If each $k_j\geq 3(1+\eps)\eps^{-2}\log(2m)$ is sufficiently large, in Section~\ref{sub:prob} we satisfy the fairness constraints approximately with probability at least $1-1/\delta$, in time $O(nk\log^3 n+n\log\frac{1}{\delta}\log n)$ and space $O(n)$. The approximation factor becomes $\frac{1}{6(1+\eps)}$.

\ansc{In Section~\ref{sec:coreset} we show that any algorithm for the $k'$-center clustering can be used to derive a $(1+\eps)$-coreset of small size efficiently.
By constructing a coreset and then running our \newAlg{} algorithm we get a $\frac{1}{2(1+\eps)}$-approximation algorithm for the \fairDiv\ problem that runs in $O(n\log k + mk^2\log^3 k)$ time satisfying the fairness constraints approximately in expectation.
}


\ansmeta{
The generality of our coreset construction allows us to extend our algorithms in different settings.
In Section~\ref{sec:extensions}, we design an efficient streaming algorithm for the \streamFairDiv\ problem, called \newStreamAlg{}, maintaining a coreset for the \fairDiv\ problem.
Our new streaming algorithm
stores $O(mk)$ elements, takes $O(k\log k)$ update time per element for streaming processing, and $O(mk^2\log^3 k)$ time for post-processing to return a constant approximation for the \streamFairDiv\ problem satisfying the fairness constraints approximately.
In the range-query setting, 
we design a data structure of $O(n\log^{d-1}n)$ space in $O(n\log^{d-1} n)$ time, such that given a query rectangle $R$, a constant $\eps$, and parameters $k_1,\ldots, k_m$, it returns a set $S\subseteq P\cap R$ in $O(mk\log^{d-1} n + mk^2\log^3 k)$ time,
such that $S$ is a $\frac{1}{2(1+\eps)}$-approximation for the \fairDiv\ problem in $P\cap R$, and $\expect[S(c_j)]\geq \frac{k_j}{1+\eps}$, for every color $c_j\in C$.
    }
    
In Section~\ref{sec:experiments} we run experiments on real datasets showing that our new algorithms return diverse and fair results faster than the other baselines.
More specifically, among algorithms that return fair results with similar diversity, our algorithm is always the fastest one.
When another baseline is faster than our method it is always the case that the diversity of the set it returns is significantly worse than the diversity of the results returned by our algorithm.
Overall, \newAlg{} provides the best balance between diversity and running time.

\section{Preliminaries}
\label{sec:prelim}
\mparagraph{Known techniques for \fairDiv}
We first review the LP-based algorithm presented in~\cite{addanki2022improved} to find a solution for the \fairDiv\ problem.
They run a binary search over all possible pairwise distances. For a distance $\gamma$, they solve the following feasibility problem (LP).

\noindent\begin{minipage}[t][0ex][t]{0.7cm}
\lpproblem
\end{minipage}\vspace*{-2ex}
\begin{align}
\label{eq1}\sum_{p_i\in P(c_j)} x_i&\geq k_j \quad \quad \forall c_j\in C\\
\label{eq2} \sum_{p_i\in P\cap \mathcal{B}(p,\gamma/2)} x_i &\leq 1  \quad \quad \forall p\in P\\
\label{eq3}1\geq x_{i} &\geq 0, \quad \quad \forall p_i\in P
\end{align}
$\mathcal{B}(p,\gamma/2)$ represents a ball with center $p$ and radius $\gamma/2$.
If ($\mathsf{LP}$\ref{lp:1}) is infeasible, they try smaller values of $\gamma$. Otherwise, they try larger values of $\gamma$. Then they describe a rounding technique to construct a valid solution for the \fairDiv\ problem.
Let $\hat{x}$ be the solution of ($\mathsf{LP}$\ref{lp:1}) corresponding to largest $\hat{\gamma}$ that the LP was feasible.
They generate a random ordering $\sigma$ of $[n]$ as follows: $\sigma(t)$ is randomly chosen from $R_t=[n]\setminus\{\sigma(1),\ldots, \sigma(t-1)\}$ such that a number $b\in R_t$ is chosen with probability $\Pr[\sigma(t)=b]=\frac{\hat{x}_b}{\sum_{\ell\in R_t}\hat{x}_\ell}$. After generating the ordering $\sigma$, they construct the output set $S\subseteq P$ including the point $p_j\in S$ if and only if $\sigma(j)\leq\sigma(\ell)$ for all $p_\ell\in P\cap \mathcal{B}(p_j,\gamma/2)$.
They show the following theorem.
\begin{theorem}[\cite{addanki2022improved}]
\label{thm:LP}
    The algorithm described above returns a set $S$ with $\diversity(S)\geq \gamma^*/2$ such that for each color $c_j$, $\expect[|S(c_j)|]\geq k_j$.
\end{theorem}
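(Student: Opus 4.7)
The plan is to break the claim into two independent parts: (i) the diversity bound $\diversity(S)\geq \gamma^*/2$, and (ii) the expected fairness bound $\expect[|S(c_j)|]\geq k_j$. Both hinge on exploiting the LP structure together with the specific weighted random ordering used in the rounding.

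For the diversity part, I would first argue that $\hat{\gamma}\geq \gamma^*$. This follows because the indicator vector of the optimal set $S^*$ is itself feasible for ($\mathsf{LP}$\ref{lp:1}) with $\gamma=\gamma^*$: constraint~\eqref{eq1} is immediate since $|S^*(c_j)|\geq k_j$; constraint~\eqref{eq2} holds because any two points of $S^*$ are at distance at least $\gamma^*$, so no ball of radius $\gamma^*/2$ can contain two of them. Hence the binary search will accept some $\hat{\gamma}\geq \gamma^*$. Now for any two points $p_i,p_j\in S$ produced by the rounding, if $\|p_i-p_j\|<\hat{\gamma}/2$ then $p_j\in \mathcal{B}(p_i,\hat{\gamma}/2)$ and vice versa, forcing $\sigma(i)\leq\sigma(j)$ and $\sigma(j)\leq\sigma(i)$ simultaneously — impossible. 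Therefore every pair in $S$ is at distance at least $\hat{\gamma}/2\geq \gamma^*/2$.

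For the fairness part, the key lemma I would use is a standard fact about weighted random permutations generated by sequential weight-proportional sampling (equivalently, the exponential clock / Gumbel-max coupling): for any subset $T$ containing index $i$,
\[
\Pr\!\left[\sigma(i)<\sigma(\ell)\ \forall\, p_\ell\in T\setminus\{p_i\}\right]\;=\;\frac{\hat{x}_i}{\sum_{p_\ell\in T}\hat{x}_\ell}.
\]
Applying this with $T=P\cap \mathcal{B}(p_i,\hat{\gamma}/2)$ and using constraint~\eqref{eq2} of the LP at the point $p=p_i$, which gives $\sum_{p_\ell\in T}\hat{x}_\ell\leq 1$, yields $\Pr[p_i\in S]\geq \hat{x}_i$. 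Summing over $p_i\in P(c_j)$ and invoking constraint~\eqref{eq1} gives
\[
\expect[|S(c_j)|]\;=\;\sum_{p_i\in P(c_j)}\Pr[p_i\in S]\;\geq\;\sum_{p_i\in P(c_j)}\hat{x}_i\;\geq\;k_j.
\]

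The only nontrivial ingredient is the weighted-permutation identity above, which is where I would spend most of the care. A clean proof is to couple the sequential sampling with independent exponential variables $E_\ell\sim \mathrm{Exp}(\hat{x}_\ell)$ and sort by $E_\ell$: the minimum over a set $T$ equals $E_i$ with probability $\hat{x}_i/\sum_{\ell\in T}\hat{x}_\ell$, and this joint distribution reproduces the sequential rule. Once this identity is in hand the rest of the argument is essentially bookkeeping, so I do not foresee a real technical obstacle — the subtlety is simply making sure the rounding is applied with the same $\hat{\gamma}$ for which the LP was feasible, so that the ball constraint used to bound the denominator is exactly the one appearing in the inclusion rule.
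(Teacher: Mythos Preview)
Your proposal is correct, and the overall decomposition---first show $\hat{\gamma}\geq\gamma^*$ via feasibility of the indicator of $S^*$, then bound $\diversity(S)\geq\hat{\gamma}/2$ from the inclusion rule, then show $\Pr[p_i\in S]\geq\hat{x}_i$ and sum over a color class---is exactly the structure of the original argument (the present paper does not reprove Theorem~\ref{thm:LP}, but its proof of the analogous Lemma~\ref{lem:divfair} follows the same template). The one real methodological difference is how you establish the key identity
\[
\Pr\bigl[\sigma(i)<\sigma(\ell)\ \forall\,\ell\in T\setminus\{i\}\bigr]=\frac{\hat{x}_i}{\sum_{\ell\in T}\hat{x}_\ell}.
\]
You obtain it by coupling the sequential weight-proportional sampling with independent exponential clocks $E_\ell\sim\mathrm{Exp}(\hat{x}_\ell)$ and using the minimum-of-exponentials formula. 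The paper's approach (visible in the proof of Lemma~\ref{lem:divfair}) instead conditions on the event $V_t$ that the first element of $T$ to appear in the ordering does so at step $t$, notes that $\Pr[\sigma(t)=i\mid V_t]=\hat{x}_i/\sum_{\ell\in T}\hat{x}_\ell$ independently of $t$, and sums $\sum_t\Pr[V_t]=1$. Both routes are short and standard; your coupling is arguably cleaner because it avoids reasoning about the sequential dynamics altogether, while the conditioning argument has the modest advantage of staying entirely within the discrete description of the rounding and not introducing auxiliary continuous random variables.
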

Notice that both the space and the running time of the algorithm is $\Omega(n^2)$. Specifically, it needs $O(n^2)$ space only to represent ($\mathsf{LP}$\ref{lp:1})
because $|P\cap \mathcal{B}(p,\gamma/2)|=O(n)$ for every point $p\in P$.
Solving each instance of ($\mathsf{LP}$\ref{lp:1}) takes $O(n^{\lambda})$ time, where $\lambda$ is the exponent such that an \textsf{LP} with $N$ variables and $N$ constraints can be solved in $O(N^\lambda)$ time\footnote{In any case, $\lambda\geq 2$. Currently, the best theoretical algorithm solving an \textsf{LP} has $\lambda\geq 2.37$~\cite{jiang2021faster}.
}.
The rounding algorithm is executed in $O(n^2)$ time. Overall, the running time is $O(n^{\lambda}\log n)$.

\ansmeta{
In the same paper they also propose the Fair-Greedy-Flow algorithm that returns a $\frac{1}{(m+1)(1+\eps)}$-approximation in $O(nkm^3\log n)$ time, for constant $\eps$. The algorithm maps the \fairDiv\ problem to a max-flow instance with $O(km)$ nodes and $O(mk^2)$ edges.
}

The authors in~\cite{addanki2022improved} also described a $(1+\eps)$-coreset for the \fairDiv\ problem.
They run the well known Gonzalez algorithm~\cite{gonzalez1985clustering} for the $k'$-center clustering problem in each set $P(c_j)$ independently, for $k'=\eps^{-d}k$. Let $G_j$ be the solution of the Gonzalez's algorithm in $P(c_j)$. Then, $G=\bigcup_{c_j\in C}G_j$. It holds that $|G|=O(\eps^{-d}km)$ and it is constructed in $O(\eps^{-d}nk)$ time.
If we combine the results in Theorem~\ref{thm:LP} with the coreset construction for a constant $\eps$, we get, an algorithm that returns a set $S$ with $\diversity(S)\geq \frac{1}{2(1+\eps)}\gamma^*$ in $O(kn+(km)^{\lambda})$ time such that for each color $c_j$, $\expect[|S(c_j)|]\geq k_j$.
\ansmeta{To satisfy fairness exactly, we can combine the coreset with the Fair-Greedy-flow algorithm~\cite{addanki2022improved} to get a $\frac{1}{(m+1)(1+\eps)}$-approximation in $O(kn+k^2m^4\log k)$ time.}

\paragraph{Diversity with high probability}
All the results above, return a set $S$ that satisfies fairness in expectation. The authors in~\cite{addanki2022improved} extended the results to hold with probability at least $1-1/n$, also called with high probability.
Given $\hat{x}$, the solution from ($\mathsf{LP}$\ref{lp:1}), they convert it to a solution $\hat{y}$ for the following (non-linear) feasibility problem.

\noindent\begin{minipage}[t][0ex][t]{0.7cm}
\fpproblem
\end{minipage}\vspace*{-2ex}
\begin{align}
\label{eq2.1}\sum_{p_i\in P(c_j)} y_i&\geq k_j,  \quad\quad \forall c_j\in C\\
\label{eq2.2} \sum_{p_i\in P\cap\mathcal{B}(p,\gamma/6)} y_i & \leq 1,  \quad \quad \forall p\in P\\
\label{eq2.3}y_{i} &\geq 0, \quad \quad \forall p_i\in P\\
y_i&>0\text{ and }y_\ell>0 \Rightarrow ||p_i-p_{\ell}||\geq\frac{\gamma}{3}, \\
\nonumber&\quad\quad\quad\quad\forall p_i,p_\ell\in P(c_j), \forall c_j\in C
\end{align}
If $k_j\geq 3\eps^{-2}\log(2m)$ for every $c_j\in C$, the authors showed that, if they apply the same rounding technique as in the the previous case they return a set $S$ such that $\diversity(S)\geq \gamma^*/6$, where $|S(c_j)|\geq \frac{k_j}{1-\eps}$ for every $c_j\in C$, with probability at least $1-1/n$. Unfortunately, they still need to solve ($\mathsf{LP}$\ref{lp:1}) to derive this result. Even without solving ($\mathsf{LP}$\ref{lp:1}), the algorithm they propose to convert the solution from ($\mathsf{LP}$\ref{lp:1}) $\hat{x}$ to a solution for ($\mathsf{FP}$\ref{fp:1}) $\hat{y}$ takes $\Omega(n^2)$ time. Hence, the overall running time is super-quadratic. If the coreset is used, then we get a $\frac{1}{6(1+\eps)}$-approximation algorithm in $O(kn+(mk)^{2.37})$ time satisfying the fairness constraints with high probability.

\mparagraph{Geometric data structures}
We describe the main geometric data structure we use in the next sections.
\paragraph{BBD-tree} The main geometric data structure we use is the \emph{BBD-tree}~\cite{arya2000approximate, arya1998optimal}, which is a variant of the quadtree~\cite{finkel1974quad}.
A BBD-tree $\mathcal{T}$ on a set $P$ of n points in $\Re^d$ is a binary
tree of height $O(\log n)$ with exactly $n$ leaves. Let $\square$ be the smallest
axis-aligned hypercube containing $P$. Each node $u$ of $\mathcal{T}$ is associated
with a region $\square_u$, which is either a rectangle or a region between
two nested rectangles, and a subset $P_u\subseteq P$ of points that lie inside
$\square_u$. Notice that $\square_{root}=\square$. If $|P_u|=1$, then $u$ is a leaf. If $|P_u| > 1$,
then $u$ has two children, say, $w$ and $z$, and $\square_w$ and $\square_z$ partition
$\square_u$. Regions associated with the nodes of $\mathcal{T}$ induce a
hierarchical partition of $\Re^d$. A BBD tree has $O(n)$ space and can be constructed in $O(n\log n)$ time. Given a parameter $\eps\in (0,1)$ and a ball $\mathcal{B}(x, r)$ in $\Re^d$, the BBD-tree runs the query procedure $\mathcal{T}(x,r)$ that
returns a set of nodes $\mathcal{U}(x,r)=\{u_1,\ldots, u_\kappa\}$ from $\mathcal{T}$ (also called \emph{canonical nodes}) for $\kappa=O(\log n + \eps^{-d})$ in $O(\log n+\eps^{-d})$ 
time such that $\square_{u_i}\cap \square_{u_j}=\emptyset$ for every pair $1\leq i<j\leq \kappa$, and
$\mathcal{B}(x,r)\subseteq \bigcup_{1\leq i\leq \kappa}\square_{u_i}\subseteq \mathcal{B}(x,(1+\eps)r)$.
By reporting all points $P_{u_i}$ for $i\leq \kappa$, the BBD tree can be used for reporting all points in $P\cap \mathcal{B}(x,r)$ along with some points from $P\cap (\mathcal{B}(x,(1+\eps)r)\setminus \mathcal{B}(x,r))$.

\paragraph{WSPD}
Using a quadtree~\cite{finkel1974quad}, someone can get 
a \emph{Well Separated Pair Decomposition} (WSPD)~\cite{callahan1995decomposition, har2005fast} in $P\in \Re^d$. In $O(\eps^{-d}n\log n)$ time, we can construct a list $\mathcal{L}=\{L_1, \ldots, L_z\}$ of $z=O(\eps^{-d}n)$ distances, such that for every pair $p,q\in P$, there exists $L_j\in \mathcal{L}$ such that $(1-\eps)||p-q||\leq L_j\leq (1+\eps)||p-q||$.

\mparagraph{Multiplicative Weight Update (MWU) method}
The MWU method is used to solve the following linear feasibility problem.
\begin{equation}
    \label{feasibility}
    \exists x\in \mathcal{P}: Ax\leq b,
\end{equation}
where $A\in \Re^{m'\times n'}$, $x\in \Re^{n'}$, $b\in \Re^{m'}$, $Ax\geq 0$, $b\geq 0$, and $\mathcal{P}$ is a convex set in $\Re^{n'}$. Intuitively, $\mathcal{P}$ captures the ``easy'' constraints to satisfy while $A$ represents the ``hard'' constraints to satisfy.
The authors in~\cite{arora2012multiplicative} describe an iterative algorithm using a simple \textsf{ORACLE}.
Let \textsf{ORACLE} be a black-box procedure that solves the following single linear constraint for a probability vector $h\in \Re^{m'}$.
\begin{equation}
\label{eq:oracle}
\exists x\in \mathcal{P}: h^\top Ax\leq h^\top b.    
\end{equation}

The \textsf{ORACLE} decides if there exists an $x$ that satisfies the single linear constraint. Otherwise, it returns that there is no feasible solution.
A $\rho$-\textsf{ORACLE} is an \textsf{ORACLE} such that whenever \textsf{ORACLE} manages to find a feasible solution $\hat{x}$ to problem~\eqref{eq:oracle}, then $A_i\hat{x}-b_i\in [-1,\rho]$ for each constraint $i\in[m']$, where $A_i$ is the $i$-th row of $A$.

The algorithm starts by initializing $h$ to a uniform probability vector with value $1/m'$. In each iteration the algorithm solves Equation~\eqref{eq:oracle}. If \eqref{eq:oracle} is infeasible, we return that the original feasibility problem in Equation~\eqref{feasibility} is infeasible. Let $x^{(t)}$ be the solution of the problem in Equation~\eqref{eq:oracle} in the $t$-th iteration of the algorithm.
Let $\delta_i=\frac{1}{\rho}(A_ix^{(t)}-b_i)$. We update $h[i]=(1+\delta_i\cdot\eps/4)h[i]$, where $h[i]$ is the $i$-th element of vector $h$. We continue in the next iteration defining a new feasibility problem with respect to the new probability vector $h$. After $T=O(\rho\log(m')/\eps^2)$ iterations, if every oracle was feasible, they return $x^*=\frac{1}{T}\sum_{t=1}^Tx^{(t)}$.
Otherwise, if an oracle was infeasible, they argue that the initial problem is infeasible.
\ansmeta{Overall, every algorithm using the MWU method to solve a problem in the form of Equation~\eqref{feasibility} should implement two procedures: $\mathsf{Oracle}(\cdot)$ that implements a $\rho$-\textsf{ORACLE} and $\mathsf{Update}(\cdot)$ that updates the probability vector $h$.}
In~\cite{arora2012multiplicative} they prove the following theorem.
\begin{theorem}[\cite{arora2012multiplicative}]
\label{thm:mutli-weights}
\ansmeta{
Given a feasibility problem as defined above, a parameter $\eps$, a $\rho$-\textsf{ORACLE} implemented in procedure $\mathsf{Oracle}(\cdot)$, and an update procedure $\mathsf{Update}(\cdot)$, there is an algorithm which either finds an $x$ such that $\forall i$, $A_ix_i\leq b_i+\eps$ or correctly concludes that the system is infeasible. The algorithm makes $O(\rho\log(m')/\eps^2)$ calls to procedures $\mathsf{Oracle}(\cdot)$ and $\mathsf{Update}(\cdot)$.}
\end{theorem}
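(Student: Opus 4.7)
The plan is to analyze the algorithm already described in the excerpt via the standard potential--function argument. Set $T=\lceil c\rho\ln(m')/\eps^2\rceil$ for a sufficiently large absolute constant $c$, initialize $h_i^{(1)}=1$ for every $i\in[m']$, and define the potential $\Phi_t=\sum_{i=1}^{m'}h_i^{(t)}$, so $\Phi_1=m'$. In each iteration $t$, invoke $\mathsf{Oracle}(\cdot)$ with the normalized weights $h^{(t)}/\Phi_t$: if the oracle reports infeasibility, halt and report infeasibility of the original system; otherwise obtain $x^{(t)}\in\mathcal{P}$ with $\langle h^{(t)},Ax^{(t)}-b\rangle\leq 0$ and call $\mathsf{Update}(\cdot)$. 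After $T$ rounds, return $x^\star=\frac{1}{T}\sum_{t=1}^Tx^{(t)}$, which lies in $\mathcal{P}$ by convexity.

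Correctness of the infeasibility branch is immediate: if some $\hat x\in\mathcal{P}$ satisfies $A\hat x\leq b$, then $\langle h,A\hat x-b\rangle\leq 0$ for every $h\geq 0$, so the oracle can never spuriously declare infeasibility. For the feasibility branch, the upper bound $\Phi_{T+1}\leq m'$ is immediate from the oracle guarantee:
\[
\Phi_{t+1}-\Phi_t \;=\; \frac{\eps}{4\rho}\sum_{i=1}^{m'} h_i^{(t)}\bigl(A_ix^{(t)}-b_i\bigr) \;\leq\; 0.
\]
For the matching lower bound, fix a row $i$ and decompose $m_i^{(t)}:=A_ix^{(t)}-b_i=P_t-N_t$ with $P_t=\max(m_i^{(t)},0)\in[0,\rho]$ and $N_t=\max(-m_i^{(t)},0)\in[0,1]$. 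Because $|(\eps/4)\delta_i^{(t)}|\leq \eps/4<1/2$, the inequality $\ln(1+y)\geq y-y^2$ applies termwise, and combined with the asymmetric bound $(m_i^{(t)})^2\leq \rho P_t + N_t$ it yields
\[
\ln h_i^{(T+1)} \;\geq\; \frac{\eps}{4\rho}(1-\eps/4)\sum_{t=1}^T P_t \;-\; \frac{\eps}{4\rho}\!\left(1+\frac{\eps}{4\rho}\right)\sum_{t=1}^T N_t.
\]
Pairing this with $\ln h_i^{(T+1)}\leq \ln\Phi_{T+1}\leq \ln m'$, using $\sum_t N_t\leq T$, and rearranging produces
\[
\frac{1}{T}\sum_{t=1}^T (A_ix^{(t)}-b_i) \;\leq\; \frac{4\rho\ln m'}{\eps T(1-\eps/4)} + O(\eps),
\]
which is at most $\eps$ for the stated $T$ and any sufficiently large constant $c$. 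Linearity of $A_i$ turns the left-hand side into $A_ix^\star-b_i$, and the inequality holds simultaneously across all rows $i$.

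The main technical obstacle is getting a linear, not quadratic, dependence on $\rho$ in the iteration count: a naive bound $(m_i^{(t)})^2\leq \rho^2$ would cost an extra factor of $\rho$. The fix is the asymmetric decomposition above. Negative violations satisfy $|m_i^{(t)}|\leq 1$, so their quadratic error is only $N_t$; positive violations $m_i^{(t)}\in[0,\rho]$ satisfy $(m_i^{(t)})^2\leq \rho m_i^{(t)}=\rho P_t$, so their quadratic error collapses back into the linear term and is absorbed into the $(1-\eps/4)$ factor. With this in place, each iteration performs exactly one call to $\mathsf{Oracle}(\cdot)$ and one to $\mathsf{Update}(\cdot)$, giving the claimed $O(\rho\log(m')/\eps^2)$ bound on total calls; rescaling $\eps$ by a constant absorbs the remaining implicit constants in the approximation.
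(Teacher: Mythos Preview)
The paper does not prove this theorem at all: it is quoted verbatim from \cite{arora2012multiplicative} and used as a black box, so there is no ``paper's own proof'' to compare against. Your argument is precisely the standard potential-function analysis from that reference (Theorem~3.3 in Arora--Hazan--Kale), and it is correct. In particular, your asymmetric handling of the quadratic error via $(m_i^{(t)})^2\le \rho P_t+N_t$ is exactly the device that yields the linear-in-$\rho$ iteration bound, and the final rescaling of $\eps$ legitimately absorbs the residual constants.
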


\newcommand{\LPone}{($\mathsf{LP}$\ref{lp:1})}
\newcommand{\LPtwo}{($\mathsf{LP}$\ref{lp:2})}
\section{Efficient algorithm for \fairDiv}
\label{sec:algs}
In this section we propose two efficient algorithms for the \fairDiv\ problem. The first one guarantees approximate fairness in expectation, while the second one guarantees approximate fairness with high probability.

\subsection{Diversity in expectation}
\label{sub:exp}

\begin{figure*}
    \includegraphics[width=.9\textwidth]{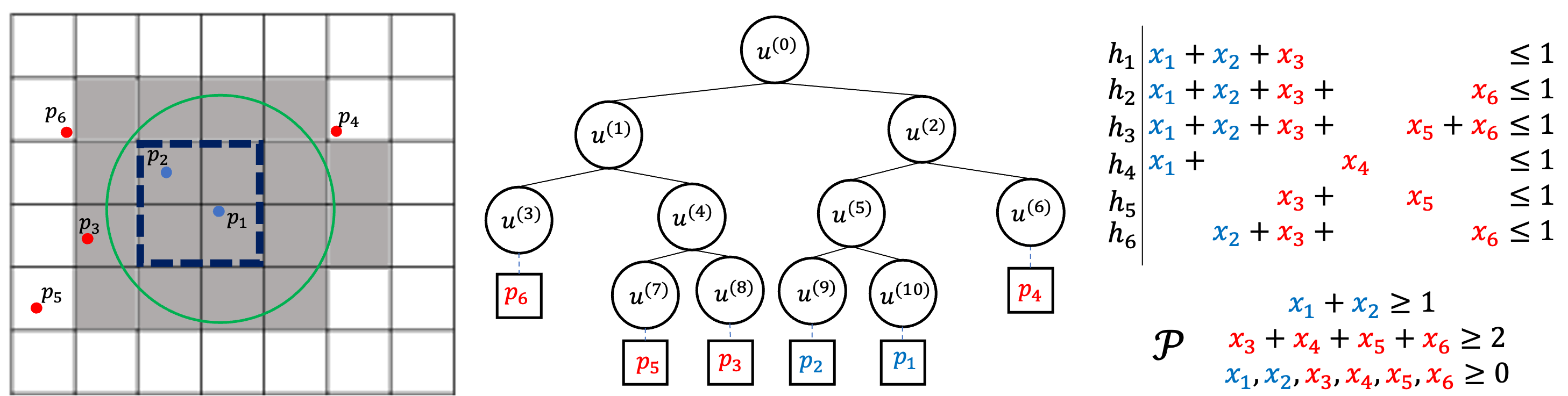}
    \caption{Left: Input set of points. Middle: Simplified BBD tree, Right: The decision problem $\exists x\in \mathcal{P}$ s.t. $h^\top Ax\leq b$.\label{fig:grid}}
\end{figure*}



\ansmeta{
\mparagraph{High-level idea}
Recall the LP-based algorithm proposed in~\cite{addanki2022improved}: solve \LPone\ using an LP solver and then round the solution as described in Section~\ref{sec:prelim}. We design a new algorithm that uses the MWU approach to solve a modified feasibility problem.
While the MWU approach can work directly on \LPone\, it takes $\Omega(n^2)$ time to run. Instead, we define a new linear feasibility problem, called \LPtwo\ and we use the MWU method to approximately solve \LPtwo\ in near-linear time. Finally, we round its fractional solution to return a valid solution for the \fairDiv\ problem in near-linear time using advanced geometric data structures.

Assume that $\gamma$ is a pairwise distance among the items in $P$. Our algorithm checks whether there exists a set $S\subseteq P$ that satisfies the fairness constraints such that $\diversity(S)\geq \frac{\gamma}{2(1+\eps)}$. We map this decision problem to a new linear feasibility problem \LPtwo. The Constraints~\eqref{eq1} and Constraints~\eqref{eq3} from \LPone\ remain the same.
However, we slightly modify Constraints~\eqref{eq2}.
For a point $p$ we define a set $S_p^\eps\subseteq P$ denoting its ``neighboring'' points, with a definition of neighboring which is convenient for the data structure we use. The set $S_p^\eps$ contains all points within distance $\frac{\gamma}{2(1+\eps)}$ from $p$, might contain some points within distance $\gamma/2$, and no point with distance more than $\gamma/2$.
The properties of the BBD tree are used to formally define $S_p^{\eps}$.
We define $S_p^{\eps}=\{p\in \square_{u_i}\cap P\mid u_i\in \mathcal{U}(p,\frac{\gamma}{2(1+\eps)})\}$, i.e., the set of points in the canonical nodes returned by query $\mathcal{T}(p,\frac{\gamma}{2(1+\eps)})$.
We replace Constraints~\eqref{eq2} with
$\sum_{p_i\in S_p^\eps} x_i\leq 1, \forall p\in P$.
Overall, the new feasibility problem is:

\noindent\begin{minipage}[t][0ex][t]{0.7cm}
\lpproblem
\end{minipage}\vspace*{-2ex}
\begin{align}
\label{modeq1}\sum_{p_i\in P(c_j)} x_i&\geq k_j \quad \quad \forall c_j\in C\\
\label{eq:new} \sum_{p_i\in S_p^\eps} x_i &\leq 1  \quad \quad \forall p\in P\\
\label{modeq3}1\geq x_{i} &\geq 0, \quad \quad \forall p_i\in P
\end{align}



We use the MWU method to compute a feasible solution for \LPtwo. Recall that the MWU method solves feasibility problems in the form of Equation~\eqref{feasibility}, $\exists x\in \mathcal{P}:Ax\leq b$. Next, we show that \LPtwo\ can be written in this form by defining $\mathcal{P}$, $A$, and $b$.

Instead of considering that the trivial constraints $\mathcal{P}$ contains  only the inequalities $1\geq x_i\geq 0$, we assume that Constraints~\eqref{modeq1} are also trivial and contained in $\mathcal{P}$. This will allow us later to design a $k$-\textsf{ORACLE}. The set $\mathcal{P}$ is convex because it is defined as the intersection of $m+n$ halfspaces in $\Re^n$. Hence, it is valid to use the MWU method.
The new Constraints~\eqref{eq:new} define the binary square matrix $A$, having one row for every point $p\in P$. The value $A[\ell,i]=1$ if $p_i\in S_{p_\ell}^\eps$, otherwise it is $0$. 
Finally, $b$ is defined as a vector in $\Re^d$ with all elements being $1$. From Theorem~\ref{thm:mutli-weights}, we know that the MWU method returns a solution that satisfies the constraints in $\mathcal{P}$ (Constraints~\eqref{modeq1} and~\eqref{modeq3}) exactly, while the Constraints~\eqref{eq:new} are satisfied with an $\eps$ additive error.

A straightforward implementation of the MWU method over \LPtwo\ would still take super-quadratic time to run; even the computation of $A$ takes $O(n^2)$ time. Our new algorithm does not construct the matrix $A$ explicitly. Likewise, it does not construct the sets $S_p^\eps$ explicitly. Instead, we use geometric data structures to implicitly represent $A$ and $S_p^\eps$ and execute our algorithm in near-linear time.
}

Before we continue with our method, we note that there has been a lot of work related to implicitly solving special classes of LPs in computational geometry using the MWU method~\cite{chan2020faster, chekuri2020fast, clarkson2005improved}. Despite the geometric nature of our problem, to the best of our knowledge, all previous (geometric) techniques that implicitly solve an $\mathsf{LP}$ do not extend to our problem. For example, in~\cite{chekuri2020fast} they describe algorithms for geometric problems using the MWU method with running time that depends linearly on the number of non-zero numbers in matrix $A$. In our problem definition there might be $O(n^2)$ non-zero numbers in $A$, hence these algorithms cannot be used to derive near-linear time algorithms for our problem. In ~\cite{chan2020faster, chekuri2020fast} they also provide near-linear time constant approximation algorithms for some geometric problems such as the geometric set cover, covering points by disks, or the more relevant to our problem, maximum independent set of disks. However there are three major differences with our problem definition: i) These algorithms work only in $\Re^2$ or $\Re^3$. We propose approximation algorithms for any constant dimension $d$. ii) They do not consider fairness constraints and it is not clear how to extend their methods to satisfy fairness constraints. iii) The optimization problem in \fairDiv\ is different from the optimization problems they study.

\ansmeta{
\mparagraph{Example}
We use a toy example to describe our new algorithm in the next paragraphs. In Figure~\ref{fig:grid} (Left) we show the input points that consists of two blue points $p_1, p_2$ and four red points $p_3, p_4, p_5, p_6$. Let blue color be $c_1$ and red color be $c_2$. The goal is to solve the \fairDiv\ problem among the five points with $k_1=1$ and $k_2=2$, i.e., our solution should contain at least one blue and at least two red points.
Throughout the example, we assume that $\gamma=5$ and $\eps=1$.
In Figure~\ref{fig:grid} (Middle) we show a simplified version of the BBD tree constructed on the input set. Each node corresponds to a rectangle in $\Re^d$. For example, the node $u^{(5)}$ corresponds to the dashed rectangle that contains $p_1$ and $p_2$ in Figure~\ref{fig:grid} (Left). The leaf nodes in the BBD tree correspond to the non-empty cells in Figure~\ref{fig:grid} (Left). For simplicity, we assume that each grid cell has diagonal $1$ (which is equal to the maximum error $\eps$). Consider the blue point $p_1$. The green circle with center $p_1$ has radius $\gamma/(2(1+\eps))=1.25$. By definition, all points that lie in grid cells that are intersected by the green circle (the gray cells in Figure~\ref{fig:grid} (Left)) belong in $S_{p_1}^{\eps}$, i.e. $S_{p_1}^{\eps}=\{p_1,p_2, p_3\}$. Notice that $p_3\in S_{p_1}^\eps$ because the green circle around $p_1$ intersects the grid cell (associated with the node $u^{(8)}$) that $p_3$ belongs to. On the other hand $p_4\notin S_{p_1}^\eps$. Interestingly, $S_{p_4}^\eps=\{p_1, p_4\}$ because a ball of radius $1.25$ and center $p_4$ intersects the cell that $p_1$ belongs to. 
We also have $S_{p_2}^\eps=\{p_1, p_2, p_3, p_6\}$, $S_{p_3}^\eps=\{p_1, p_2, p_3, p_5, p_6\}$, $S_{p_5}^\eps=\{p_3, p_5\}$, and $S_{p_6}^\eps=\{p_2, p_3, p_6\}$
Using the sets $S_p^\eps$, in Figure~\ref{fig:grid} (Right), we show the ($\mathsf{LP}$\ref{lp:2}) for our instance. The trivial constraints $\mathcal{P}$ are shown in the bottom, while the main constraints are shown at the top.


\mparagraph{New Algorithm}
Our new algorithm is called \textbf{M}ultiplicative weight update method for \textbf{F}air \textbf{D}iversification (\newAlg{}) and it is described in Algorithm~\ref{alg:alg0}. First it computes a sorted array $\Gamma$ of (candidates of) pairwise distances in $P$. Then it runs a binary search on $\Gamma$. Lines 2, 4, 5, 13, 14, 16 are all trivial executing the binary search on $\Gamma$. Each time we find a feasible (infeasible) solution for our optimization problem ($\mathsf{LP}$\ref{lp:2}) we try larger (smaller) values of $\gamma$. For a particular $\gamma\in \Gamma$, in lines 4--16 we use the MWU method to solve ($\mathsf{LP}$\ref{lp:2}). The algorithm follows the MWU method as described in the end of Section~\ref{sec:prelim}. In particular, for at most $T=O(\eps^{-2}\rho\log n)$ iterations, it calls $\mathsf{Oracle}(\cdot)$ in line 9 to decide whether there exists $x$ such that $h^\top Ax\leq h^\top b$ and $x\in \mathcal{P}$. Recall that in our case $b\in\Re^n$ and $b=\{1,\ldots, 1\}$, so it is sufficient to decide whether there exists $x$ such that $h^\top Ax\leq \sum_{\ell}h[\ell]\Leftrightarrow h^\top Ax\leq 1$ and $x\in \mathcal{P}$. If $\mathsf{Oracle}()$ returns a feasible solution $\bar{x}$, in line 11 it updates the final solution $\hat{x}$ and uses the $\mathsf{Update}(\cdot)$ procedure to update the vector $h$ based on $\bar{x}$. If all $T$ iterations return feasible solutions, in line 15 it computes the final solution of ($\mathsf{LP}$\ref{lp:2}) for a given $\gamma$.
In the end, in line 17 we run a rounding procedure to derive the final set of points $S$.
\begin{algorithm}[t]
\ansmeta{
    \caption{\newAlg$(P,\eps, k_1,\ldots, k_m)$}
    \label{alg:alg0}
    $\Gamma\gets$ Sorted array of pairwise distances in $P$\;
    $M_l\gets 0$,\hspace{0.25em} $M_u\gets |\Gamma|-1$,\hspace{0.25em} $k\gets k_1+\ldots+k_m$\;
    $T\gets O(\eps^{-2}\rho\log n)$\;
    \While{$M_l\neq M_u$}{
        $M\gets\lceil (M_l+M_u)/2 \rceil$, \hspace{0.25em}
        $\gamma\gets \Gamma[M]$\;
         $\hat{x}\gets(0,\ldots,0)^\top\in \Re^n$\;
        $h\gets (\frac{1}{n}, \ldots, \frac{1}{n})^\top\in \Re^n$\;
        \For{$1,\ldots, T$}{
            $\bar{x}\gets$\textsf{Oracle}$(P,h,\gamma,\eps, k_1,\ldots, k_m)$\;
            \If{$\bar{x}\neq\emptyset$}{
                $\hat{x}\gets \hat{x}+\bar{x}$\;
                $h\gets$\textsf{Update}$(P, \bar{x},\gamma,\eps)$\;
            }\Else{
                $M_u\gets M-1$,
                Go to Line 4\;
            }
        }
        $\hat{x}\gets \hat{x}/T$\;
        $M_l\gets M$\;
    }
    $S\gets$\textsf{Round}$(P, \eps, \hat{x})$\;
    \Return $S$\;
}
\end{algorithm}
Overall, the algorithm follows the high level idea of the LP-based algorithm in~\cite{addanki2022improved}, using the MWU method~\cite{arora2012multiplicative} instead of an LP solver.
If $Q_{\Gamma}, Q_O, Q_U, Q_R$ is the running time to compute $\Gamma$, and run $\mathsf{Oracle}(\cdot)$ $\mathsf{Update}(\cdot)$, and $\mathsf{Round}(\cdot)$, respectively, then the overall running time of Algorithm~\ref{alg:alg0} is
$O\left(Q_{\Gamma} + (\eps^{-2}\rho \log n)(Q_O+Q_U+n)\log |\Gamma| + Q_R\right)$.
Using the results in~\cite{arora2012multiplicative} and~\cite{addanki2022improved}, $Q_{\Gamma}, Q_O, Q_U, Q_R = \Omega(n^2)$, leading to a super-quadratic time algorithm.
In the next paragraphs, we use geometric tools to show how we can find a set $\Gamma$ and run all the procedures $\mathsf{Oracle}$, $\mathsf{Update}$, $\mathsf{Round}$ only in near-linear time using only linear space with respect to $n$. We also show that $\rho= k$.
}



\mparagraph{The $\mathsf{Oracle}(\cdot)$ procedure}
\begin{algorithm}[t]
\ansmeta{
    \caption{\textsf{Oracle}$(P, h, \gamma, \eps,k_1,\ldots, k_m)$}
    \label{alg:alg1}
    $\mathcal{T}\gets \text{ BBD tree on } P$\;
    \lForEach{$u\in\mathcal{T}$}{
        $u_s\gets 0$
    }
   
    \ForEach{$p_\ell\in P$}{
        $\mathcal{U}_{p_\ell}\gets \mathcal{T}(p_\ell, \frac{\gamma}{2(1+\eps)})$\;
        \lForEach{$u\in \mathcal{U}_{p_\ell}$}{
            $u_s\gets u_s+h[\ell]$
        }
    }
    \ForEach{$p_i\in P$}{
        $w_i\gets 0$\;
        $v\gets \text{ leaf node of }\mathcal{T} \text{ such that } p_i\in \square_v\cap P$\;
        \ForEach{$u$ in the path from $v$ to the root of $\mathcal{T}$}{
            $w_i\gets w_i+u_s$
        }
    }
    $\bar{x}=(0,\ldots, 0)\in \Re^n$\;
    \ForEach{$c_j\in C$}{
        $W_j\gets k_j$-th smallest weight in $\{w_i\mid p_i\in P(c_j)\}$\;
        $P_j\gets \{p_i\in P(c_j)\mid w_i\leq W_j\}$\;
        \lForEach{$p_i\in P_j$}{
            $\bar{x}_i=1$
        }
    }
    \lIf{$\sum_{p_i\in P}\bar{x}_iw_i\leq 1$}{
        \Return $\bar{x}$
    }\lElse{
        \Return $\emptyset$ (\textsf{Infeasible})
    }
}
\end{algorithm}
\ansmeta{We design a $k$-\textsf{ORACLE} procedure as defined in Section~\ref{sec:prelim}. The goal is to decide whether there exists $x$ such that $h^\top Ax\leq 1$ and $x\in \mathcal{P}$. We note that $\mathsf{Oracle}(\cdot)$ does not compute the matrix $A$ explicitly.}
In fact, $h^\top Ax$ can be written as $\sum_{p_i\in P}\alpha_ix_i$, for some real coefficients $\alpha_i$.
Intuitively, for every color $c_j$, our goal is to find the $k_j$ points with color $c_j$, having the smallest coefficients $\alpha_i$. Our algorithm first finds all coefficients $\alpha_i$ and then it chooses the $k_j$ smallest from each color $c_j$. In that way, we find the solution $\bar{x}$ that minimizes $h^\top A\bar{x}$ for $\bar{x}\in\mathcal{P}$. Finally, we check whether $h^\top A\bar{x}\leq 1$.

We are given a probability vector $h\in \Re^n$. Each value $h[i]\in h$ corresponds to the \emph{weight} of the $i$-th row of matrix $A$. In other words each point $p_\ell\in P$ is associated with a weight $h[\ell]$. We build a slightly modified BBD tree $\mathcal{T}$ over the weighted points $P$.
Let $\mathcal{T}$ be the tree constructed as described in Section~\ref{sec:prelim} over the set of points $P$.
For every node $u$ of $\mathcal{T}$, we initialize a weight $u_s=0$.
The data structure $\mathcal{T}$ has $O(n)$ space and can be constructed in $O(n\log n)$ time.


Using our modified BBD tree $\mathcal{T}$, in Algorithm~\ref{alg:alg1} we show how to check whether there exists $\bar{x}$ such that $h^\top A\bar{x}\leq 1$ and $\bar{x}\in \mathcal{P}$.
For each $p_\ell\in P$ we run the query $\mathcal{T}(p_\ell, \frac{\gamma}{2(1+\eps)})$ and we get the set of canonical nodes $\mathcal{U}_{p_\ell}:=\mathcal{U}(p_\ell,\frac{\gamma}{2(1+\eps)})$. For each node $u\in\mathcal{U}_{p_\ell}$, we update $u_s\leftarrow u_s+h[\ell]$.
After we consider all points, 
we revisit each point $p_i\in P$ and continue as follows: We initialize a weight $w_i=0$. We start from the leaf node that contains $p_i$ and we traverse $\mathcal{T}$ bottom up until we reach the root of the BBD tree. Let $v$ be a node we traverse; we update $w_i=w_i+v_s$.
After computing all values $w_i$,
for each color $c_j$ we find the $k_j$ points from $P(c_j)$ with the smallest weights $w_i$. Let $P_j$ be these points. For each $p_i\in P_j$ we set $\bar{x}_i=1$. Otherwise, if $p_i\notin P_j$ and $c(p_i)=c_j$, we set $\bar{x}_i=0$. If $\sum_{p_i\in P}w_i\bar{x}_i\leq 1$ the oracle returns $\bar{x}$ as a feasible solution. Otherwise, it returns that there is no feasible solution.

\paragraph{Proof of correctness.}
We show that the \textsf{ORACLE} we design is correct.
We first show that $\alpha_i=w_i$ so $h^\top Ax = \sum_{p_i\in P}w_ix_i$.
Recall that $h^\top Ax=\sum_{p_i\in P}\alpha_ix_i$, for the real coefficients $\alpha_i$. By definition, each $\alpha_i$ is defined as $\alpha_i=\sum_{p_\ell\in P}h[\ell]\cdot \mathcal{I}(p_i\in S_{p_\ell}^\eps)$, where $\mathcal{I}(p_i\in S_{p_\ell}^\eps)=1$ if $p_i\in S_{p_\ell}^\eps$ and $0$ otherwise.
If $p_i\in S_{p_\ell}^\eps$ then by definition $\alpha_i$ contains a term $h[\ell]$ in the sum. There exists also a node $u\in \mathcal{U}_{p_\ell}$
such that $p_i\in \square_u\cap P$, i.e.,  $p_i$ lies in a leaf node of the subtree rooted at $u$.
Starting from the leaf containing $p_i$, our algorithm will always visit the node $u$, and since $u\in \mathcal{U}_{p_\ell}$ we have that $h[\ell]$ is a term in the sum $u_s$ so by updating $w_i=w_i+u_s$ we include the term $h[\ell]$ in the weight $w_i$.
Overall, we have that $w_i=\alpha_i$ and our algorithm finds all the correct coefficients in the linear function $h^\top Ax$. Then we focus on minimizing the sum $\sum_{p_i\in P}w_ix_i$ satisfying $x\in \mathcal{P}$. For each color $c_j$ we should satisfy $\sum_{p_i\in P(c_j)}x_i\geq k_j$.
We can re-write $\sum_{p_i\in P}w_ix_i=\sum_{c_j\in C}\sum_{p_i\in P(c_j)}w_ix_i$.
Notice that the partial sum $\sum_{p_i\in P(c_j)}w_ix_i$ is minimized for $x\in\mathcal{P}$ setting $\bar{x}_i=1$ for the $k_j$ smallest factors $w_i$ in the partial sum.
Every point has a unique color, so there is no point $p_i$ that belongs in two different partial sums.
Repeating the same argument for each color $c_j\in C$, we conclude that indeed our algorithm finds the minimum value of $\sum_{p_i\in P}w_ix_i$ satisfying $x\in \mathcal{P}$. Overall, our algorithm correctly returns whether the feasibility problem $h^\top A x\leq 1$ for $x\in \mathcal{P}$ is feasible or infeasible.

\ansmeta{
Let $\bar{x}$ be the feasible solution returned by $\mathsf{Oracle}(\cdot)$.
Notice that by definition, $\bar{x}$ sets $k$ variables to $1$. Hence, for each Constraint~\eqref{eq:new}, it holds that $A_i\bar{x}-b_i\leq k-1$ and $A_i\bar{x}-b_i\geq -1$, where $A_i$ is the $i$-th row of $A$. Similarly, we can write $A_i\bar{x}\leq k$ and $A_i\bar{x}\geq 0$ since $b_i=1$. We conclude that our $\mathsf{Oracle}$ procedure computes a $k$-\textsf{ORACLE} as defined in~\cite{arora2012multiplicative}, so $\rho=k$.
}
\ansmeta{
\eparagraph{Example (cont)}
We show the execution of Algorithm~\ref{alg:alg1} in our example. Assume that $h^\top=[.1, .1, .1, .1, .4, .2]$, with $h_1+h_2+h_3+h_4+h_5+h_6=1$. By the definition of canonical subsets in the BBD tree, we have $\mathcal{U}_{p_1}=\{u^{(5)}, u^{(8)}\}$, $\mathcal{U}_{p_2}=\{u^{(5)}, u^{(8)}, u^{(3)}\}$, $\mathcal{U}_{p_3}=\{u^{(1)}, u^{(5)}\}$, $\mathcal{U}_{p_4}=\{u^{(6)}, u^{(10)}\}$, $\mathcal{U}_{p_5}=\{u^{(4)}\}$, and $\mathcal{U}_{p_6}=\{u^{(3)}, u^{(8)}, u^{(9)}\}$. Hence, from lines 3--5 we get $u^{(1)}_s=h_2+h_3=0.2$, $u^{(3)}_s=h_1=0.3$, $u^{(4)}_s=h_6=0.1$
, $u^{(5)}_s=h_4+h_5=0.4$, $u^{(6)}_s=h_4=0.05$, $u^{(7)}_s=h_6=0.1$, $u^{(8)}_s=h_1=0.3$, and all the rest $u^{(i)}_s=0$. Then in lines 6--10 we compute the coefficients $w_i$. For example consider $p_1$.
We compute $w_1=u^{(10)}_s+u^{(5)}_s+u^{(2)}_s+u^{(0)}_s=0.4$. Similarly, we compute $w_2=0.5$, $w_3=0.9$, $w_4=0.1$, $w_5=0.5$, and $w_6=0.4$. We observe that these are indeed the correct coefficients in the inequality $h^\top Ax\leq 1$. For instance, the coefficient of $x_1$ is $h_1+h_2+h_3+h_4=0.4=w_1$. Overall, we have $h^\top A x= 0.4\cdot x_1+0.5\cdot x_2 + 0.9\cdot x_3 + 0.1\cdot x_4 + 0.5\cdot x_5 + 0.4\cdot x_6$ and we correctly identified all coefficients. Then in line 13 among the blue points we choose the smallest weight, $W_1=w_1=0.4$ and among the red points we choose the second smallest weight $W_2=w_6=0.4$. Hence, $P_1=\{p_1\}$ and $P_2=\{p_4, p_6\}$ and the algorithm sets $\bar{x}^\top=[1, 0, 0, 1, 0, 1]$.
Finally, in line 16 the algorithm computes $w_1+w_4+w_6=0.4+0.4+0.1=0.9<1$ so $\bar{x}$ is a feasible solution.
}

\paragraph{Running time.}
For each new probability vector $h$ we construct $\mathcal{T}$ in $O(n\log n)$ time. For each point $p_i$ we find $\mathcal{U}_{p_i}$ in $O(\log n + \eps^{-d})$ time. Furthermore, the height of $\mathcal{T}$ is $O(\log n)$ so for each point $p_i$ we need additional $O(\log n)$ time to compute $w_i$.
After computing the weights, we find the smallest $k_j$ of them of each color in linear time. Overall, $Q_O=O(n\log n + n\eps^{-d})$.


\mparagraph{The $\mathsf{Update}(\cdot)$ procedure}
\begin{algorithm}[t]
\ansmeta{
    \caption{\textsf{Update}$(P, \bar{x},\gamma, \eps)$}
    \label{alg:alg2}
    $\mathcal{T}\gets \text{ BBD tree on } P$\;
    \lForEach{$u\in \mathcal{T}$} {$u_w\gets 0$}
    \ForEach{$p_i\in P$ such that $\bar{x}_i>0$}{
         $v\gets \text{ leaf node of }\mathcal{T} \text{ such that } p_i\in \square_v\cap P$\;
        \ForEach{$u$ in the path from $v$ to the root of $\mathcal{T}$}{
            $u_w\gets u_w+\bar{x}_i$\;
        }
    }
    \ForEach{$p_\ell\in P$}{
        $\mathcal{U}_{p_\ell}\gets \mathcal{T}(p_\ell, \frac{\gamma}{2(1+\eps)})$\;
        $R_\ell=\sum_{u\in \mathcal{U}_{p_\ell}}u_w$\;
        $\delta_\ell=\frac{1}{k}(R_\ell-1)$\;
    }
    Update $h$ using $\delta_\ell$'s as described in~\cite{arora2012multiplicative}\;
    \Return $h$\;
}
\end{algorithm}
Next, we describe how we can update $h$ efficiently at the beginning of each iteration.
Let $\bar{x}$ be the solution of the oracle in the previous iteration.
Let $\delta_\ell=\frac{1}{k}(A_\ell\bar{x}-b_\ell)=\frac{1}{k}(A_\ell\bar{x}-1)$, where $A_\ell$ is the $\ell$-th row of $A$ ($\ell$-th constraint in~\eqref{eq:new}).
In~\cite{arora2012multiplicative} they update each $h[\ell]$ in constant time after computing $\delta_\ell$. In our case, if we try to calculate all $\delta_\ell$'s with a trivial way we would need $\Omega(nk)$ time leading to a $\Omega(nk^2)$ time overall algorithm. In Algorithm~\ref{alg:alg2} we show a faster way to calculate all $\delta_\ell$'s.
Our $\mathsf{Oracle}$ method sets $k$ variables $\bar{x}$ to $1$.
For each $p_\ell\in P$ the goal is to find $A_\ell\bar{x}=\sum_{p_i\in S_{p_\ell}^\eps}\bar{x}_i=\sum_{p_i\in S_{p_\ell}^\eps, \bar{x}_i> 0}\bar{x}_i$.
We modify $\mathcal{T}$ as follows. For each node $u\in \mathcal{T}$ we define the variable $u_w=0$. For each $p_i$ with $\bar{x}_i>0$ we start from the leaf containing $p_i$, and we visit the tree bottom up until we reach the root. For each node $u$ we encounter, we update $u_w=u_w+\bar{x}_i$. 
After the modification of $\mathcal{T}$, for each constraint/point $p_\ell$ we run the query $\mathcal{T}(p_\ell,\frac{\gamma}{2(1+\eps)})$ and we get the set of canonical nodes $\mathcal{U}_{p_\ell}$.
We compute $R_\ell\leftarrow\sum_{u\in \mathcal{U}_{p_\ell}}u_w=\sum_{p_i\in S_{p_\ell}^\eps}\bar{x}_i=A_\ell\bar{x}$. We return $\delta_\ell=\frac{1}{k}(R_\ell-1)$.
The correctness follows by observing that the coefficient of $p_i$'s variable in the $\ell$-th row of $A$ is $1$ if and only if $p_i\in S_{p_\ell}^{\eps}$.
We need $O(n)$ time to compute all values $v_w$ by traversing the tree $\mathcal{T}$ bottom up. Then for each $p_\ell$ we run a range query on $\mathcal{T}$ so we need $O(\log n + \eps^{-d})$. Overall, $Q_U=O(n\log n + n\eps^{-d})$.

\begin{algorithm}[t]
\ansmeta{
    \caption{\textsf{Round}$(P,\eps, \hat{x})$}
    \label{alg:algRound}
    $F\gets P$\;
    $\Xi\gets $ BBD tree for sampling on $F$\;
    \lForEach{$u\in\Xi$}{
        $u_b\gets 1$
    }
    \While{$F\neq \emptyset$}{
        $p_i\gets \Xi.\text{sample}()$, \hspace{0.25em} 
        $F\gets F\setminus\{p_i\}$\;
        $\mathcal{U}_{p_i}\gets \Xi(p_i,\frac{\gamma}{2(1+\eps)})$\;
        \If{$u_b==1$ for every $u\in \mathcal{U}_{p_i}$}{
            $S\gets S\cup \{p_i\}$\;
        }
        $v\gets \text{ leaf node of }\Xi \text{ such that } p_i\in \square_v\cap P$\;
        \lForEach{$u$ in the path from $v$ to the root of $\Xi$}{
            $u_b\gets 0$
        }
    }
    \Return $S$\;
}
\end{algorithm}

\ansmeta{
\eparagraph{Example (cont)}
We show the execution of the $\mathsf{Update}$ procedure using our example in Figure~\ref{fig:grid}, assuming that $\bar{x}^\top=[1, 0, 1, 0, 1, 0]$. By definition, $\delta_1=1$, $\delta_2=1$, $\delta_3=2$, $\delta_4=0$, $\delta_5=1$, and $\delta_6=0$. In lines 4--6, for $p_1$ we start from $u^{(10)}$ and we traverse the tree bottom-up. We set $u^{(10)}_w=u^{(5)}_w=u^{(2)}_w=u^{(0)}_w=\bar{x}[1]=1$. After traversing all points we have, $u^{(10)}_w=1$, $u^{(5)}_w=1$, $u^{(2)}_w=1$, $u^{(7)}_w=1$, $u^{(8)}_w=1$, $u^{(4)}_w=2$, $u^{(1)}_w=2$, $u^{(0)}_w=3$, and all the other nodes have weight $0$. Then in line 10, we compute $\delta_\ell$ for each $p_\ell$. For $p_1$, we get $\mathcal{U}_{p_1}=\{u^{(5)}, u^{(8)}\}$, so $R_1=u^{(5)}_w+u^{(8)}_w=2$, and $\delta_1=|2-1|=1$. Similarly, we compute the other $\delta_\ell$'s.
}

\mparagraph{The $\mathsf{Round}(\cdot)$ procedure}
\ansmeta{
The real vector $\hat{x}$ we get satisfies ($\mathsf{LP}$\ref{lp:2}) approximately. From the MWU method (see Theorem~\ref{thm:mutli-weights}) the Constraints in $\mathcal{P}$, (Constraints~\eqref{modeq1} and~\eqref{modeq3}) are satisfied exactly, however the Constraints~\eqref{eq:new} are satisfied approximately. In fact, it holds that
\begin{equation}
    \label{eq2Sol}
\sum_{p_i\in S_p^\eps} \hat{x}_i \leq 1+\eps,  \quad \quad \forall p\in P    
\end{equation}
}
We follow a modified version of the randomized rounding from~\cite{addanki2022improved} to round $\hat{x}$ and return a set $S\subseteq P$ as the solution to the \fairDiv\ problem.
Our rounding method has major differences from the rounding in~\cite{addanki2022improved} (also briefly described in Section~\ref{sec:prelim}) because,  i) the Constraints~\eqref{eq2} are quite different from the Constraints~\eqref{eq2Sol} since the latter are satisfied with an additive error $\eps$, and their sum is over the set $S_{p}^\eps$, and ii) the rounding technique in~\cite{addanki2022improved} is executed in quadratic time with respect to the number of points. We propose a near-linear time algorithm to execute the rounding.

Before we describe the actual rounding algorithm we describe a modified BBD tree that we are going to use to sample efficiently.
For every point $p_i\in P$, we define its weight $\hat{x}_i$.
Let $\Xi$ be a BBD tree constructed over a weighted set $P$. We modify $\Xi$ so that we can sample a point $p_i$ with probability $\frac{\hat{x}_i}{\sum_{j\in F}\hat{x}_j}$, where $F\subseteq P$ is a subset of $P$.
For each node $u$ of $\Xi$, we store the value $u_s=\sum_{p_\ell\in \square_u\cap P} \hat{x}_\ell$, i.e., sum of $\hat{x}_i$'s of all points stored in the subtree of $u$.
Initially $F=P$.
We sample as follows: Assume any node $u$ having two children $v, e$. We visit $v$ with probability $v_s/(v_s+e_s)$ and we visit $e$ with probability $e_s/(u_s+e_s)$. When we insert a point $p_i$ in $F$, we start from the leaf node that stores $p_i$ and we visit the tree bottom up updating the values $u_s\leftarrow u_s-\hat{x}_i$ of the nodes $u$ we traverse. In this way, we ''remove`` the weight of point $p_i$ from the tree and it is not considered in the next iterations of sampling. It is straightforward to see that this procedure guarantees that each point $p_i$ is selected with probability $\frac{\hat{x}_i}{\sum_{j\in F}\hat{x}_j}$.
In order to make sure that we do not select two nearby points, for every node
$u$ of $\Xi$ we also set a boolean variable $u_b=1$.
If $u_b=1$ it means that our algorithm has not sampled a point that lies in $\square_u\cap P$. If $u_b=0$ it means that we have already sampled a point in $\square_u\cap P$ in a previous iteration so we should not re-consider node $u$ to get a new sample.

Let $S=\emptyset$ be the set of points that we return for the \fairDiv\ problem. Let also $F=P$ represent the set of points that we can sample from, as described in the previous paragraph.
Using $\Xi$ we sample a point $p_i$ with probability $\frac{\hat{x}_i}{\sum_{p_\ell\in F}\hat{x}_j}$. We update $F\leftarrow F\setminus\{p_i\}$.
Next, we run a query $\Xi(p_i, \frac{\gamma}{2(1+\eps)})$
and we get a set of $O(\log n +\eps^{-d})$ canonical nodes $\mathcal{U}_{p_i}$.
If $u_b=1$ for every $u\in \mathcal{U}_{p_i}$, then we insert $p_i$ in $S$. Otherwise, we do not insert it in $S$, i.e., we have already sampled a point from $S_{p_i}^\eps$ in a previous iteration.
Next, starting from the leaf node that contains $p_i$ we traverse the tree bottom up until we reach the root node. For each node $v$ we encounter we set $v_q=0$.
After we sample all points in $P$, we return the set $S$.

\begin{lemma}
\label{lem:divfair}
The minimum pairwise distance in $S$ is $\frac{\gamma}{2(1+\eps)}$ and 
for each color $c_j\in C$ it holds that $\expect[|S(c_j)|]\geq \frac{k_j}{1+\eps}$.
\end{lemma}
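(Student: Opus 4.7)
The plan is to handle the two claims separately: first the geometric diversity guarantee, which follows from how the BBD tree marks ancestors, and then the fairness guarantee in expectation, which follows from an exponential-race interpretation of weighted sampling without replacement combined with the approximate feasibility of $\hat x$ for \LPtwo.

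\textbf{Diversity claim.} I will argue by contradiction. Suppose two points $p_i,p_j\in S$ satisfy $\|p_i-p_j\|<\tfrac{\gamma}{2(1+\eps)}$, and assume WLOG that $p_i$ was sampled before $p_j$. By the covering guarantee of the BBD tree, $\mathcal{B}(p_j,\tfrac{\gamma}{2(1+\eps)})\subseteq\bigcup_{u\in\mathcal{U}_{p_j}}\square_u$, so $p_i$ lies in $\square_u$ for some $u\in\mathcal{U}_{p_j}$. Hence $u$ is an ancestor of the leaf of $\Xi$ storing $p_i$. But after $p_i$ was sampled the algorithm traversed that very root-to-leaf path and reset $u_b\gets 0$ on every node along it. Therefore when $p_j$ was later sampled the condition ``$u_b=1$ for every $u\in\mathcal{U}_{p_j}$'' failed, and $p_j$ would not have been added to $S$, a contradiction. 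So every pair in $S$ is at distance at least $\tfrac{\gamma}{2(1+\eps)}$.

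\textbf{Fairness claim.} The central observation is that $p_i$ is added to $S$ if and only if, among the points of $S_{p_i}^{\eps}$, $p_i$ is the one sampled first by the weighted sampling procedure. The ``only if'' direction follows from the same BBD-tree argument as above applied to every earlier-sampled member of $S_{p_i}^{\eps}$; the ``if'' direction follows because if $p_i$ is sampled before every other element of $S_{p_i}^{\eps}$, then none of those elements has yet marked any ancestor of $p_i$'s leaf, so the test at $p_i$ passes. Next, I will use the fact that the sampling scheme, which repeatedly draws an element with probability proportional to $\hat x_\ell$ on the surviving set $F$, is equivalent to the standard exponential-race sampler (attach to each $p_\ell$ an independent exponential with rate $\hat x_\ell$ and order by increasing value). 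Under that sampler, the probability that a fixed $p_i$ minimizes the exponential among a fixed set $T\subseteq P$ equals $\hat x_i\big/\sum_{\ell:\, p_\ell\in T}\hat x_\ell$.

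Applying this with $T=S_{p_i}^{\eps}$ gives
\begin{equation*}
\Pr[p_i\in S]\;=\;\frac{\hat x_i}{\sum_{p_\ell\in S_{p_i}^{\eps}}\hat x_\ell}\;\geq\;\frac{\hat x_i}{1+\eps},
\end{equation*}
where the inequality is exactly \eqref{eq2Sol}, the approximate satisfaction of Constraints~\eqref{eq:new} guaranteed by Theorem~\ref{thm:mutli-weights}. Summing over $p_i\in P(c_j)$ and using that Constraints~\eqref{modeq1} lie in $\mathcal{P}$ and are therefore satisfied exactly by $\hat x$,
\begin{equation*}
\expect[|S(c_j)|]\;=\;\sum_{p_i\in P(c_j)}\Pr[p_i\in S]\;\geq\;\frac{1}{1+\eps}\sum_{p_i\in P(c_j)}\hat x_i\;\geq\;\frac{k_j}{1+\eps}.
\end{equation*}

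\textbf{Main obstacle.} The only step that needs care is justifying the closed-form probability $\hat x_i/\sum_{\ell\in T}\hat x_\ell$ for the event ``$p_i$ is first among $T$'' under our specific sequential reweighting sampler; this is where the exponential-race coupling is essential, since a naive per-step computation does not telescope. Once that equivalence is invoked, the rest reduces to plugging in the two LP constraints guaranteed by the MWU solution.
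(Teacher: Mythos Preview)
Your proof is correct and follows essentially the same approach as the paper: the diversity argument is identical, and for fairness both reduce to showing $\Pr[p_i\in S]=\hat x_i\big/\sum_{p_\ell\in S_{p_i}^{\eps}}\hat x_\ell$ via the characterization that $p_i\in S$ iff $p_i$ is the first point of $S_{p_i}^{\eps}$ to be sampled, then summing and invoking Constraints~\eqref{modeq1} and~\eqref{eq2Sol}. The only minor difference is how that probability is justified: you use the exponential-race coupling for weighted sampling without replacement, whereas the paper conditions directly on the step $t$ at which the first sample from $S_{p_\ell}^{\eps}$ is drawn and observes that $\Pr[\sigma(t)=p_\ell\mid V_t]=\hat x_\ell/\sum_{p_i\in S_{p_\ell}^{\eps}}\hat x_i$ is independent of $t$.
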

\begin{proof}
First, we show that the minimum pairwise distance in $S$ is $\frac{\gamma}{2(1+\eps)}$. Let $p_i, p_\ell$ be a pair of points with distance less than $\frac{\gamma}{2(1+\eps)}$. Without loss of generality assume that $p_i$ was added to $S$ first. Assume that $p_\ell$ is selected as a sample in a subsequent iteration. Since $||p_i-p_\ell||< \frac{\gamma}{2(1+\eps)}$, by definition, there exists a unique node $u\in \mathcal{U}_{p_\ell}$ such that $p_i\in \square_u\cap P$. Hence, $p_i$ lies in a leaf node of the subtree rooted at $u$. Since $p_i$ has already been selected we have that $u_b=0$ (because $u$ is an ancestor of the leaf node of $p_i$), so $p_\ell$ is not inserted in $S$.

Next, we argue about the fairness requirement.
Let $p_\ell$ be a point with $\hat{x}_\ell>0$. Let $V_t$ be the event that the first point included in $S$ from the set $S_{p_\ell}^{\eps}$ is the point from the $t$-th step.
Then,
\begin{align*}
\Pr[p_\ell\in S]&=\sum_{t=1}^{n}\Pr[\sigma(t)=p_\ell\mid V_t]\Pr[V_t] = \sum_{t=1}^{n}\frac{\hat{x}_\ell}{\sum_{p_i\in S_{p_\ell}^{\eps}}\hat{x}_i}\Pr[V_t]\\&=\frac{\hat{x}_\ell}{\sum_{p_i\in S_{p_\ell}^{\eps}}\hat{x}_i}\sum_{t=1}^{n}\Pr[V_t]
=\frac{\hat{x}_\ell}{\sum_{p_i\in S_{p_\ell}^{\eps}}\hat{x}_i}
\geq \frac{\hat{x}_\ell}{1+\eps}.
\end{align*}

The last inequality holds because $\sum_{p_i\in S_{p_\ell}^\eps}\hat{x}_i\leq 1+\eps$ from Constraints~\eqref{eq2Sol}. For $c_j\in C$, $\expect[|S(c_j)|]\geq \sum_{p_i\in P(c_j)} \frac{\hat{x}_i}{1+\eps}\geq \frac{k_j}{1+\eps}$.
\end{proof}

The rounding is executed in $Q_R=O(n(\log n +\eps^{-d}))$ time because $\Xi$ has $O(\log n)$ height and each query takes $O(\log n + \eps^{-d})$ time.

\ansmeta{
\eparagraph{Example (cont)}
We show the execution of the $\mathsf{Round}$ procedure using our example in Figure~\ref{fig:grid}.
Let $\hat{x}^\top=[.2, .2, .05, .15, .25, .15]$. Initially, for every node $u^{(i)}$ in Figure~\ref{fig:grid} (Middle), we have $u^{(i)}_b=1$. In the while loop (lines 4--10) we first sample a point from $P$. Let $p_2$ be the first point we sample. We get $\mathcal{U}_{p_2}=\{u^{(5)}, u^{(3)}, u^{(8)}\}$. All nodes in $\mathcal{U}_{p_2}$ have weight $1$ so in line 8, we add $p_2$ in $S$. Then, starting from $u^{(9)}$, we set all the weights of the nodes to $0$ until we reach the root. Hence, we set $u^{(9)}_b=u^{(5)}_b=u^{(2)}_b=u^{(0)}_b=1$. In the next iteration of the while loop, assume that we sample the point $p_6$. We have $\mathcal{U}_{p_6}=\{u^{(3)}, u^{(8)}, u^{(9)}\}$. We observe that $u^{(9)}_b=0$ so we do not add $p_6$ in $S$. Intuitively, $p_6$ is very close to $p_2$ that we have already added. Starting from $u^{(3)}$, we set $u^{(3)}_b=u^{(1)}_b=0$. In the next iteration, assume that we sample $p_5$. We get $\mathcal{U}_{p_5}=\{u^{(4)}\}$. In line 7, we observe that $u^{(4)}_b=1$ so we add $p_5$ in $S$. We also set $u^{(7)}_b=u^{(4)}_b=0$  (all the other nodes above $u^{(4)}$ have already weight equal to $0$). Next, assume that we sample $p_3$. We have $\mathcal{U}_{p_3}=\{u^{(5)}, u^{(1)}\}$. However, $u^{(5)}_b=u^{(1)}_b=0$ so we do not add $p_3$ in $S$. We also set $u^{(8)}_b=0$.
Next, assume that we sample $p_4$. We have $\mathcal{U}_{p_4}=\{u^{(6)}, u^{(10)}\}$ with $u^{(6)}_b=u^{(10)}_b=1$. So we add $p_4$ in $S$ and we set $u^{(6)}_b=0$.
In the final iteration, we sample $p_1$. We get $\mathcal{U}_{p_1}=\{u^{(5)}, u^{(8)}\}$ and we observe that $u^{(5)}_b=0$ so we do not add $p_1$ in $S$. We set $u^{(10)}_b=0$.
Our algorithm returns $S=\{p_2, p_5, p_4\}$.
}

\ansmeta{
\mparagraph{Compute the set $\Gamma$}
We note that so far we assumed that $\gamma$ is any distance and we tried to find a solution $S$ with $\diversity(S)\geq \frac{\gamma}{2(1+\eps)}$. In order to find a good approximation of the optimum  diversity $\gamma^*$, we use the notion of the Well Separated Pair Decomposition (WSPD)~\cite{callahan1995decomposition, har2005fast} briefly described in Section~\ref{sec:prelim}. Let $\Gamma$ be the sorted array of $O(n/\eps^d)$ distances from WSPD.
Any pairwise distance in $P$ can be approximated by a distance in the array $\Gamma$ within a factor $1+\eps$, hence, we might not get the optimum $\gamma^*$ exactly. In the worst case, we might get a smaller value which is at least $\frac{\gamma^*}{1+\eps}$.
We need $O(n\eps^{-d}\log n)$ time to compute and sort the WSPD distances, so $Q_{\Gamma}=O(n\eps^{-d}\log n)$.
}

Putting everything together,
we get the next theorem.

\begin{theorem}
\label{thm:main}
Let $C$ be a set of $m$ colors, $P$ be a set of $n$ points in $\Re^d$ for a constant $d$, where each point $p\in P$ is associated with a color $c(p)\in C$ and let $k_1, \ldots, k_m$ be $m$ integer parameters with $k_1+\ldots +k_m=k$.
Let $\eps\in (0,1)$ be a constant.
There exists an algorithm for the \fairDiv\ problem that returns a set $S\subseteq P$ such that $\expect[|S(c_j)|]\geq \frac{k_j}{1+\eps}$ for each color $c_j\in C$, and $\diversity(S)\geq \frac{\gamma^*}{2(1+\eps)}$ in $O(nk\log^3 n)$ time and $O(n)$ space.
\end{theorem}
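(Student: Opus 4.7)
The plan is to stitch together the component analyses already developed in this section.

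First, I argue feasibility of \LPtwo\ at a value of $\gamma$ close to $\gamma^*$. For any $\gamma \leq \gamma^*$, the indicator vector of the optimum set $S^*$ is feasible for \LPtwo: Constraints~\eqref{modeq1} hold by definition of $S^*$, and for Constraints~\eqref{eq:new} observe that $S_p^\eps \subseteq \mathcal{B}(p, \gamma/2)$ by the BBD-tree guarantee, so any two points of $S_p^\eps$ lie within distance $\gamma \leq \gamma^*$ of each other and hence $S_p^\eps$ can contain at most one point of $S^*$. The WSPD construction guarantees that $\Gamma$ contains some $\gamma$ with $\gamma^*/(1+\eps) \leq \gamma \leq \gamma^*$, so the binary search in Algorithm~\ref{alg:alg0} settles on some $\hat{\gamma} \geq \gamma^*/(1+\eps)$ for which the MWU method succeeds. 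Applying Theorem~\ref{thm:mutli-weights} with the $k$-\textsf{ORACLE} (giving $\rho = k$) yields a vector $\hat{x}$ satisfying $\mathcal{P}$ exactly and $\sum_{p_i\in S_p^\eps} \hat{x}_i \leq 1+\eps$ for every $p \in P$, which is precisely Equation~\eqref{eq2Sol}.

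Feeding $\hat{x}$ into $\mathsf{Round}$, Lemma~\ref{lem:divfair} yields a set $S$ with $\diversity(S) \geq \hat{\gamma}/(2(1+\eps)) \geq \gamma^*/(2(1+\eps)^2)$ and $\expect[|S(c_j)|] \geq k_j/(1+\eps)$ for every $c_j \in C$. Rescaling $\eps$ by a small constant absorbs the $(1+\eps)^2$ into a single $(1+\eps)$ factor and yields the stated guarantees.

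For the running time, I combine the per-component bounds already established: $Q_\Gamma = O(n\log n)$ for the WSPD, $Q_O = Q_U = O(n\log n)$ per MWU iteration, $Q_R = O(n\log n)$ for the rounding, $T = O(\eps^{-2} k \log n)$ MWU iterations per binary-search step, and $O(\log|\Gamma|) = O(\log n)$ binary-search steps. The product is $O(nk\log^3 n)$ when $\eps$ and $d$ are treated as constants. The space bound is $O(n)$: the BBD tree, the weight vector $h$, the accumulator $\hat{x}$, and all per-node arrays are linear in $n$.

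The main obstacle is a bookkeeping one: three distinct sources of $(1+\eps)$ loss — the WSPD discretization of $\Gamma$, the additive $\eps$ slack in the MWU output that propagates into Equation~\eqref{eq2Sol}, and the ball inflation from $\gamma/(2(1+\eps))$ to $\gamma/2$ baked into the BBD-tree canonical sets $S_p^\eps$ — must compose into a single $(1+\eps)$ factor in the final approximation. This is handled by choosing each internal $\eps$ a constant multiple smaller than the target and then relabeling.
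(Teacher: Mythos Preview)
Your proposal is correct and mirrors the paper's assembly of Theorem~\ref{thm:main}: the paper does not give a standalone proof but instead combines the correctness of $\mathsf{Oracle}$ (yielding $\rho=k$), $\mathsf{Update}$, and $\mathsf{Round}$ (via Lemma~\ref{lem:divfair}) with the WSPD-based $\Gamma$ and the running-time formula $O(Q_\Gamma + T(Q_O+Q_U+n)\log|\Gamma| + Q_R)$, exactly as you do. Your explicit feasibility argument for \LPtwo\ at $\gamma\le\gamma^*$ and your remark on composing the three $(1+\eps)$ losses are slightly more detailed than the paper's presentation, but the structure is identical.
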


\subsection{Diversity with high probability}
\label{sub:prob}
In this subsection, we describe how to use the result of the previous subsection to get a solution where the fairness constraints hold with probability at least $1-\delta$ for any parameter $\delta\in(0,1)$.
We get the intuition from~\cite{addanki2022improved}, however the implementation of our algorithm has significant changes to achieve the near-linear time execution.

We transform the near-optimal fractional solution $\hat{x}$ we derived from the MWU method to a new fractional solution $\hat{y}\in \Re^n$.
Recall that $\mathcal{T}$ is a BBD tree constructed on $P$.
Recall that $\mathcal{U}(p,\beta)$ is the set of canonical nodes returned by query $\mathcal{T}(p, \beta)$.
Let $S_p(\beta)\subseteq P$ be the set of points covered by the rectangles associated with the canonical nodes $\mathcal{U}(p,\beta)$.
Notice that $S_p^{\eps}=S_p(\frac{\gamma}{2(1+\eps)})$.


The feasible solution $\hat{y}$ we will construct is the solution for the following set of (non-linear) constraints.

\begin{align}
\label{eq4.1}\sum_{p_i\in P(c_j)} y_i&\geq k_j,  \quad\quad \forall j\in [m]\\
\label{eq4.2} \sum_{p_i\in S_p\left(\frac{\gamma}{6(1+\eps)^2}\right)} y_i& \leq 1+\eps,  \quad \quad \forall p\in P\\
\label{eq4.3}y_{i} &\geq 0, \quad \quad \forall p_i\in P\\
\label{eq4.4}y_i>0\text{ and }y_\ell&>0 \Rightarrow ||p_i-p_{\ell}||\geq\frac{\gamma}{3(1+\eps)^2}, \\
\nonumber&\quad\quad\quad\quad\forall p_i,p_\ell\in P(c_j), \forall j\in[m]
\end{align}


Next, we describe the algorithm to get a solution $\hat{y}$ for the new set of constraints.
For each color $c_j\in C$ we construct a BBD tree $\mathcal{T}_j$ on $P(c_j)$. For a query $\mathcal{T}_j(p,\beta)$, we define $\mathcal{U}_j(p,\beta)$ and $S_{p,j}(\beta)$ as we had before, considering only the points of color $c_j$. Each point $p_i$ in $\mathcal{T}_j$ is associated with its weight $\hat{x}_i$. In each node $u$ of every tree $\mathcal{T}_j$ we store the sum $u.s$ of the weights of the points stored in the subrtee rooted at $u$, i.e., $u_s=\sum_{p_i\in \square_u\cap P(c_j)}\hat{x}_i$.
Given a query ball $\mathcal{B}(p,r)$ we can run a weighted range sum query in $\mathcal{T}_j$ and return $M_j(p,r)=\sum_{p_i\in S_{p,j}(r)}\hat{x}_i$
in $O(\log n +\eps^{-d})$ time.
If $u.s>0$, then we call node $u$ \emph{active}. Otherwise, if $u_s=0$, node $u$ is \emph{inactive}.

For each point $p_i$ with $\hat{x}_i>0$ and no inactive ancestor, we perform the following steps. Let $c_j$ be the color of $p_i$.
We run a sum query on $\mathcal{T}_j$ and we get  $\hat{y}_i=M_j(p_i,\frac{\gamma}{3(1+\eps)^2})$. 
During the search procedure if we visit a node $u$ with $u.s=0$, we stop traversing the subtree rooted at $u$. For each node $u\in \mathcal{U}_j(p_i,\frac{\gamma}{3(1+\eps)^2})$, we traverse the tree from $u$ to the root of the tree and for each node $v$ we visit, we update $v.s\leftarrow v.s-u.s$. Then we set $u.s=0$.
In the end of this process, if there are $\hat{y}_i$ values that are not defined, we set $\hat{y}_i=0$.

\mparagraph{Correctness}
We show that our method finds a solution $\hat{y}$ that satisfies the new constraints.
Constraints~\eqref{eq4.1} $\sum_{p_i\in P(c_j)} \hat{y}_i\geq k_j$ are satisfied because for each color $c_j\in C$, $\sum_{p_i\in P(c_j)} \hat{y}_i = \sum_{p_i\in P(c_j)} \hat{x}_i$.
This equality holds because when we find  node $u$ as canonical node for a query $\mathcal{T}_j(p, \frac{\gamma}{3(1+\eps)^2})$, we set $u$ to an inactive node so no point $q\in \square_u\cap P(c_j)$ contributes to another another variable $\hat{y}_\ell$ in the next iterations.
For Constraints~\eqref{eq4.2}, for every point $p\in P$, we have,
\begin{align*}
\sum\limits_{p_i\in S_p(\frac{\gamma}{6(1+\eps)^2})} \hat{y}_i&\leq
\sum_{p_\ell\in P: \exists p_i\in S_p(\frac{\gamma}{6(1+\eps)^2}), p_\ell\in S_{p_i}(\frac{\gamma}{3(1+\eps)^2})}\hat{x}_\ell
\\&
\leq \sum_{p_\ell\in P: ||p-p_\ell||\leq \frac{\gamma}{2(1+\eps)}}\hat{x}_\ell 
\leq \sum\limits_{p_\ell\in S_{p}(\frac{\gamma}{2(1+\eps)})}\hat{x}_\ell\\&=\sum_{p_\ell\in S_p^\eps} \hat{x}_\ell\leq 1+\eps.
\end{align*}
Constraints~\eqref{eq4.3} are satisfied by definition. Finally, for Constraints~\eqref{eq4.4}, notice that when our algorithm sets some positive value $\hat{y}_i$ then all canonical nodes in $\mathcal{U}_j(p_i,\frac{\gamma}{3(1+\eps)^2})$ are set to inactive. Hence, any point $p_\ell$ with $||p_i-p_\ell||< \frac{\gamma}{3(1+\eps)^2}$, has at least one inactive ancestor and $\hat{y}_i$ is set to $0$.

\mparagraph{Running time}
We construct all BBD trees $\mathcal{T}_j$ in $O(n\log n)$ time.
For each point $p_i$ we check if there is no inactive ancestor in $O(\log n)$ time since $\mathcal{T}_j$ has depth $O(\log n)$. Each weighted range sum query on $\mathcal{T}_j$ takes $O(\log n + \eps^{-d})$ time.
The $u.s$ variables are also updated in $O(\log n + \eps^{-d})$ time.
Overall, given the values $\hat{x}$, we can compute $\hat{y}$ in $O(n(\log n + \eps^{-d}))$ time.


\paragraph{Rounding}
The rounding procedure is similar to the rounding in Subsection~\ref{sub:exp}. Again, we construct a BBD tree $\Xi$ to sample points from $P$. Each time we choose a point $p_i$ we check if another point has already been selected in a previous iteration in the area of $\Xi(p_i,\frac{\gamma}{6(1+\eps)^3})$\footnote{For technical reasons we set the denominator to $(1+\eps)^3$ instead of $(1+\eps)^2$.}.
Using the proof of Lemma~\ref{lem:divfair}, we get that for any pair $p_i, p_j\in S$ in the set we return, we have $\min_{p,q\in S}||p-q||_2\geq \frac{\gamma}{6(1+\eps)^3}$. Furthermore, let $Y_i$ be the random variable which is $1$ if $p_i\in S$ and $0$ otherwise. Using the same proof, we get $\expect\left[\sum_{p_i\in S(c_j)}Y_i\right]\geq \frac{k_j}{1+\eps}$ for each color $c_j$. By definition, all variables $Y_i$ are independent, so we can apply the Chernoff bound to get a bound on the number of points of each color with high probability. In particular, if $k_j\geq 3\frac{1+\eps}{\eps^2}\log(2m)$, we get $\Pr[\sum_{p_i\in S(c_j)}Y_i\leq (1-\eps)\frac{k_j}{1+\eps}]\leq \frac{1}{2m}$. If we apply the union bound we have that $\sum_{p_i\in S(c_j)}Y_i\geq (1-\eps)\frac{k_j}{1+\eps}$ for all colors $c_j\in C$ with probability at least $1/2$. If we repeat the process $\log \frac{1}{\delta}$ iterations, then with probability at least $1-\delta$ we find a solution $S$ with $\sum_{p_i\in S(c_j)}Y_i\geq (1-\eps)\frac{k_j}{1+\eps}$ for all colors $c_j\in C$. In each repetition we spend $O(n(\log n + \eps^{-d}))$ as in Subsection~\ref{sub:exp}. The total running time of the rounding procedure is $O(n\log(\frac{1}{\delta})(\log n + \eps^{-d}))$.

Finally, as we had in Subsection~\ref{sub:exp}, recall that we run a binary search on the WSPD set $L$, finding a value $\gamma$ such that $\gamma^*\geq \gamma\geq \gamma^*/(1+\eps)$.

Putting everything together, and setting $\eps\leftarrow \eps/6$,\footnote{Notice that $\frac{\gamma}{6(1+\eps/6)^4}\geq \frac{\gamma}{6(1+\eps)}$ and $\frac{1-\eps/6}{1+\eps/6}k_j\geq \frac{k_j}{1+\eps}$ for $\eps\in(0,1)$.} we get the next theorem.

\begin{theorem}
\label{thm:main2}
Let $C$ be a set of $m$ colors, $P$ be a set of $n$ points in $\Re^d$ for a constant dimension $d$, where each point $p\in P$ is associated with a color $c(p)\in C$ and let $k_1, \ldots, k_m$ be $m$ integer parameters with $k_1+\ldots +k_m=k$ and $k_j\geq 3(1+\eps)\eps^{-2}\log(2m)$ for each $c_j\in  C$. Let $\eps$ be a constant parameter and $\delta$ be a parameter. There exists an algorithm that returns a set $S\subseteq P$ such that $|S(c_j)|\geq \frac{k_j}{1+\eps}$ for each color $c_j\in C$ with probability at least $1-\delta$, and $\diversity(S)\geq \frac{\gamma}{6(1+\eps)}$ in 
$O\left(nk\log^3 n + n\log(\frac{1}{\delta})\log n\right)$ time and $O(n)$ space.
\end{theorem}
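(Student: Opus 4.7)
The plan is to invoke the three-stage pipeline set up just before the theorem: (1) compute $\hat{x}$ by the MWU-based algorithm of Subsection~\ref{sub:exp}; (2) transform $\hat{x}$ into a more structured fractional solution $\hat{y}$ satisfying Constraints~\eqref{eq4.1}--\eqref{eq4.4} using per-color BBD trees; and (3) round $\hat{y}$ by the sampling procedure of Subsection~\ref{sub:exp} with blocking radius $\frac{\gamma}{6(1+\eps)^3}$, then boost the success probability by repetition. Theorem~\ref{thm:main} yields $\hat{x}$ in $O(nk\log^3 n)$ time and $O(n)$ space. For the transformation, I would execute the procedure described above the theorem: for each $p_i$ with $\hat{x}_i>0$ and no inactive ancestor in $\mathcal{T}_{c(p_i)}$, set $\hat{y}_i = M_{c(p_i)}(p_i, \frac{\gamma}{3(1+\eps)^2})$ and deactivate the corresponding canonical nodes. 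Correctness of the four new constraints is already argued in the text: \eqref{eq4.1} holds because deactivation forces each weight $\hat{x}_\ell$ to contribute to exactly one $\hat{y}_i$ within its color; \eqref{eq4.2} follows from the triangle-inequality chain bounding the sum over $S_p(\frac{\gamma}{6(1+\eps)^2})$ by $\sum_{p_\ell \in S_p^{\eps}} \hat{x}_\ell \leq 1+\eps$; \eqref{eq4.3} is trivial; and \eqref{eq4.4} is immediate from the deactivation mechanism. This step runs in $O(n(\log n + \eps^{-d}))$ time.

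The key new ingredient is concentration in the rounding phase. Applying the rounding of Lemma~\ref{lem:divfair} to $\hat{y}$ (with radius $\frac{\gamma}{6(1+\eps)^3}$) gives, by the same analysis, $\diversity(S) \geq \frac{\gamma}{6(1+\eps)^3}$ and $\expect[Y_i] \geq \frac{\hat{y}_i}{1+\eps}$ where $Y_i = \mathbf{1}[p_i \in S]$. The hard part is to upgrade the expectation bound to a high-probability bound via Chernoff, which requires independence of the $Y_i$ within each color. This is exactly what Constraint~\eqref{eq4.4} buys us: for any fixed color $c_j$, all points with $\hat{y}_i>0$ are at pairwise distance at least $\frac{\gamma}{3(1+\eps)^2}$, so their blocking balls of radius $\frac{\gamma}{6(1+\eps)^3}$ are pairwise disjoint and the sampling decisions for distinct such points become mutually independent. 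A standard Chernoff bound together with $k_j \geq 3(1+\eps)\eps^{-2}\log(2m)$ then gives $\Pr\!\left[|S(c_j)| < (1-\eps)\tfrac{k_j}{1+\eps}\right] \leq \frac{1}{2m}$, and a union bound over the $m$ colors yields a one-shot success probability of at least $1/2$.

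To close, I would repeat the rounding step $O(\log(1/\delta))$ times, returning the first iteration whose output satisfies all fairness constraints, which boosts the overall success probability to $1-\delta$. Each rounding costs $O(n(\log n + \eps^{-d}))$, so the total running time is $O(nk\log^3 n + n\log(1/\delta)(\log n + \eps^{-d}))$, which simplifies to $O(nk\log^3 n + n\log(1/\delta)\log n)$ for constant $\eps$ and $d$; the space usage remains $O(n)$ since all auxiliary structures are linear-sized BBD trees reused across iterations. Finally, the substitution $\eps \leftarrow \eps/6$ converts the diversity factor $\frac{\gamma}{6(1+\eps)^3}$ into the claimed $\frac{\gamma}{6(1+\eps)}$ while leaving the fairness guarantee in the stated form. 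I expect the cleanest articulation of the independence argument for the $Y_i$ (and the careful bookkeeping of $(1+\eps)$-factors through the chain of radii $\frac{\gamma}{2(1+\eps)} \to \frac{\gamma}{3(1+\eps)^2} \to \frac{\gamma}{6(1+\eps)^2} \to \frac{\gamma}{6(1+\eps)^3}$) to be the only genuinely delicate part of the write-up.
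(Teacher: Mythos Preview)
Your proposal is correct and follows essentially the same route as the paper: obtain $\hat{x}$ via the MWU algorithm of Subsection~\ref{sub:exp}, transform it to $\hat{y}$ satisfying Constraints~\eqref{eq4.1}--\eqref{eq4.4} using the per-color BBD trees, round with blocking radius $\frac{\gamma}{6(1+\eps)^3}$, invoke Chernoff plus a union bound (using the same-color separation from~\eqref{eq4.4} to justify independence of the $Y_i$'s), and boost by $O(\log(1/\delta))$ repetitions before substituting $\eps\leftarrow\eps/6$. Your write-up actually makes the independence step more explicit than the paper, which simply asserts ``by definition, all variables $Y_i$ are independent''; otherwise the arguments are identical.
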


\section{Coreset}
\label{sec:coreset}
As shown in Section~\ref{sec:prelim},
in~\cite{addanki2022improved} they describe a $(1+\eps)$-coreset for the \fairDiv\ problem. In particular, using the Gonzalez algorithm~\cite{gonzalez1985clustering} for the $k$-center clustering problem, they get a set $G\subseteq P$ of $O(mk\eps^{-d})$ points in $O(nk\eps^{-d})$ time, such that $G$ contains a solution for the \fairDiv\ problem with diversity at least $\gamma^*/(1+\eps)$.
Their proof of correctness relies on the execution of Gonzalez algorithm choosing the furthest point from the set of centers that have been already selected in each iteration. Unfortunately, Gonzalez algorithm takes $O(nk)$ time. Ideally, we would like to use other faster constant approximation algorithms for the $k$-center problem in the Euclidean case.
In this section, we show a more general coreset construction. We show that if $k'=O(\eps^{-2d}k)$ points are chosen for each color $c_j\in C$ using any constant approximation algorithm for the $k'$-center clustering problem in the Euclidean space, then their union is a valid coreset for the \fairDiv\ problem.

Let \Alg\ be an $\alpha$-approximation for the $k'$-center clustering problem that runs in $O(T(n,k'))$ time. For every color $c_j\in C$, we run \Alg\ on $P(c_j)$ for $k'=O(\eps^{-2d}k)$ and we get the set of centers $G_j'$. We return the coreset $G'=\bigcup_{c_j\in  C}G_j'$. The coreset $G'$ is constructed in $O(\sum_{c_j\in C}T(|P(c_j)|,k'))$ time and has cardinality $|G'|=O(\eps^{-2d}km)$.

\begin{lemma}
\label{lem:newCoreset}
The set $G'$ is a $(1+\eps)$-coreset for the \fairDiv\ problem.
\end{lemma}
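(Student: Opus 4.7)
My plan is to exhibit a witness subset $\hat S \subseteq G'$ satisfying the coreset definition: $|\hat S(c_j)| \geq k_j$ for every $c_j$ and $\diversity(\hat S) \geq \gamma^*/(1+\eps)$. I will construct $\hat S$ by mapping an optimal fair solution $S^*$ into the centers. For each color $c_j$, let $r_j$ denote the clustering radius achieved by \Alg\ on $P(c_j)$; since \Alg\ is an $\alpha$-approximation for $k'$-center, $r_j \leq \alpha r_j^*$ where $r_j^*$ is the optimal $k'$-center radius on $P(c_j)$. Define $\phi_j: S^*(c_j) \to G_j'$ by sending each $p$ to its nearest center in $G_j'$, so $||p - \phi_j(p)|| \leq r_j$ for every $p \in P(c_j)$.

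The core of the argument is a triangle-inequality calculation with $\tau := \eps \gamma^*/(2(1+\eps))$. Provided $r_j \leq \tau$ for every color, the map $\phi_j$ restricted to $S^*(c_j)$ is automatically injective: two distinct points $p, q \in S^*(c_j)$ sharing an image would satisfy $||p-q|| \leq 2 r_j \leq 2\tau < \gamma^*$, contradicting their $\gamma^*$-separation in $S^*$. Moreover, for distinct $p, q \in S^*(c_j)$,
\[
||\phi_j(p) - \phi_j(q)|| \;\geq\; ||p - q|| - 2 r_j \;\geq\; \gamma^* - 2\tau \;=\; \frac{\gamma^*}{1+\eps}.
\]
Because the sets $G_j'$ are pairwise disjoint (their elements carry distinct colors), the union $\hat S := \bigcup_{c_j \in C} \phi_j(S^*(c_j)) \subseteq G'$ contains at least $k_j$ points of each color $c_j$ and has minimum pairwise distance $\geq \gamma^*/(1+\eps)$, certifying the $(1+\eps)$-coreset property.

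The main obstacle, and the reason for the $\eps^{-2d}$ exponent, is establishing the precondition $r_j \leq \tau$ for every $c_j$. Since $r_j \leq \alpha r_j^*$, it suffices to prove $r_j^* \leq \tau/\alpha$. My plan here is a two-scale packing argument in $\Re^d$. Assume for contradiction that $r_j^* > \tau/\alpha$; the standard Gonzalez-style characterization of $k'$-center then produces $k'+1$ points in $P(c_j)$ that are pairwise separated at the \emph{fine scale} $\Theta(\eps \gamma^*/\alpha)$. Grouping these fine-separated points into chunks of diameter $\gamma^*/(1+\eps)$ around each $p \in S^*(c_j)$, the Euclidean volumetric bound shows that each chunk holds at most $O((\alpha(1+\eps)/\eps)^{d})$ such points; since there are $k_j$ chunks, one obtains $k'+1 \leq O((\alpha/\eps)^{2d})\cdot k$ --- a contradiction once the hidden constant in $k' = \Theta(\eps^{-2d}k)$ is taken large enough. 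The two factors of $\eps^{-d}$ --- one from the coarse $\gamma^*$ scale of $S^*$ and one from the fine $\eps \gamma^*$ scale of the hypothetical violation --- explain the exponent $2d$. Carrying this argument out cleanly, in particular handling the fact that \Alg\ is a generic approximation algorithm rather than specifically Gonzalez, is the delicate step; everything else reduces to the elementary triangle inequality above.
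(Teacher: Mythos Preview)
Your argument has a genuine gap at the key step: the claim that $r_j^* \leq \tau/\alpha$ for every color is false in general, and the packing argument you sketch cannot establish it. The optimal $k'$-center radius $r_j^*$ on $P(c_j)$ is governed by the spread of \emph{all} of $P(c_j)$, whereas $\gamma^*$ can be forced small by a different color. Concretely, take $d=1$, $m=2$, $k_1=k_2=2$, $P(c_1)=\{0,1\}$, and $P(c_2)$ any dense set in $[100,L]$. Then $\gamma^*=1$ (pinned by $c_1$), so $\tau/\alpha=\Theta(\eps/\alpha)$ is a fixed constant, but $r_2^*=\Theta(L/k')$ is unbounded in $L$. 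Your packing step breaks precisely here: the $k'+1$ pairwise $\Theta(\eps\gamma^*/\alpha)$-separated points guaranteed by the Gonzalez characterization are arbitrary points of $P(c_j)$, and there is no reason they lie in diameter-$\gamma^*/(1+\eps)$ chunks around the points of $S^*(c_j)$ --- the set $S^*(c_j)$ consists of only $k_j$ selected points and is not a net for $P(c_j)$. (In the example above, with $S^*(c_2)=\{100,102\}$ and $G'_2$ an $\alpha$-approximate solution whose centers all lie in $[10^{10},L]$, your map $\phi_2$ even fails to be injective.)

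The paper sidesteps this by never comparing $r_j$ to $\gamma^*$. It instead bounds the covering radius $\mu(G'_j)$ against the intrinsic quantity $\mu_{\hat k}(P(c_j))$ for $\hat k=\Theta(\eps^{-d}k)$ via a grid argument, obtaining $\mu(G'_j)\le\frac{\eps}{4}\mu_{\hat k}$, and then maps the \emph{Gonzalez output} $O$ on $P(c_j)$ (run for $\hat k$ iterations) into $G'_j$, rather than mapping $S^*(c_j)$. Because every pairwise distance in $O$ is at least $\tfrac12\sigma_{\hat k}\ge\tfrac12\mu_{\hat k}$, this map preserves all distances within a $(1\pm\eps)$ factor, and the coreset conclusion follows by invoking the known fact (Theorem~5 of~\cite{addanki2022improved}) that the Gonzalez output is itself a $(1+\eps)$-coreset. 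The two $\eps^{-d}$ factors thus arise from this two-stage reduction (first to $\hat k=\eps^{-d}k$ Gonzalez points, then the grid from $\hat k$ to $k'=\eps^{-d}\hat k$), not from the chunk count you describe.
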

\begin{proof}
We fix a color $c_j\in C$. For any $k$, let $\mu_k$ be the optimum radius for the $k$-center clustering problem in $P(c_j)$.
Let $\hat{k}=O(\eps^{-d}k)$ and $k'=O(\eps^{-2d}k)$. For a subset $Q\subseteq P$, we define $\mu(Q)=\max_{p\in P(c_j)}\min_{q\in Q}||p-q||$, i.e., the value of the $|Q|$-center solution $Q$ on $P(c_j)$.
Let $\xi$ be a constant number. Let $D$ be a grid in $\Re^d$ such that each grid cell has side length $\frac{\eps\cdot \mu_{\hat{k}}}{4\alpha\cdot \xi}$. Let $A=\emptyset$. For every cell $g\in D$, if $|g\cap P(c_j)|>0$, then we insert a representative point $p_g\in D\cap P(c_j)$ in $A$. It is known~\cite{agarwal2002exact, abrahamsen2017range} that $|A|=O(\eps^{-2d}k)$ and for every point $p\in P(c_j)$ there exists a point $p_a\in A$ such that $\frac{\eps}{4\alpha}\mu_{\hat{k}}$. Hence, $\mu(A)\leq \frac{\eps}{4\alpha}\mu_{\hat{k}}$. By definition it also holds that $\mu_{k'}\leq \mu(A)$.
It follows that
$\mu_{k'}\leq \frac{\eps}{4\alpha}\mu_{\hat{k}}$.
We have,
    $\mu(G'_j)\leq \alpha\mu_{k'}\leq \frac{\eps}{4}\mu_{\hat{k}}$.
For any $k$, let $\sigma_k$ be the minimum pairwise distance of the optimum solution of the (unfair) $k$-MaxMin diversification problem in $P(c_j)$ (i.e., choose a set of $k$ points in $P(c_j)$ that maximize the minimum pairwise distance). It is always true that $\sigma_k\geq \mu_k$~\cite{tamir1991obnoxious}. It is also known that the Gonzalez algorithm for $k$ iterations returns a solution with diversity at least $\frac{1}{2}\sigma_k$~\cite{tamir1991obnoxious}.

Let $O=\{o_1,\ldots, o_{\hat{k}}\}$ be the list of $\hat{k}$ centers returned by the Gonzalez algorithm (in order) on $P(c_j)$. For any $o_i\in O$, let $p_i\in G_j'$ be its closest point in $G_j'$ and let $P_j'=\bigcup_{i\in[\hat{k}]} p_i$, and $P'=\bigcup_{c_j\in C}P_j'$.
We show that $P'\subseteq G'$ is a valid $(1+\eps)$-coreset.
Let $r=||o_i-o_{\ell}||$ for any pair $o_i, o_\ell\in O$ 
We have $||p_i-p_{\ell}||\geq r-||o_i-p_i||-||o_{\ell}-p_{\ell}||\geq r -\frac{\eps}{2}\mu_{\hat{k}}\geq r-\frac{\eps}{2}\sigma_{\hat{k}}\geq (1-\eps)r$.
The last inequality holds because $r\geq \frac{1}{2}\sigma_k$ (recall that Gonzalez algorithm returns a set of points with diversity at least $\frac{1}{2}\sigma_k$).
Similarly, $||p_i-p_{i+1}||\leq (1+\eps)r$. Hence, $(1-\eps)r\leq ||p_i-p_{i+1}||\leq (1+\eps)r$. All inequalities from Theorem 5 in~\cite{addanki2022improved} are satisfied within a $(1-\eps)$ (or $(1+\eps)$) factor, so by setting $\eps\leftarrow\eps/\zeta$, for a sufficiently large constant $\zeta$ that depends on $d$, we conclude that $G'$ is an $(1+\eps)$-coreset for \fairDiv.
\end{proof}


Overall, we state our new result.
\begin{theorem}
    \label{thm:coreset}
   In the Euclidean space, any constant approximation algorithm for the $k$-center clustering with running time $O(T(n,k))$ can be used to derive a $(1+\eps)$-coreset for the \fairDiv\ problem of size $O(\eps^{-2d}mk)$ in $O(\sum_{c_j\in C}T(|P(c_j)|,\eps^{-2d}k))$ time.
\end{theorem}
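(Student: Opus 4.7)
The plan is to observe that Theorem~\ref{thm:coreset} is essentially a packaging of Lemma~\ref{lem:newCoreset} together with bookkeeping on size and running time, so the proof should be short and entirely constructive. Given an $\alpha$-approximation algorithm \Alg{} for $k'$-center clustering with running time $O(T(n,k'))$, the construction is: set $k' = c\,\eps^{-2d} k$ for the constant $c$ implicit in the statement of Lemma~\ref{lem:newCoreset}; for each color $c_j \in C$, run \Alg{} on the subset $P(c_j)$ with parameter $k'$, obtaining a set $G'_j$ of at most $k'$ centers; and then output $G' = \bigcup_{c_j \in C} G'_j$.

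First I would argue correctness. Since Lemma~\ref{lem:newCoreset} assumes only that each $G'_j$ is the output of some constant-factor approximation to $k'$-center on $P(c_j)$, the hypothesis of the lemma is met verbatim by our construction. Therefore $G'$ contains a subset $\hat{S}$ satisfying the fairness constraints with $\diversity(\hat{S}) \geq \gamma^*/(1+\eps)$, which is exactly what is required for $G'$ to be a $(1+\eps)$-coreset for \fairDiv.

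Next I would bound the size: since each $G'_j$ has at most $k' = O(\eps^{-2d}k)$ points and there are $m$ colors, the total size is $|G'| \leq m k' = O(\eps^{-2d} m k)$. For the running time, the $m$ calls to \Alg{} are independent, so the total time is $\sum_{c_j \in C} T(|P(c_j)|, k') = O\bigl(\sum_{c_j \in C} T(|P(c_j)|, \eps^{-2d}k)\bigr)$, plus an $O(n)$ cost to partition $P$ by color which is subsumed by the above whenever $T(n,k') = \Omega(n)$, as is the case for any reasonable $k$-center algorithm. Putting these three pieces together yields the theorem.

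There is no serious obstacle here, because all the technical content — in particular, the tight relationship between $k'$-center and MaxMin diversity, and the grid-rounding argument that drives the choice $k' = O(\eps^{-2d}k)$ — is already absorbed in the proof of Lemma~\ref{lem:newCoreset}. The only mild subtlety worth flagging explicitly is that the hidden constant in $k' = O(\eps^{-2d}k)$ and the final rescaling $\eps \leftarrow \eps/\zeta$ used inside Lemma~\ref{lem:newCoreset} must be propagated consistently, but since both are constants depending only on $d$ they are absorbed in the $O(\cdot)$ notation of the theorem.
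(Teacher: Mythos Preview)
Your proposal is correct and matches the paper's approach exactly: the paper states the construction $G'=\bigcup_{c_j}G'_j$ with $k'=O(\eps^{-2d}k)$, records the size $|G'|=O(\eps^{-2d}km)$ and time $O(\sum_{c_j}T(|P(c_j)|,k'))$, proves Lemma~\ref{lem:newCoreset} for correctness, and then simply states Theorem~\ref{thm:coreset} as the summary. Your observation that the theorem is just Lemma~\ref{lem:newCoreset} plus bookkeeping is precisely how the paper treats it.
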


In the next result we fix \Alg\ to be either the Feder and Greene~\cite{feder1988optimal} algorithm or the Har-Pelled and Raichel algorithm~\cite{har2015net} to return a $2$-approximation for the $k$-center clustering in $O(n\log k)$ time or $O(n)$ expected time, respectively.
Using our coreset $G'$ as input to the algorithm in Theorem~\ref{thm:main} we get the next corollary.

\begin{corollary}
\label{cor:main}
There exists an algorithm that returns a set $S\subseteq P$ such that $\expect[|S(c_j)|]\geq \frac{k_j}{1+\eps}$ for each color $c_j\!\in\!C$, and $\diversity(S)\geq \frac{\gamma^*}{2(1+\eps)}$ in $O(n\log k+mk^2\log^3(k))$ time and $O(n)$ space.
The same algorithm can be executed in $O(n+mk^2\log^3(k))$ expected time.
\end{corollary}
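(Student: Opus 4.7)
The plan is to chain the coreset construction of Theorem~\ref{thm:coreset} with the \newAlg{} algorithm of Theorem~\ref{thm:main}. First, I would build a $(1+\eps')$-coreset $G' \subseteq P$ for a sufficiently small constant $\eps'$ (to be fixed at the end). Instantiating \Alg{} in Theorem~\ref{thm:coreset} with the Feder--Greene $2$-approximation for $k$-center, which runs in $O(|P(c_j)| \log k)$ time per color, yields a coreset of size $|G'| = O((\eps')^{-2d} mk) = O(mk)$ (for constant $\eps'$ and $d$) in total time $\sum_{c_j \in C} O(|P(c_j)| \log k) = O(n \log k)$ and $O(n)$ space. Replacing Feder--Greene with the Har-Peled--Raichel algorithm gives $O(n)$ expected time instead.

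Next, I would invoke the algorithm of Theorem~\ref{thm:main} on $G'$ (instead of $P$), with the same parameters $k_1,\ldots,k_m$. Setting $n' = |G'| = O(mk)$, this produces a set $S \subseteq G' \subseteq P$ in time $O(n' k \log^3 n') = O(mk^2 \log^3(mk)) = O(mk^2 \log^3 k)$ (using $\log(mk) = O(\log k)$ under the standard assumption that $m$ is polynomially bounded by $k$) and uses $O(n') = O(mk)$ working space. Together with the input $P$, the total space remains $O(n)$.

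The approximation analysis combines the two guarantees multiplicatively. Let $\gamma^*_{G'}$ denote the optimum diversity of the \fairDiv\ instance restricted to $G'$. By the coreset property $\gamma^*_{G'} \geq \gamma^*/(1+\eps')$, and Theorem~\ref{thm:main} applied on $G'$ returns $S$ with $\diversity(S) \geq \gamma^*_{G'}/(2(1+\eps')) \geq \gamma^*/(2(1+\eps')^2)$ and $\expect[|S(c_j)|] \geq k_j/(1+\eps')$ for every color $c_j\in C$. Choosing $\eps' = \eps/3$ ensures $(1+\eps')^2 \leq 1+\eps$ for $\eps \in (0,1)$, so both the desired diversity bound $\gamma^*/(2(1+\eps))$ and the fairness bound $k_j/(1+\eps)$ follow.

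The only point that needs verification is that $G'$ is a valid input for \newAlg{}, namely that $|G'(c_j)| \geq k_j$ for every $c_j$; this is immediate from the construction, since running the $k'$-center algorithm on $P(c_j)$ with $k' = \Omega(k) \geq k_j$ centers produces $\min(|P(c_j)|, k') \geq k_j$ representatives whenever the instance is feasible. Summing the running times of the two phases gives $O(n \log k + mk^2 \log^3 k)$ in the deterministic case and $O(n + mk^2 \log^3 k)$ in expectation, yielding the claimed bounds. The main obstacle is the straightforward bookkeeping of the $\eps$ rescaling and the $\log^3(mk) = O(\log^3 k)$ simplification; the approximation ratio itself is essentially forced by the two theorems we are composing.
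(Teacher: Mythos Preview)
Your proposal is correct and follows exactly the paper's approach: the corollary is obtained by instantiating Theorem~\ref{thm:coreset} with the Feder--Greene (resp.\ Har-Peled--Raichel) $k$-center algorithm to build an $O(mk)$-size coreset in $O(n\log k)$ (resp.\ $O(n)$ expected) time, and then running the algorithm of Theorem~\ref{thm:main} on the coreset. You in fact supply more detail than the paper (the $\eps$ rescaling, the feasibility check $|G'(c_j)|\geq k_j$, and the $\log^3(mk)=O(\log^3 k)$ justification, which holds simply because $m\leq k$).
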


Using our coreset $G'$ as input to the algorithm in Theorem~\ref{thm:main2} we get the next corollary.

\begin{corollary}
\label{cor:main2}
There exists an algorithm that returns a set $S\subseteq P$ such that $|S(c_j)|\geq \frac{k_j}{1+\eps}$ with probability at least $1-\delta$ for each color $c_j\in C$, and $\diversity(S)\!\geq\! \frac{\gamma^*}{6(1+\eps)}$ in $O(n\log k+mk^2\log^3 k+mk\log k\log\frac{1}{\delta})$ time and $O(mk)$ additional space.
The same algorithm can be executed in $O(n+mk^2\log^3 k+mk\log k\log\frac{1}{\delta})$ expected time.
\end{corollary}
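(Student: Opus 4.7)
The plan is to chain Theorem~\ref{thm:coreset} with Theorem~\ref{thm:main2}: build a small coreset first, then invoke the high-probability algorithm of Subsection~\ref{sub:prob} on the coreset rather than on $P$ itself. First I would instantiate Theorem~\ref{thm:coreset} with either the Feder--Greene $2$-approximation for $k'$-center~\cite{feder1988optimal}, giving $O(n\log k)$ deterministic time summed over all colors, or the Har-Peled--Raichel algorithm~\cite{har2015net}, giving $O(n)$ expected time. This produces a $(1+\eps)$-coreset $G'\subseteq P$ of size $|G'|=O(\eps^{-2d}mk)=O(mk)$, stored in $O(mk)$ additional space beyond the input.

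Second, I would apply Theorem~\ref{thm:main2} in a black-box manner to $G'$. Substituting $n\leftarrow |G'|=O(mk)$ into the running time bound of Theorem~\ref{thm:main2} yields
\[
O\bigl(|G'|\,k\log^{3}|G'| + |G'|\log\tfrac{1}{\delta}\log |G'|\bigr) = O(mk^{2}\log^{3}k + mk\log k\log\tfrac{1}{\delta}),
\]
where I absorb factors of $\log m$ into $\log k$, exactly as is done for the in-expectation variant in Corollary~\ref{cor:main}. Adding the coreset-construction cost gives the overall running time in the statement, and the working space used by Theorem~\ref{thm:main2} on an input of size $|G'|$ is $O(|G'|)=O(mk)$, justifying the ``additional'' space bound.

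Third, for the approximation guarantees, Lemma~\ref{lem:newCoreset} ensures that $G'$ contains a fair subset $\hat{S}$ with $\diversity(\hat{S})\geq \gamma^{*}/(1+\eps)$; in particular, the optimum diversity $\gamma^{*}_{G'}$ of the \fairDiv\ instance restricted to $G'$ satisfies $\gamma^{*}_{G'}\geq \gamma^{*}/(1+\eps)$. Theorem~\ref{thm:main2} applied to $G'$ (with its assumption $k_j\geq 3(1+\eps)\eps^{-2}\log(2m)$ carried over verbatim) then returns $S\subseteq G'\subseteq P$ with $\diversity(S)\geq \gamma^{*}_{G'}/(6(1+\eps))\geq \gamma^{*}/(6(1+\eps)^{2})$ and $|S(c_j)|\geq k_j/(1+\eps)$ for every $c_j\in C$ with probability at least $1-\delta$. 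A final rescaling $\eps\leftarrow \eps/c$ for a suitable constant $c$, of the same style as at the end of Subsection~\ref{sub:prob}, absorbs the extra $(1+\eps)$ factor to yield $\diversity(S)\geq \gamma^{*}/(6(1+\eps))$, matching the claim.

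The main obstacle I expect is purely bookkeeping: checking that the Chernoff-type argument underlying Theorem~\ref{thm:main2} is unaffected when its input is replaced by the coreset $G'$ rather than $P$, and that the constant $c$ chosen in the final $\eps$-rescaling simultaneously absorbs the $(1+\eps)$ coreset slack in the diversity term and keeps the $k_j/(1+\eps)$ fairness guarantee intact. Since Lemma~\ref{lem:newCoreset} is geometric and Theorem~\ref{thm:main2} depends on its input only through pairwise distances and the geometric data structures built over it, the entire pipeline should go through by applying Theorem~\ref{thm:main2} in a black-box manner to $G'$.
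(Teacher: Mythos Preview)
Your proposal is correct and follows exactly the paper's approach: the paper's entire justification for this corollary is the single sentence ``Using our coreset $G'$ as input to the algorithm in Theorem~\ref{thm:main2} we get the next corollary,'' and you have simply unpacked that sentence, instantiating Theorem~\ref{thm:coreset} with Feder--Greene or Har-Peled--Raichel and substituting $n\leftarrow |G'|=O(mk)$ into Theorem~\ref{thm:main2}. Your bookkeeping on the $\eps$-rescaling and the absorption of $\log m$ into $\log k$ is also consistent with how the paper handles the analogous Corollary~\ref{cor:main}.
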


We notice that the running time of the algorithm in Corollary~\ref{cor:main} or Corollary~\ref{cor:main2} is not always faster than the algorithm in Theorem~\ref{thm:main} or Theorem~\ref{thm:main2}, respectively. While for small values of $k$, an algorithm using the coreset is faster, when $k$ is large the asymptotic running time of the algorithm without using the coreset is faster.

\section{Extensions}
\label{sec:extensions}
In this section, we show how our coreset construction and our new algorithms for the \fairDiv\ problem can be used to get a faster algorithm in the streaming setting, \streamFairDiv\ problem. We also show how our results can be used to design the first efficient data structure for the \queryFairDiv\ problem.

\subsection{Streaming setting}
In the streaming setting we care about three quantities: The number of elements that the algorithm needs to store in each iteration, the update time for any new item we get, and the post-processing time we need to create an actual solution for the \fairDiv\ problem. 

There are two known algorithms for the \streamFairDiv\ problem.
In~\cite{wang2022streaming}, they describe a streaming algorithm that stores $O(mk\log\Delta)$ elements in memory, takes $O(k\log\Delta)$ time per element for streaming processing, and $O(m^2k^2\log \Delta)$ time for post-processing that returns a $\frac{1-\eps}{3m+2}$-approximation solution. The streaming processing time per element can be improved to $O(\log k\cdot\log\Delta)$ in the Euclidean space using an efficient dynamic data structure for the closest pair problem~\cite{bespamyatnikh1995optimal}, as described in~\cite{agarwal2020efficient}.
In~\cite{addanki2022improved}, they improved the approximation factor to a constant, however all asymptotic complexities still depend on $\log\Delta$.
Notice that $\Delta$ can be exponential with respect to $n$ even in $\Re^d$ making all quantities linear on $n$ in the worst case. We present the first constant-approximation streaming algorithm, called \newStreamAlg{}, for the \fairDiv\ problem whose space, update, and post-processing time are independent of $\Delta$. 

Because of Lemma~\ref{lem:newCoreset} it is sufficient to maintain any constant approximation of the $k$-center clustering solution over the set of items we have seen so far.
It is known that the doubling algorithm~\cite{charikar1997incremental} maintains an $8$-approximation for the $k$-center problem in $\Re^d$ with $O(k\log k)$ update time (using an efficient dynamic data structure for the closest pair problem~\cite{bespamyatnikh1995optimal}, as described in~\cite{agarwal2020efficient}) storing $O(k)$ elements~\cite{agarwal2020efficient}.
Using the doubling algorithm to maintain a constant approximation for the $k$-center clustering, our new coreset construction in Theorem~\ref{thm:coreset}, and our new near-linear time algorithm in Theorem~\ref{thm:main} for the \fairDiv\ problem we give the following result for the \streamFairDiv\ problem.
\begin{theorem}
\label{thm:streaming}
For a constant parameter $\eps$, there exists a streaming algorithm for the \streamFairDiv\ problem that stores $O(mk)$ items, has $O(k\log k)$ update time, and has $O(mk^2\log^3 k)$ post-processing time. After the post-processing the algorithm returns a set of points $S$ such that $\expect[S(c_j)]\geq \frac{k_j}{1+\eps}$ for each color $c_j\in C$, and $\diversity(S)\geq \frac{\gamma^*}{2(1+\eps)}$. 
\end{theorem}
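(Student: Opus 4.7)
The plan is to maintain, for each color $c_j \in C$ independently, one instance of the doubling algorithm of~\cite{charikar1997incremental} (with the efficient implementation of~\cite{agarwal2020efficient}) that processes the substream restricted to points of color $c_j$ and targets $k' = O(\eps^{-2d}k) = O(k)$ centers (treating $d$ and $\eps$ as constants). When a new item $p_t$ arrives, I would read its color $c(p_t) = c_j$ and forward it only to the $j$-th instance; no other data structure is touched. This immediately gives the claimed $O(k \log k)$ update time and $O(mk)$ total space, since each of the $m$ instances stores $O(k)$ elements. Let $G_j'$ denote the current center set of the $j$-th instance; by the analysis of~\cite{charikar1997incremental, agarwal2020efficient}, $G_j'$ is always an $8$-approximation to the $k'$-center clustering of $P_t(c_j)$.

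For post-processing, I would form the union $G' = \bigcup_{j=1}^{m} G_j'$ and invoke the near-linear-time \fairDiv\ algorithm of Theorem~\ref{thm:main} on $G'$. Since each $G_j'$ comes from a constant approximation to $k'$-center on $P_t(c_j)$ with $k' = O(\eps^{-2d}k)$, Theorem~\ref{thm:coreset} (the generalized coreset construction of Section~\ref{sec:coreset}, via Lemma~\ref{lem:newCoreset}) implies that $G'$ is a $(1+\eps)$-coreset for the \fairDiv\ problem on $P_t$. In particular, there exists $\hat{S} \subseteq G'$ satisfying the fairness constraints with $\diversity(\hat{S}) \geq \gamma^*/(1+\eps)$.

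Running Theorem~\ref{thm:main} on $G'$, whose cardinality is $|G'| = O(mk)$, yields time $O(|G'| \cdot k \log^3|G'|) = O(mk^2 \log^3(mk)) = O(mk^2 \log^3 k)$ under the natural assumption $m \leq \mathrm{poly}(k)$, matching the claimed post-processing bound. The output set $S \subseteq G'$ satisfies $\diversity(S) \geq \hat{\gamma}/(2(1+\eps))$, where $\hat{\gamma}$ is the optimum diversity on $G'$, and $\expect[|S(c_j)|] \geq \hat{k}_j/(1+\eps)$, where $\hat{k}_j = k_j$ since we never alter the fairness targets. Chaining the two $(1+\eps)$ factors yields $\diversity(S) \geq \gamma^*/(2(1+\eps)^2)$ and $\expect[|S(c_j)|] \geq k_j/(1+\eps)$; rescaling $\eps \leftarrow \eps/c$ for a suitable constant $c$ recovers the statement exactly.

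The main obstacle is the bookkeeping in the composition of approximation errors: the coreset only guarantees the \emph{existence} of a near-optimal feasible $\hat{S}$ inside $G'$, so one must check that Theorem~\ref{thm:main}, when run on $G'$, competes against the best feasible subset of $G'$ rather than of $P_t$, and only then combine with the coreset bound. A second minor point is that the per-color doubling instance is meaningful only once $|P_t(c_j)| \geq k'$; before that, the algorithm simply stores every arrived point of color $c_j$, which is consistent with the space and update bounds and does not affect correctness. All other steps (updating one instance per arrival, unioning the coresets, invoking the offline algorithm) are immediate from the results of Sections~\ref{sec:algs} and~\ref{sec:coreset}.
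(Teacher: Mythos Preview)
Your proposal is correct and follows essentially the same approach as the paper: maintain one doubling-algorithm instance per color to get a constant approximation of $k'$-center on each $P_t(c_j)$, take the union as the coreset via Theorem~\ref{thm:coreset}, and run the algorithm of Theorem~\ref{thm:main} on the resulting $O(mk)$ points. Your write-up is in fact more detailed than the paper's one-sentence justification, and your handling of the error composition (rescaling $\eps$) and the $\log^3(mk)=O(\log^3 k)$ simplification (which holds since $m\le k$ when every $k_j\ge 1$) are both appropriate.
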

In the streaming setting, our new algorithm is called \newStreamAlg{}.

\subsection{Range-query setting}

Given a set of $n$ points $P\in \Re^d$, in~\cite{abrahamsen2017range, oh2018approximate} they show
that there exists a data structure of $O(n\log^{d-1} n)$ space that can be constructed in $O(n\log^{d-1} n)$ time, such that, given any query hyper-rectangle $R$ and any query parameter $k$, a $(2+\eps)$-approximation of the $k$-center clustering in $P\cap R$ is returned in $O(k\log^{d-1}n + \eps^{-d}k\log\frac{k}{\eps})$ time. Using this $k$-center data structure, we construct our coreset from  Theorem~\ref{thm:coreset} in $P\cap R$ efficiently, and then using our new near-linear time algorithm in Theorem~\ref{thm:main} for the \fairDiv\ problem, we give the following result for the \queryFairDiv\ problem.

\begin{theorem}
\label{thm:query}
For the \queryFairDiv\ problem, a data structure of size $O(n\log^{d-1} n)$ can be constructed in $O(n\log^{d-1} n)$ time, such that given a query rectangle $R$, a constant parameter $\eps$, and parameters $k_1,\ldots, k_m$ with $k_1+\ldots k_m=k$, it returns a set $S\subseteq P\cap R$ in $O(mk\log^{d-1}n + mk^2\log^3 k)$ time such that $\expect[S(c_j)]\geq \frac{k_j}{1+\eps}$ for each color $c_j\in C$, and $\diversity(S)\geq \frac{\gamma^*}{2(1+\eps)}$.
\end{theorem}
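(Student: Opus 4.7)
The plan is to combine three ingredients: the range-query $k$-center data structure of~\cite{abrahamsen2017range, oh2018approximate}, the coreset construction of Theorem~\ref{thm:coreset} (via Lemma~\ref{lem:newCoreset}), and the near-linear time algorithm for \fairDiv\ of Theorem~\ref{thm:main}. In the preprocessing stage, for every color $c_j \in C$ we build an instance of the range-query $k$-center data structure of~\cite{abrahamsen2017range, oh2018approximate} on the point set $P(c_j)$. Since these data structures require $O(|P(c_j)|\log^{d-1}|P(c_j)|)$ space and construction time, and $\sum_j |P(c_j)| = n$, the total space and construction time are $O(n\log^{d-1} n)$.

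At query time, given $R$, $\eps$, and $k_1,\ldots,k_m$, I would process each color independently. For each color $c_j$, query the corresponding data structure with rectangle $R$ and parameter $k' = O(\eps^{-2d}k)$ to obtain a $(2+\eps)$-approximation $G'_j$ of the $k'$-center clustering of $P(c_j) \cap R$. Because $\eps$ is constant, each such query takes $O(k\log^{d-1} n + k\log k)$ time, and summing over the $m$ colors gives $O(mk\log^{d-1} n)$ for the entire coreset extraction step. Let $G' = \bigcup_{c_j \in C} G'_j$, so $|G'| = O(\eps^{-2d}mk) = O(mk)$. By Lemma~\ref{lem:newCoreset}, since each $G'_j$ is an $O(1)$-approximation for $k'$-center on $P(c_j) \cap R$, the union $G'$ is a $(1+\eps)$-coreset for the \fairDiv\ problem restricted to $P \cap R$.

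Next, feed $G'$ together with $k_1,\ldots,k_m$ into the algorithm of Theorem~\ref{thm:main}. On input of size $N = |G'| = O(mk)$, that algorithm runs in $O(Nk\log^3 N)$ time; using $m \leq k$ (which holds whenever each $k_j \geq 1$), this is $O(mk^2\log^3 k)$. It returns $S \subseteq G' \subseteq P \cap R$ satisfying $\expect[|S(c_j)|] \geq k_j/(1+\eps)$ for every $c_j$ and $\diversity(S) \geq \gamma'/(2(1+\eps))$, where $\gamma'$ is the optimum diversity of the \fairDiv\ instance on $G'$. The coreset property gives $\gamma' \geq \gamma^*/(1+\eps)$, where $\gamma^*$ is the optimum for the instance on $P \cap R$, so $\diversity(S) \geq \gamma^*/(2(1+\eps)^2)$. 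A standard rescaling $\eps \leftarrow \eps/c$ for a sufficiently large constant $c$ absorbs the extra $(1+\eps)$ factor and yields the claimed bounds. Summing construction and query costs gives the bounds in the statement.

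The only subtle point, and the one I would check most carefully, is verifying that Lemma~\ref{lem:newCoreset} applies when $P$ is replaced by $P \cap R$: its proof goes through verbatim, since it only uses the fact that for each color the returned set is an $O(1)$-approximation for $k'$-center on the underlying color class (here, $P(c_j) \cap R$), which is exactly what the range data structure guarantees. The rest is just careful bookkeeping of the running times and of the $(1+\eps)$ factors arising from the coreset, the algorithm of Theorem~\ref{thm:main}, and the $(2+\eps)$-approximation of the $k$-center subroutine.
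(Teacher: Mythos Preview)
Your proposal is correct and follows essentially the same approach as the paper: build the range-query $k$-center data structure of~\cite{abrahamsen2017range, oh2018approximate} (one copy per color so that the per-color coreset of Theorem~\ref{thm:coreset} can be extracted inside $R$), then run the algorithm of Theorem~\ref{thm:main} on the resulting $O(mk)$-size coreset. The paper states this combination tersely in the paragraph preceding Theorem~\ref{thm:query}; your write-up simply makes the per-color preprocessing and the bookkeeping of the $(1+\eps)$ factors explicit.
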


Using the result from Theorem~\ref{thm:main2} (instead of Theorem~\ref{thm:main}) we can also satisfy the fairness constraints  with high probability.

\section{Experiments}
\label{sec:experiments}
In this section, we evaluate the effectiveness of our algorithm to identify a diverse and fair set of points. Specifically, we answer the following research questions:\\
\noindent \textbf{RQ1:} How does \newAlg{} behavior change with varying parameters? Is the fairness requirement violated by \newAlg{}? This RQ identifies the best parameters for \newAlg{} to be used for comparison with baselines.\\
\noindent \textbf{RQ2:} How does \newAlg{} result compare against other baselines? Is it efficient to identify a fair and diverse solution?

\noindent \textbf{RQ3:} How efficient is our new algorithm \newStreamAlg{} in the streaming setting?


\begin{table}[ht]
    \centering
        \begin{tabular}{ |c|c|c|c|c| } 
            \hline
            Dataset & Groups & m & d & n\\
            \hline
            Adult & Race, Sex & 10 & 6 & 32,561\\\hline
            Diabetes & Sex, Meds Prescribed-Y/N & 4 & 8 & 101,763\\\hline
            Census & Sex, Age & 14 & 6 & 2,426,116\\\hline
            Popsim & Race & 5 & 2 & 4,110,608\\\hline
            Popsim\_1M & Race & 5 & 2 & 821,804\\\hline
            Beer reviews & Style of beer & 3 & 6 &1,518,829\\\hline
        \end{tabular}
    \caption{Dataset Statistics}\label{table:dataset_statistics}
\end{table}

    \begin{figure*}[t]
    \includegraphics[width=\textwidth]{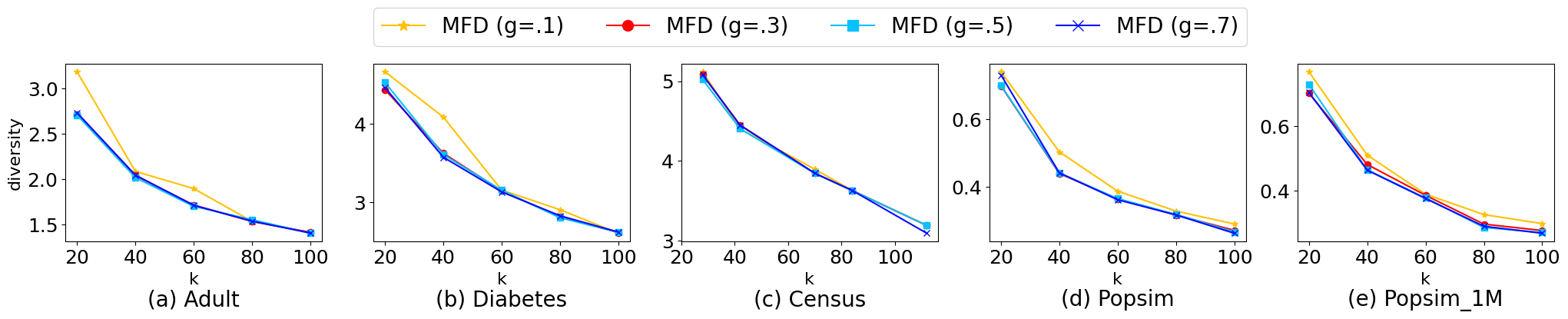}
    \vspace{-5mm}
    \caption{Comparison of diversity vs $k$ for \newAlg{} with different early stopping parameters.\label{fig:diverse:equal:mwu}}
\end{figure*}
\begin{figure*}
    \includegraphics[width=\textwidth]{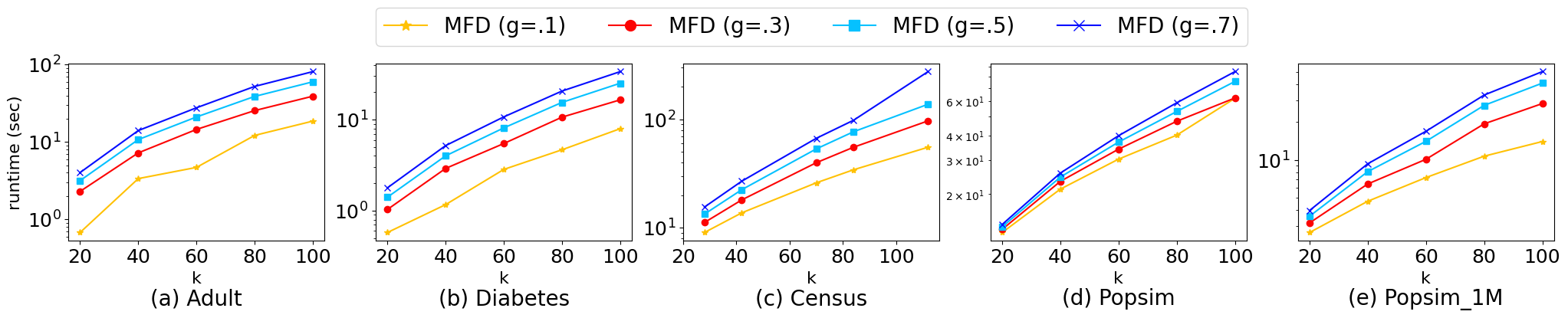}
    \vspace{-5mm}
    \caption{Comparison of running time vs $k$ for \newAlg{} with different early stopping parameters.\label{fig:time:equal:mwu}}
\end{figure*}

\begin{table*}
\ansc{
\resizebox{\linewidth}{!}{
\begin{tabular}{|c||c|c|c|c||c|c|c|c|c|c|c|c|c|c||c|c|c|c|c|}
\cline{1-9}\cline{11-20}
&\multicolumn{4}{|c||}{$g=0.1$}&\multicolumn{4}{|c|}{$g=0.3$}&&\multicolumn{5}{|c||}{$g=0.1$}&\multicolumn{5}{|c|}{$g=0.3$}\\\cline{1-9}\cline{11-20}
$k$&FN&FY&MN&MY&FN&FY&MN&MY&&Am Ind&As&Afr Am&Nat Haw&Wh&Am Ind&As&Afr Am&Nat Haw&Wh\\\cline{1-9}\cline{11-20}
$20$&$0$&$0$&$0$&$0$&$0$&$0$&$0$&$0$&&$0$&$0.2$&$0.2$&$0.4$&$0.4$&$0.4$&$0.2$&$1$&$0$&$0.2$\\\cline{1-9}\cline{11-20}
$40$&$0.2$&$0$&$1.2$&$0$&$0$&$0$&$0$&$0$&&$0.6$&$1.4$&$1.2$&$0$&$0.4$&$0.2$&$0$&$0.2$&$0$&$0$\\\cline{1-9}\cline{11-20}
$60$&$0$&$0$&$0$&$0$&$0$&$0$&$0$&$0$&&$0.4$&$0.4$&$1.2$&$0$&$0.4$&$0$&$0$&$0.8$&$0$&$0$\\\cline{1-9}\cline{11-20}
$80$&$0$&$0$&$0.4$&$0$&$0$&$0$&$0$&$0$&&$1.4$&$0.4$&$1$&$0$&$0$&$0.6$&$0$&$1.4$&$0$&$0$\\\cline{1-9}\cline{11-20}
$100$&$0$&$0$&$0$&$0$&$0$&$0$&$0$&$0$&&$0$&$0.4$&$0.4$&$1.4$&$0$&$0$&$0$&$0$&$0.8$&$0$\\\cline{1-9}\cline{11-20}
\end{tabular}
}
}
\caption{\ansc{Number of points (on average) per color missed by \newAlg{}-$0.1$ and \newAlg{}-$0.3$ in Diabetes (left) and Popsim (right) datasets.}}
\label{table:misspoints}
\vspace{-2em}
\end{table*}

\noindent \textbf{Datasets}
We consider the following datasets.
\begin{enumerate}
    \item The Adult dataset\cite{adult} contains around 32K records of individuals describing their income, and education details. Race and Sex are considered as sensitive attributes to generate $10$ colors (protected groups). 
     \item The Diabetes dataset~\cite{diabetes} contains health statistics of around 100K patients, where Sex and medical prescriptions are used to generate colors.
    \item The Census dataset~\cite{census} contains around 2M records of individuals. Age and Sex are used as sensitive attributes to consider $14$ colors. We choose $6$ numerical attributes to represent the points in the dataset.
    \item Popsim~\cite{nguyen2023popsim} is a semi-synthetic dataset that combined population statistics along with a geo-database. It is used to represent individual-level data with demographic information for the state of Illinois. Race is used as sensitive attribute to consider $5$ colors. The parameters longitude, latitude are used to convert the locations into 4,110,608 points in $\Re^2$.
    \item Popsim\_1M~\cite{nguyen2023popsim}.
    Another version of the same semi-synthetic dataset containing almost 1 million individuals.
    \item We use Beer reviews dataset for experiments in the streaming setting. The dataset contains a large number of reviews over different beers. We categorize the reviews into three groups: reviews related to Lager, Ale, and Other beers.
\end{enumerate}

\noindent \textbf{Baselines}
We compare our algorithm with the state of the art algorithms on the \fairDiv\ problem.

First, we discuss how we implemented our new \newAlg{} algorithm.
We implemented our main algorithm from Corollary~\ref{cor:main} combining the algorithm from Theorem~\ref{thm:main} with the coreset construction in Theorem~\ref{thm:coreset}. For coreset construction, we modified a python implementation for constructing coresets~\cite{vombatkere_github_nodate}. For every color $c_j\in C$, we run the Gonzalez algorithm for $k$ iterations. In the end, we have a coreset of size $m\cdot k$. Then, the coreset is given as input to our \newAlg{} method. We note that we include the coreset construction time in the total running time of \newAlg{}.

Next, we describe some differences between the theoretical \newAlg{} algorithm and our implementation. Instead of constructing a WSPD to run a binary search on the distances, we get as an upper bound $\gamma$ on the maximum possible diversity: We run Gonzalez algorithm for $k$ iterations in the entire point set $P$ without considering the colors. It is known that the minimum pairwise distance among the selected points is an upper bound on the diversity of \fairDiv. If \newAlg{} does not find a feasible solution for diversity $\gamma$, we set $\gamma\leftarrow (1-0.15)\gamma$, and re-run the algorithm, stopping at the first feasible solution.  
Additionally, instead of using a BBD tree we use ParGeo's KD tree~\cite{wang2022pargeo} with modifications to support sum queries. 
In practice, we observe that sometimes the \newAlg{} algorithm runs all $T=O(\eps^{-2}k\log n)$ iterations only to find an unrounded solution which is very similar to ones found by stopping at earlier iterations.  
We modified the algorithm to allow for \emph{early stopping}.  
We introduce a parameter $\stopping \in \left(0, 1\right]$ such that the \newAlg{} algorithm runs for at most $\stopping \cdot T$ iterations rather than the full $T$ iterations from the theory.
We settled on a default value of $\stopping$ in our experiments of $0.3$.
As we will show, this does not affect the quality of the results yet significantly improves the \newAlg{} algorithm's running time.
Since \newAlg{} is a randomized algorithm, in each case, we run the algorithm five times and we report the average diversity and running time.

$\bullet$ \newAlg{}: Our new implementation as described above.

$\bullet$ SFDM-2: It is a streaming algorithm designed and implemented in~\cite{wang2022streaming}
\footnote{The algorithm uses the minimum and maximum pairwise distance to define a range on the diversity. In the original implementation, the authors selected this range manually. In order to be fair with our implementation, we use the same upper bound used in \newAlg{}. As a lower bound we use the minimum non-zero pairwise distance in the coreset of size $m\cdot k$ we construct.}. 
The algorithm uses a parameter $\eps$ to control the error in the solution. We tried two representative errors $\eps=0.15$ and $\eps=0.75$. We call them SFDM-2 ($e=.15$) and SFDM-2 ($e=.75$), respectively.

$\bullet$ FMMD-S: The algorihtm presented and implemented in~\cite{wang2023max}.

$\bullet$ FairFlow~\cite{moumoulidou2021diverse} as implemented in~\cite{wang2023max}.

$\bullet$ FairGreedyFlow~\cite{addanki2022improved} as implemented in~\cite{wang2023max}. 

All algorithms are implemented in python, except for the kd-tree implementation in ParGeo which is implemented in C++.
If an algorithm takes more than $30$ minutes to finish, we stop its execution and we do not show the results in the figures.
All datasets and our code can be found in~\cite{anonymrepo}.

\paragraph{Setup} We run all our experiments on a e2-standard-16 Google Cloud VM with 16 vCPUs (8 cores) and 64 GB of memory running Debian 11 Bullseye v20231004.

\subsection{Micro-benchmark experiments}
\label{subsec:micro}
\begin{figure*}
    \includegraphics[width=\textwidth]{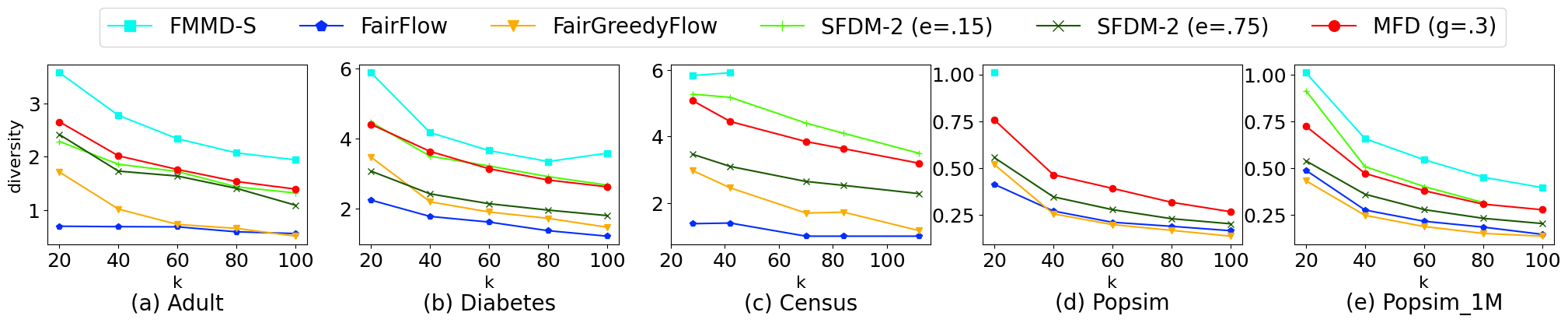}
    \caption{Comparison of diversity vs k for \newAlg{} and baselines. Higher diversity is better. Equal $k_j$.\label{fig:diverse:equal}}
\end{figure*}
\begin{figure*}
    \includegraphics[width=\textwidth]{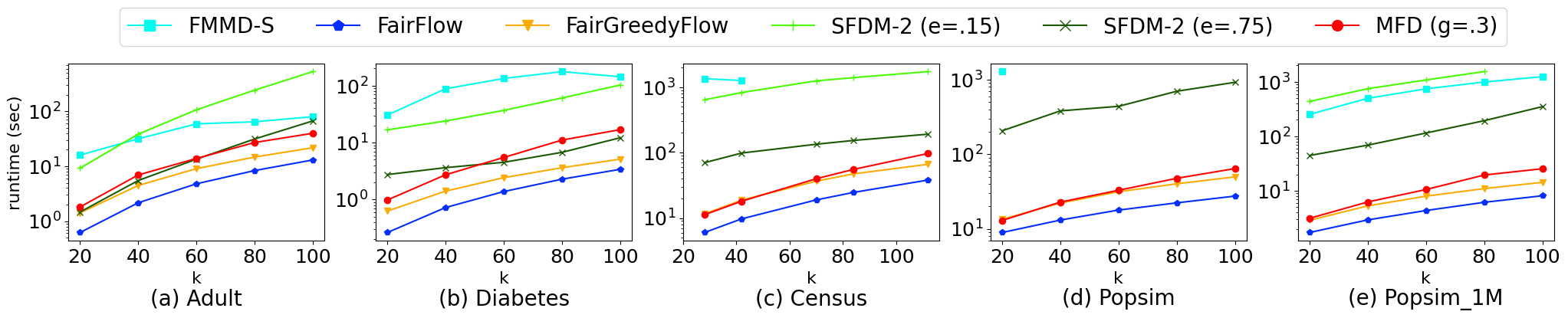}
    \vspace{-5mm}
    \caption{Comparison of running time vs k for \newAlg{} and baselines. Equal $k_j$.\label{fig:time:equal}}
\end{figure*}
\begin{figure*}
    \includegraphics[width=\textwidth]{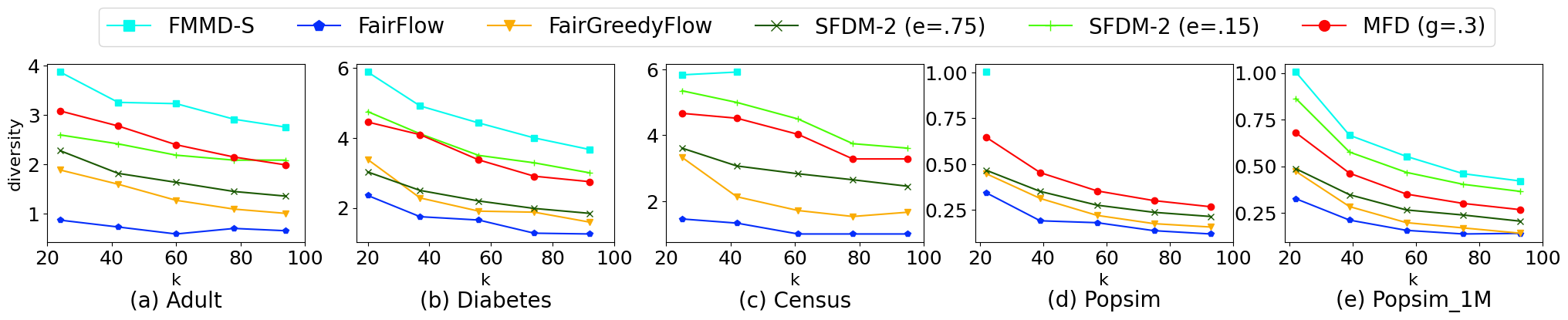}
    \vspace{-7mm}\caption{\label{fig:diverse:prop}Comparison of diversity vs k for \newAlg{} and baselines. Higher diversity is better. Proportional $k_j$.}
\end{figure*}
\begin{figure*}
    \includegraphics[width=\textwidth]{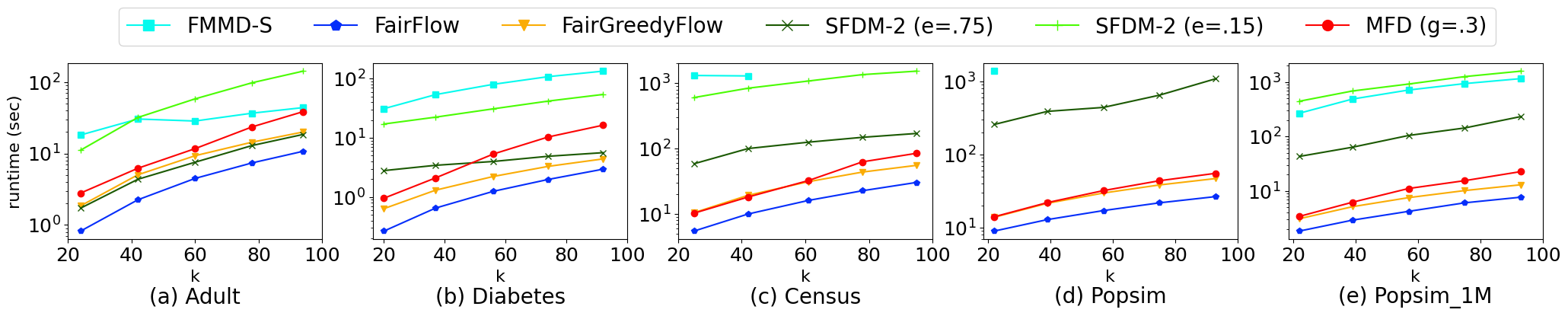}
    \vspace{-7mm}\caption{\label{fig:time:prop}Comparison of running time vs k for \newAlg{} and baselines. Proportional $k_j$.}
\end{figure*}
In this section we compare the diversity and the running time of our \newAlg{} algorithm for different early stopping parameters $\stopping$.
We test $\stopping=0.1, 0.3, 0.5, 0.7$.
In all cases, we set $k_j=k/m$ (equal $k_j$). We also run the same experiments with proportional $k_j=k\frac{|P(c_j)|}{n}$, but we skip them from this version because all observations are identical to the equal case.
In Figure~\ref{fig:diverse:equal:mwu} we show the diversity over different values of $k$ for all datasets. We observe that  the early stopping does not affect the diversity a lot. Figure~\ref{fig:time:equal:mwu} presents the running time over different values of $k$. It shows that the smaller the $\stopping$ is the faster the \newAlg{} is.

Before we conclude that a small value of $\stopping$ is always sufficient, recall that Theorem~\ref{thm:main} (and Corollary~\ref{cor:main}) does not always guarantee at least $k_j$ points in the final solution from every color $c_j\in C$. We show the number of missing points from each color for different parameters $\stopping$.
\ansc{In Table~\ref{table:misspoints} we show two representative results using the Diabetes (left table) and Popsim (right table) dataset for \newAlg{} with $\stopping=0.1$ and $\stopping=0.3$.
For each $k$, we assume that for every color we should take the same number of points, i.e., in Diabetes (Popsim) dataset we should take at least $k/4$ ($k/5$) points per color. We run our algorithm five times and we show the number of missing points on average for each color. The groups in the second row of Diabetes represent sex and Y or N (if meds prescribed), while the groups in the second row of Popsim represent race.
When $\stopping=0.1$ \newAlg{} usually misses some points from each group. However for $\stopping=0.3$ in most cases, \newAlg-$0.3$ does not miss any point. For Diabetes, \newAlg-$0.3$ never misses a point. For Popsim, the maximum average number of points it misses for a group is $1.4$. For each $k$ it misses on average $1.16$ points in total, over all groups.
Generally, we observed that in all datasets, for every $\stopping\geq 0.3$ it is rare to miss more than $2$ points.
We conclude that \newAlg{} for $\stopping\geq 0.3$ successfully satisfies the fairness requirements.
In the next sections, we fix $\stopping\!=\!0.3$ for \newAlg{}.
}

\begin{tcolorbox}[sharp corners, colback=white]
Key Takeaways: 
\newAlg{} with $g=0.3$ satisfies the fairness criterion for most settings, has comparable diversity to other values of $g$ and is highly efficient as compared to higher values of $g$.
\end{tcolorbox}

\subsection{End-to-end results}
\label{subsec:endendres}
To compare the quality of \newAlg{} with state-of-the-art, we considered two groups of experiments. In the first group, the fairness constraint ensures that the number of points returned for each color are equal, i.e., $k_j=\frac{k}{m}, \forall j$. In the second group, we choose proportional $k_j$'s, i.e., $k_j=k\frac{|P(c_j)|}{n}$. Generally, the first group of equal $k_j$ leads to more fair solutions that are more difficult to satisfy.

The main goal of this comparison is to identify techniques that maximize diversity within a reasonable amount of time. Two extreme algorithms are:
\textbf{Random selection} would choose $k_j$ points randomly from each color. This approach is expected to be highly efficient but would have very poor diversity. \textbf{Exhaustive search} would exhaustively consider all possible subsets to calculate diversity and return the best solution. This approach may return a highly accurate solution but would be highly inefficient. This approach would not scale to millions sized datasets.

The key goal of this experiment is to identify a technique that returns the best solution within a reasonable amount of time.

\noindent \textbf{Equal number of points from each color.} 
Figure~\ref{fig:diverse:equal} compares the diversity of \newAlg{} and baselines for this setting where equal points from each color are returned (higher diversity is better). We observe that the diversity of the returned solution decreases with increasing $k$. For example, diversity of \newAlg{} reduces from $5$ for $k=20$ to $3.5$ for $k=100$.

 FMMD-S achieves the highest diversity for Adult, Diabetes and Popsim\_1M dataset, but it did not run for Census and Popsim datasets for $k>40$. Therefore, FMMD-S is is not scalable to million scale datasets. Furthermore, it takes at least $50\times$ the time taken by \newAlg{}. Among all other techniques, \newAlg{} achieves the best diversity for most values of $k$ and datasets. SFDM-2 ($\epsilon=0.15$) baseline achieves comparable (or slightly better in some cases) diversity as that of \newAlg{} for datasets and $k$ whenever it ran. However, it did not finish for Popsim dataset for $k\geq 20$ and took at least $50\times$ the time taken by \newAlg{}.
 Among the techniques that ran for all datasets and $k$, \newAlg{} has the best diversity. All other techniques SFDM-2 ($e=0.75$), FairFlow and FairGreedyFlow achieve poorer diversity than \newAlg{}.
 Among these techniques, FairFlow and FairGreedyFlow achieve the lowest diversity across all datasets.

 Figure~\ref{fig:time:equal} compares the running time of different techniques for varying $k$ and datasets. We observe that the running time increases sub-linearly with $k$ for all techniques. FMMD-S and SFDM-2 are generally considerably slower than all other techniques. All other techniques (our approach \newAlg{} and baselines FairFlow, FairGreedyFlow) require less than 40 seconds to identify a diverse set of $100$ points for more than 1M records. In fact, \newAlg{} runs for $k=100$ on 4M records within less than 80 seconds. In contrast to few baselines that take more than $1000$ seconds for a dataset with less than $1$M points. This validates the efficiency of \newAlg{} to identify a fair and diverse solution.

\begin{figure*}
    \includegraphics[width=\textwidth]{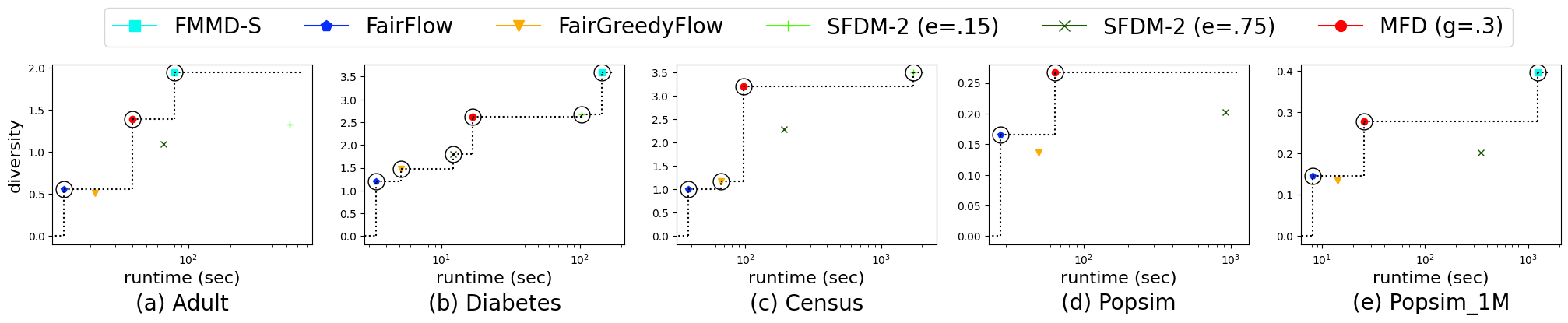}
    \vspace{-6mm}\caption{\label{fig:skyline}\ansb{For every algorithm we show the running time to derive a diverse set for $k=100$ along with the diversity of the returned set. Only \newAlg{} and FairFlow always return a pareto-optimal solution.}}
    \vspace{-1em}
\end{figure*}

\ansb{
Overall, FMMD-S and SFDM-2 return more diverse results, however for some datasets they do not even finish, or they take even 50X more time than our algorithm. At the same time, FairFlow and FairGreedyFlow are faster, but the sets they return have a much worse diversity. Although our algorithm is slightly slower than the fastest algorithms for \fairDiv\ – it scales to large datasets, and returns sets with high diversity in reasonable time. Overall, it achieves the best tradeoff between diversity and time. Every other algorithm is either too slow (no result within 100 sec) or the diversity is too low (5 times worse than the best one).
}

\ansb{In order to further show that \newAlg{} provides the best balance between diversity and running time, in Figure~\ref{fig:skyline}, we fix $k=100$ (similar results hold for any other $k$), and for each algorithm we show the running time to compute a fair and diverse set along with the diversity of the returned set. We represent each algorithm as a point in the (runtime, diversity) plane. An algorithm returns a \emph{pareto-optimal solution} if and only if there is no other algorithm that computes a more diverse (fair) set faster. We observe that \newAlg{} and FairFlow are the only algorithms that always return a pareto-optimal solution. FairFlow always returns a pareto-optimal solution because it is the fastest algorithm, however it returns sets with arbitrarily low diversity (Figure~\ref{fig:diverse:equal} (c)). Overall, \newAlg{} achieves the best equilibrium between diversity and running time.}

\noindent \textbf{Proportional size.} 
Figure~\ref{fig:diverse:prop} and \ref{fig:time:prop} compares the diversity and running time of \newAlg{} and other techniques for the setting where the number of points returned for each color is proportional to their color size in the dataset. The results for this case are similar to that of equal size, where FMMD-S achieves the highest diversity but did not scale to million sized datasets. Similar was the case for SFDM-2 ($\epsilon=0.15$) which performed as good as \newAlg{} but did not finish for Popsim dataset. Among all other baselines, \newAlg{} achieves the best diversity while identifying the solution in less than a minute, even for million sized datasets. 
\begin{tcolorbox}[sharp corners, colback=white]
Key Takeaways:
\begin{enumerate}
 \item  \newAlg{} achieves the best diversity among the techniques that run across all datasets.
    \item \ansc{SFDM-2 ($\epsilon = 0.15$) and FMMD-S achieve higher diversity but they usually take $20\times$ or even $60\times$ more time than \newAlg{} for some datasets.}
    All other baselines achieve worse diversity than \newAlg{} but take almost the same time to run.
    \item \newAlg{} achieves the best quality result within less than a minute for a million scale datasets.
    \item \newAlg{} provides the best balance between diversity and running time.
    \item \ansb{\newAlg{} always returns a pareto-optimal solution.}
    \end{enumerate}
\end{tcolorbox}
\vspace{-0.5em}

\paragraph{Comparison of \newAlg{} and FairGreedyFlow}
Given that the same coreset is given as input, FairGreedyFlow runs in $O(k^2m^4\log(k))$ time, while \newAlg{} runs in $O(k^2m\log^3(k))$ time, so in theory \newAlg{} is faster than FairGreedyFlow.
However, in practice, as shown in Figures~\ref{fig:time:equal},~\ref{fig:time:prop}, FairGreedyFlow runs faster than \newAlg{}.
There are two reasons explaining this.
i) The running time of \newAlg{} actually depends on $1/\eps^{d+2}$ (please check the analysis in Section~\ref{sec:algs}). The parameters $\eps, d$ are constants so we do not include this factor in the final asymptotic complexity. 
Our experiments show that for larger $d$ (Diabetes dataset, $d = 8$) FairGreedyFlow is faster than \newAlg{}, whereas for smaller $d$ (Popsim dataset, $d = 2$), the running time is almost identical.
ii) FairGreedyFlow, maps the FairDiv problem to the max-flow problem. In theory, solving an instance of the max-flow is slow. However in practice, they used an optimized implementation of the Ford-Fulkerson algorithm from python’s networkx library.

\subsection{Streaming setting}
\label{subsec:streaming}
Finally, we show experiments for the \streamFairDiv\ problem. We implement our algorithm from Theorem~\ref{thm:streaming}, called \newStreamAlg{}, and compare its efficiency and efficacy with SFDM-2, which is implemented in~\cite{wang2022streaming}.\footnote{Recall that in the offline setting we used the coreset to define a range of diversities for SFDM-2. In the streaming setting, no coreset can be computed before someone visits the entire input set. As described in~\cite{wang2022streaming}, that first introduced SFDM-2, we use the minimum (nonzero) and maximum distance of the entire point set to define the range of diversities (they assume the two quantities are known before the beginning of the stream).}
We do not compare our algorithm with the streaming algorithm proposed in~\cite{addanki2022improved} because i) they did not implement their algorithm, ii)
the update procedure is identical with the update procedure in~\cite{wang2022streaming}, and iii) both algorithms use the same amount of memory, so the conclusions will be almost identical.
At each step of the streaming phase, our algorithm stores $O(km)$ points, while the SFDM-2 stores $O(km\log\Delta)$ points. As we had in the offline case, we run two versions of the SFDM-2 algorithm, the SFDM-2 ($e=.15$) and the SFDM-2 ($e=.75$). For different values of $k$, in Figure~\ref{fig:stream} we show the average update time (average time to insert a new point), the post-processing time (time to construct the final solution after the end of the stream), and the diversity of the sets returned by \newStreamAlg{}, and SFDM-2.
\newStreamAlg{} has the fastest update time, it has the fastest post-processing time, and it returns sets with diversity close to the diversity of the sets returned by SFDM-2 ($e=.15$). On the other hand, SFDM-2 ($e=.15$) has an expensive update time (sometimes $30\times$ slower than \newStreamAlg{}), while SFDM-2 ($e=.15$) returns sets with very low diversity and it has $2.5\times$ slower update time than \newStreamAlg{}. Overall, \newStreamAlg{} is the best algorithm in the streaming setting because it provides the best balance between update time, post-processing time, and diversity.

\begin{minipage}{\linewidth}
          \begin{figure}[H]\vspace{-1em}
              \includegraphics[width=0.8\linewidth]{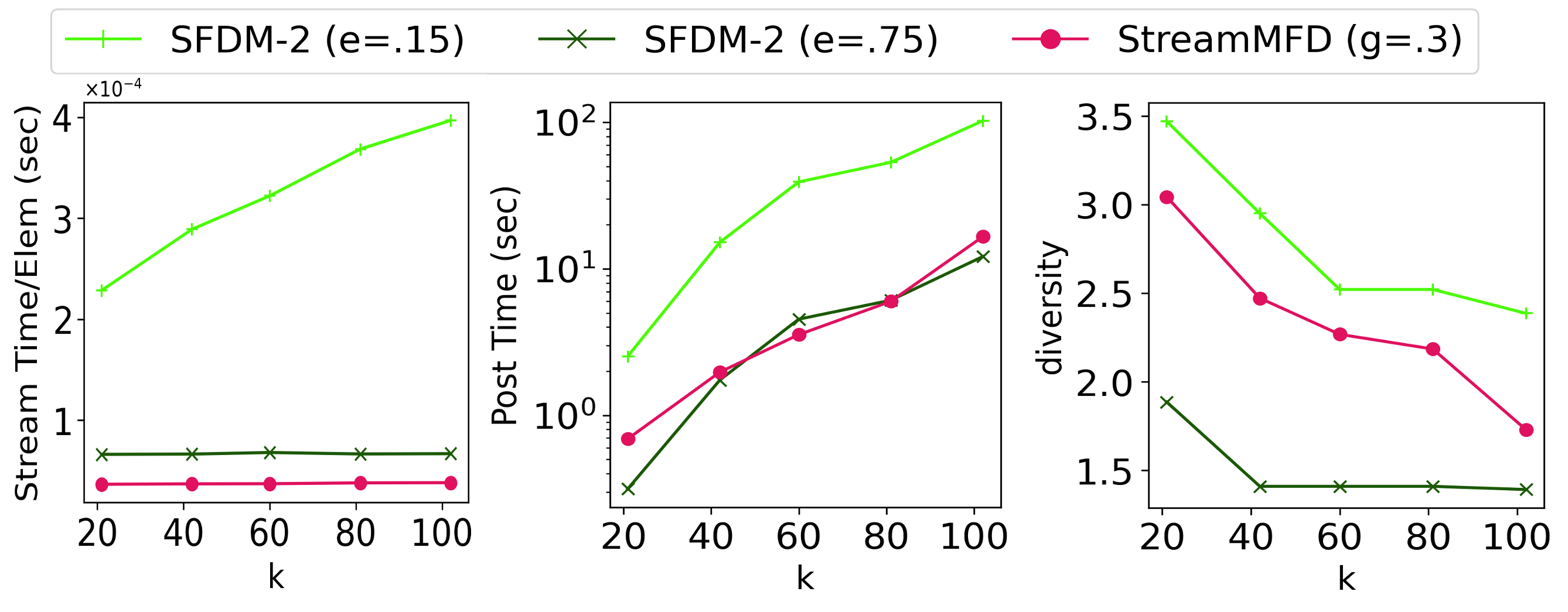}
               \vspace{-1em}\caption{\label{fig:stream}Average update time, post-processing time, and diversity in the streaming setting  for Beer reviews.}
          \end{figure}
      \end{minipage}


\section{Related work}
\label{sec:related}
In~\cite{moumoulidou2021diverse} the authors define the \fairDiv\ problem and propose algorithms for general metric spaces. In particular, when $m=2$, they showed an $O(nk)$ time algorithm with $1/4$-approximation factor using linear space.
For any number of colors $m$, they propose FairFlow that runs in $O(kn+m^2k^2\log k)$ time and returns a $\frac{1}{3m-1}$-approximation. When $k$ is small, they also propose a $1/5$-approximation algorithm with exponential running time with respect to $k$.
In \cite{addanki2022improved} they improved some of the results from the previous paper. For any number of colors $m$, they design an LP-based relaxation algorithm to get a $1/2$-approximation satisfying the fairness constraints in expectation, i.e., the expected number of points from color $c_j$ is at least $k_j$. The running time of the algorithm is $O(n^{\lambda})$ it uses $\Omega(n^2)$ space, where $\lambda$ is the exponent to solve an \textsf{LP}.
The same algorithm is extended to return a $1/6$-approximation such that the number of points of a color $c_j$ is at least ${k_j}/{1-\eps}$ with high probability. The running time and space requirement remains the same.
In the same paper they also propose a greedy algorithm, called Fair-Greedy-Flow, that returns an $\frac{1}{(m+1)(1+\eps)}$-approximation in $O(nkm^3)$ time, for constant $\eps$, skipping $\log n$ factors.

In the Euclidean metric, in \cite{addanki2022improved} they constructed a $(1+\eps)$-coreset of size $O(\eps^{-d}mk)$ for the \fairDiv\ problem in $O(nk)$ time.
Using the coreset, they also design a Fair-Euclidean algorithm that returns a constant approximation in $O(kn+m^{d+2}k\prod_{c_j\in C}k_j^2)$ time. Recently,~\cite{wang2023max} used a coreset construction to propose the FFMD-S algorithm that returns a $\frac{1-\eps}{5}$-approximation in $O(mkn+m^k)$ time for constant $\eps$.

In the streaming setting, the authors in~\cite{wang2022streaming, wang2023fair} presented the SFDM-2 algorithm that stores $O(k\log \Delta)$ items in memory, it takes $O(k\log \Delta)$ time per item for streaming processing, and it requires $O(k^2\log\Delta)$ post-processing time to output a $\frac{1-\eps}{3m+2}$-approximation for the \fairDiv\ problem, for constant $\eps$. In the Euclidean space, the streaming processing time per element can be improved to $O(\log k \cdot \log \Delta)$.
The quantity $\Delta$ is defined as the \emph{spread} of the input set, $\Delta=\frac{\max_{p,q\in P}||p-q||}{\min_{p,q\in P} ||p-q||}=O(2^n)$. In~\cite{addanki2022improved}, they give a streaming algorithm with constant approximation factor, however the space, update time, and post-processing time still depends on $\log\Delta$.

Fairness has been studied under different diversity definitions. In fair Max-Sum diversification the goal is to select a set $S\subseteq P$ such that $S$ contains at least $k_j$ points from color $c_j$, and the sum of pairwise distances is maximized, i.e., $\frac{1}{2}\sum_{p,q\in S}||p-q||$ is maximized.
There are a few efficient constant approximation algorithms for this problem~\cite{abbassi2013diversity, borodin2012max, borodin2017max, ceccarello2018fast, ceccarello2020general, cevallos2017local}.
The objective function for the fair Max-Sum problem is quite different from \fairDiv, so the techniques from these papers cannot be used in our problem\ansc{,~\cite{addanki2022improved, moumoulidou2021diverse, agarwal2020efficient}.}

The Max-Min diversification has a strong connection with the $k$-center clustering. For example, the same greedy Gonzalez~\cite{gonzalez1985clustering} algorithm returns a $\frac{1}{2}$-approximation for the Max-Min diversification~\cite{ravi1994heuristic, tamir1991obnoxious} and $2$-approximation for the $k$-center clustering~\cite{gonzalez1985clustering}.
Kleindessner et al.~\cite{kleindessner2019fair} defined the fair $k$-center problem where the goal is to find $k_j$ centers with color $c_j$ minimizing the $k$-center objective. They proposed a $(3\cdot 2^{m-1}-1)$-approximation algorithm in $O(km^2n+km^4)$ time.
There are many improvements over this algorithm, such as ~\cite{chen2016matroid, chiplunkar2020solve, jones2020fair}.
The analysis of the algorithms for $k$-center is different than the algorithms for the $k$-Max-Min diversification problem. In some cases, an optimum solution for $k$-center can be arbitrary bad for $k$-Max-Min diversification, \ansc{as shown in~\cite{addanki2022improved}. For example, assume there are two blue points with coordinates $0$ and $5-\eps/2$ and two red points with coordinates $5+\eps/2$ and $10$, for an arbitrary small value $\eps>0$. Selecting the blue point with coordinate $5-\eps/2$ and the red point with coordinate $5+\eps/2$ constructs an optimum solution for the fair $2$-center problem. However the diversity is equal to $\eps$, while the optimum diversity for the \fairDiv\ problem is $10\gg\eps$.}
Hence, it is unclear if algorithms for the fair $k$-center problem can be used for the \fairDiv\ problem.



\section{Future work}
There are multiple interesting open problems derived from our work. Is it possible to get a constant approximation for the \fairDiv\ problem satisfying the fairness exactly? The goal is to get at least $k_j$ points instead of at least $\frac{k_j}{1-\eps}$ points, from each color $c_j\in C$. It will also be interesting to check whether the new techniques from this paper can be used to accelerate the other version of the \fairDiv\ problem~\cite{wang2023max} where we have both a lower bound $k_j^-$ and an upper bound $k_j^+$ for each color $c_j\in C$. Finally, it is interesting to study whether other data structures can be used, to get constant approximation in metrics with bounded doubling dimension.

\bibliographystyle{abbrv}
\bibliography{refs}

\begin{thebibliography}{10}

\bibitem{anonymrepo}
\url{https://github.com/UIC-DB-Theory/FairDiversityandClustering}.

\bibitem{beerdata}
Beer review \url{https://snap.stanford.edu/data/web-BeerAdvocate.html}.

\bibitem{url2}
Courts seek to increase jury diversity \url{https://eji.org/report/race-and-the-jury/why-representative-juries-are-necessary/\#chapter-2}.

\bibitem{url1}
Courts seek to increase jury diversity \url{https://www.uscourts.gov/news/2019/05/09/courts-seek-increase-jury-diversity}.

\bibitem{abbassi2013diversity}
Z.~Abbassi, V.~S. Mirrokni, and M.~Thakur.
\newblock Diversity maximization under matroid constraints.
\newblock In {\em Proceedings of the 19th ACM SIGKDD international conference on Knowledge discovery and data mining}, pages 32--40, 2013.

\bibitem{abrahamsen2017range}
M.~Abrahamsen, M.~de~Berg, K.~Buchin, M.~Mehr, and A.~D. Mehrabi.
\newblock Range-clustering queries.
\newblock In {\em 33rd International Symposium on Computational Geometry (SoCG 2017)}. Schloss Dagstuhl-Leibniz-Zentrum fuer Informatik, 2017.

\bibitem{addanki2022improved}
R.~Addanki, A.~McGregor, A.~Meliou, and Z.~Moumoulidou.
\newblock Improved approximation and scalability for fair max-min diversification.
\newblock In {\em 25th International Conference on Database Theory (ICDT 2022)}. Schloss Dagstuhl-Leibniz-Zentrum f{\"u}r Informatik, 2022.

\bibitem{agarwal2002exact}
P.~K. Agarwal and C.~M. Procopiuc.
\newblock Exact and approximation algorithms for clustering.
\newblock {\em Algorithmica}, 33:201--226, 2002.

\bibitem{agarwal2020efficient}
P.~K. Agarwal, S.~Sintos, and A.~Steiger.
\newblock Efficient indexes for diverse top-k range queries.
\newblock In {\em Proceedings of the 39th ACM SIGMOD-SIGACT-SIGAI Symposium on Principles of Database Systems}, pages 213--227, 2020.

\bibitem{arora2012multiplicative}
S.~Arora, E.~Hazan, and S.~Kale.
\newblock The multiplicative weights update method: a meta-algorithm and applications.
\newblock {\em Theory of computing}, 8(1):121--164, 2012.

\bibitem{arya2000approximate}
S.~Arya and D.~M. Mount.
\newblock Approximate range searching.
\newblock {\em Computational Geometry}, 17(3-4):135--152, 2000.

\bibitem{arya1998optimal}
S.~Arya, D.~M. Mount, N.~S. Netanyahu, R.~Silverman, and A.~Y. Wu.
\newblock An optimal algorithm for approximate nearest neighbor searching fixed dimensions.
\newblock {\em Journal of the ACM (JACM)}, 45(6):891--923, 1998.

\bibitem{adult}
B.~Becker and R.~Kohavi.
\newblock {Adult}.
\newblock UCI Machine Learning Repository, 1996.
\newblock {DOI}: https://doi.org/10.24432/C5XW20.

\bibitem{bespamyatnikh1995optimal}
S.~N. Bespamyatnikh.
\newblock An optimal algorithm for closest pair maintenance.
\newblock In {\em Proceedings of the eleventh annual symposium on Computational geometry}, pages 152--161, 1995.

\bibitem{beygelzimer2006cover}
A.~Beygelzimer, S.~Kakade, and J.~Langford.
\newblock Cover trees for nearest neighbor.
\newblock In {\em Proceedings of the 23rd international conference on Machine learning}, pages 97--104, 2006.

\bibitem{borodin2017max}
A.~Borodin, A.~Jain, H.~C. Lee, and Y.~Ye.
\newblock Max-sum diversification, monotone submodular functions, and dynamic updates.
\newblock {\em ACM Transactions on Algorithms (TALG)}, 13(3):1--25, 2017.

\bibitem{borodin2012max}
A.~Borodin, H.~C. Lee, and Y.~Ye.
\newblock Max-sum diversification, monotone submodular functions and dynamic updates.
\newblock In {\em Proceedings of the 31st ACM SIGMOD-SIGACT-SIGAI symposium on Principles of Database Systems}, pages 155--166, 2012.

\bibitem{callahan1995decomposition}
P.~B. Callahan and S.~R. Kosaraju.
\newblock A decomposition of multidimensional point sets with applications to k-nearest-neighbors and n-body potential fields.
\newblock {\em Journal of the ACM (JACM)}, 42(1):67--90, 1995.

\bibitem{ceccarello2018fast}
M.~Ceccarello, A.~Pietracaprina, and G.~Pucci.
\newblock Fast coreset-based diversity maximization under matroid constraints.
\newblock In {\em Proceedings of the Eleventh ACM International Conference on Web Search and Data Mining}, pages 81--89, 2018.

\bibitem{ceccarello2020general}
M.~Ceccarello, A.~Pietracaprina, and G.~Pucci.
\newblock A general coreset-based approach to diversity maximization under matroid constraints.
\newblock {\em ACM Transactions on Knowledge Discovery from Data (TKDD)}, 14(5):1--27, 2020.

\bibitem{cevallos2017local}
A.~Cevallos, F.~Eisenbrand, and R.~Zenklusen.
\newblock Local search for max-sum diversification.
\newblock In {\em Proceedings of the Twenty-Eighth Annual ACM-SIAM Symposium on Discrete Algorithms}, pages 130--142. SIAM, 2017.

\bibitem{chan2020faster}
T.~M. Chan and Q.~He.
\newblock Faster approximation algorithms for geometric set cover.
\newblock In {\em 36th International Symposium on Computational Geometry (SoCG 2020)}, 2020.

\bibitem{charikar1997incremental}
M.~Charikar, C.~Chekuri, T.~Feder, and R.~Motwani.
\newblock Incremental clustering and dynamic information retrieval.
\newblock In {\em Proceedings of the twenty-ninth annual ACM symposium on Theory of computing}, pages 626--635, 1997.

\bibitem{chekuri2020fast}
C.~Chekuri, S.~Har-Peled, and K.~Quanrud.
\newblock Fast lp-based approximations for geometric packing and covering problems.
\newblock In {\em Proceedings of the Fourteenth Annual ACM-SIAM Symposium on Discrete Algorithms}, pages 1019--1038. SIAM, 2020.

\bibitem{chen2016matroid}
D.~Z. Chen, J.~Li, H.~Liang, and H.~Wang.
\newblock Matroid and knapsack center problems.
\newblock {\em Algorithmica}, 75:27--52, 2016.

\bibitem{chiplunkar2020solve}
A.~Chiplunkar, S.~Kale, and S.~N. Ramamoorthy.
\newblock How to solve fair k-center in massive data models.
\newblock In {\em International Conference on Machine Learning}, pages 1877--1886. PMLR, 2020.

\bibitem{clarkson2005improved}
K.~L. Clarkson and K.~Varadarajan.
\newblock Improved approximation algorithms for geometric set cover.
\newblock In {\em Proceedings of the twenty-first annual symposium on Computational geometry}, pages 135--141, 2005.

\bibitem{feder1988optimal}
T.~Feder and D.~Greene.
\newblock Optimal algorithms for approximate clustering.
\newblock In {\em Proceedings of the twentieth annual ACM symposium on Theory of computing}, pages 434--444, 1988.

\bibitem{feldmann2020parameterized}
A.~E. Feldmann and D.~Marx.
\newblock The parameterized hardness of the k-center problem in transportation networks.
\newblock {\em Algorithmica}, 82:1989--2005, 2020.

\bibitem{finkel1974quad}
R.~A. Finkel and J.~L. Bentley.
\newblock Quad trees a data structure for retrieval on composite keys.
\newblock {\em Acta informatica}, 4:1--9, 1974.

\bibitem{fukurai1991cross}
H.~Fukurai, E.~W. Butler, and R.~Krooth.
\newblock Cross-sectional jury representation or systematic jury representation? simple random and cluster sampling strategies in jury selection.
\newblock {\em Journal of Criminal Justice}, 19(1):31--48, 1991.

\bibitem{gau2016jury}
J.~M. Gau.
\newblock A jury of whose peers? the impact of selection procedures on racial composition and the prevalence of majority-white juries.
\newblock {\em Journal of Crime and Justice}, 39(1):75--87, 2016.

\bibitem{gonzalez1985clustering}
T.~F. Gonzalez.
\newblock Clustering to minimize the maximum intercluster distance.
\newblock {\em Theoretical computer science}, 38:293--306, 1985.

\bibitem{har2005fast}
S.~Har-Peled and M.~Mendel.
\newblock Fast construction of nets in low dimensional metrics, and their applications.
\newblock In {\em Proceedings of the twenty-first annual symposium on Computational geometry}, pages 150--158, 2005.

\bibitem{har2015net}
S.~Har-Peled and B.~Raichel.
\newblock Net and prune: A linear time algorithm for euclidean distance problems.
\newblock {\em Journal of the ACM (JACM)}, 62(6):1--35, 2015.

\bibitem{jiang2021faster}
S.~Jiang, Z.~Song, O.~Weinstein, and H.~Zhang.
\newblock A faster algorithm for solving general lps.
\newblock In {\em Proceedings of the 53rd Annual ACM SIGACT Symposium on Theory of Computing}, pages 823--832, 2021.

\bibitem{jones2020fair}
M.~Jones, H.~Nguyen, and T.~Nguyen.
\newblock Fair k-centers via maximum matching.
\newblock In {\em International conference on machine learning}, pages 4940--4949. PMLR, 2020.

\bibitem{diabetes}
M.~Kahn.
\newblock {Diabetes}.
\newblock UCI Machine Learning Repository.
\newblock {DOI}: https://doi.org/10.24432/C5T59G.

\bibitem{kleindessner2019fair}
M.~Kleindessner, P.~Awasthi, and J.~Morgenstern.
\newblock Fair k-center clustering for data summarization.
\newblock In {\em International Conference on Machine Learning}, pages 3448--3457. PMLR, 2019.

\bibitem{census}
M.~Meek, T.~Thiesson, and H.~Heckerman.
\newblock {US Census Data (1990)}.
\newblock UCI Machine Learning Repository.
\newblock {DOI}: https://doi.org/10.24432/C5VP42.

\bibitem{moumoulidou2021diverse}
Z.~Moumoulidou, A.~McGregor, and A.~Meliou.
\newblock Diverse data selection under fairness constraints.
\newblock {\em ICDT 2021}, 2021.

\bibitem{ng2002predicting}
T.~E. Ng and H.~Zhang.
\newblock Predicting internet network distance with coordinates-based approaches.
\newblock In {\em Proceedings. Twenty-First Annual Joint Conference of the IEEE Computer and Communications Societies}, volume~1, pages 170--179. IEEE, 2002.

\bibitem{nguyen2023popsim}
K.~D. Nguyen, N.~Shahbazi, and A.~Asudeh.
\newblock Popsim: An individual-level population simulator for equitable allocation of city resources.
\newblock Algorithmic Fairness in Artificial intelligence, Machine learning and and Decision making (AFair-AMLD23), 2023.

\bibitem{oh2018approximate}
E.~Oh and H.-K. Ahn.
\newblock Approximate range queries for clustering.
\newblock In {\em 34th International Symposium on Computational Geometry (SoCG 2018)}. Schloss Dagstuhl-Leibniz-Zentrum fuer Informatik, 2018.

\bibitem{pope2021intrinsic}
P.~Pope, C.~Zhu, A.~Abdelkader, M.~Goldblum, and T.~Goldstein.
\newblock The intrinsic dimension of images and its impact on learning.
\newblock In {\em International Conference on Learning Representations}, 2020.

\bibitem{ravi1994heuristic}
S.~S. Ravi, D.~J. Rosenkrantz, and G.~K. Tayi.
\newblock Heuristic and special case algorithms for dispersion problems.
\newblock {\em Operations research}, 42(2):299--310, 1994.

\bibitem{vombatkere_github_nodate}
A.~Roberts, K.~Vombatkere, and B.~Suwal.
\newblock {GitHub} - kvombatkere/{CoreSets}-algorithms: Python implementation of coreset algorithms for clustering and streaming.

\bibitem{tamir1991obnoxious}
A.~Tamir.
\newblock Obnoxious facility location on graphs.
\newblock {\em SIAM Journal on Discrete Mathematics}, 4(4):550--567, 1991.

\bibitem{verbeek2014metric}
K.~Verbeek and S.~Suri.
\newblock Metric embedding, hyperbolic space, and social networks.
\newblock In {\em Proceedings of the thirtieth annual symposium on Computational geometry}, pages 501--510, 2014.

\bibitem{wang2022streaming}
Y.~Wang, F.~Fabbri, and M.~Mathioudakis.
\newblock Streaming algorithms for diversity maximization with fairness constraints.
\newblock In {\em 2022 IEEE 38th International Conference on Data Engineering (ICDE)}, pages 41--53. IEEE, 2022.

\bibitem{wang2023fair}
Y.~Wang, F.~Fabbri, M.~Mathioudakis, and J.~Li.
\newblock Fair max--min diversity maximization in streaming and sliding-window models.
\newblock {\em Entropy}, 25(7):1066, 2023.

\bibitem{wang2023max}
Y.~Wang, M.~Mathioudakis, J.~Li, and F.~Fabbri.
\newblock Max-min diversification with fairness constraints: Exact and approximation algorithms.
\newblock In {\em Proceedings of the 2023 SIAM International Conference on Data Mining (SDM)}, pages 91--99. SIAM, 2023.

\bibitem{wang2022pargeo}
Y.~Wang, S.~Yu, L.~Dhulipala, Y.~Gu, and J.~Shun.
\newblock Pargeo: a library for parallel computational geometry.
\newblock In {\em Proceedings of the 27th ACM SIGPLAN Symposium on Principles and Practice of Parallel Programming}, pages 450--452, 2022.

\bibitem{yang2012defining}
J.~Yang and J.~Leskovec.
\newblock Defining and evaluating network communities based on ground-truth.
\newblock In {\em Proceedings of the ACM SIGKDD Workshop on Mining Data Semantics}, pages 1--8, 2012.

\end{thebibliography}

\end{document}